\renewcommand{\bra}[1]{{\langle{#1}\vert}}
\renewcommand{\ket}[1]{{\vert{#1}\rangle}}
\newcommand{\bracket}[2]{\langle #1 \vert #2 \rangle}
\renewcommand{\ketbra}[1]{\vert #1 \rangle \ \!\!\! \langle #1 \vert}
\newcommand{\com}{\mathtt{com}}
\newcommand{\open}{\mathtt{open}}
\newcommand{\ver}{\mathtt{ver}}
\newcommand{\syn}{\mathtt{syn}}
\newcommand{\dec}{\mathtt{dec}}
\newcommand{\bs}{\lbrace 0,1 \rbrace}
\newcommand{\ccom}{\textnormal{com}}
\newcommand{\CCOM}{\textnormal{COM}}
\newcommand{\oopen}{\textnormal{open}}
\newcommand{\OOPEN}{\textnormal{OPEN}}
\newtheorem{theorem}{Theorem}[section]
\newtheorem{lemma}{Lemma}[section]
\newtheorem{proposition}[theorem]{Proposition}
\newtheorem{definition}{Definition}[section]
\title{Performance of Practical Quantum Oblivious Key Distribution}
\author[1,2]{Mariano Lemus}
\author[3]{Peter Schiansky}
\author[5]{Manuel Goul\~ao}
\author[3]{Mathieu Bozzio}
\author[4,5]{David Elkouss}
\author[1,2]{Nikola Paunkovi\'c}
\author[1,2]{Paulo Mateus}
\author[3]{Philip Walther}
\affil[1]{Instituto de Telecomunica\c{c}\~oes, 1049-001 Lisbon, Portugal}
\affil[2]{Departamento de Matem\'{a}tica, Instituto Superior T\'{e}cnico, Universidade de Lisboa, Av. Rovisco Pais 1, 1049-001 Lisboa, Portugal}
\affil[3]{Vienna Center for Quantum Science and Technology (VCQ), Faculty of Physics, University of Vienna, Boltzmanngasse 5, Vienna A-1090, Austria}
\affil[4]{QuTech, Delft University of Technology, Lorentzweg 1, 2628 CJ Delft, Netherlands}
\affil[5]{Networked Quantum Devices Unit, Okinawa Institute of Science and Technology Graduate University, Okinawa, Japan}
\date{\today}
\begin{document}
\maketitle

\begin{abstract}

Motivated by the applications of secure multiparty computation as a privacy-protecting data analysis tool, and identifying oblivious transfer as one of its main practical enablers, we propose a practical realization of randomized quantum oblivious transfer. By using only symmetric cryptography primitives to implement commitments, we construct computationally-secure randomized oblivious transfer without the need for public-key cryptography or assumptions imposing limitations on the adversarial devices. We show that the protocol is secure under an indistinguishability-based notion of security and demonstrate an experimental implementation to test its real-world performance. Its security and performance are then compared to both quantum and classical alternatives, showing potential advantages over existing solutions based on the noisy storage model and public-key cryptography.

\end{abstract}

\section{Introduction}
\label{sec:introduction}

Cryptography is a critical tool for data privacy, a task deeply rooted in the functioning of today's digitalized world. Whether it is in terms of secure communication over the Internet or secure data access through authentication, finding ways of protecting sensitive data is of utmost importance. The one-time pad encryption scheme allows communication with perfect secrecy~\cite{shannon49}, at the cost of requiring the exchange of single-use secret (random) keys of the size of the communicated messages. Distribution of secret keys, therefore, is considered one of the most important tasks in cryptography. Modern cryptography relies heavily on conjectures about the computational hardness of certain mathematical problems to design solutions for the key distribution problem. However, as quantum computers threaten to make most of the currently used cryptography techniques obsolete~\cite{shor94}, better solutions for data protection are needed. This transition towards quantum-resistant solutions becomes particularly crucial when it comes to protecting data associated with the government, finance and health sectors, being already susceptible to \textit{intercept-now-decrypt-later} attacks.
Cryptography solutions secure in a post-quantum world, where large-scale quantum computers will be commercially available, have been explored in two directions. Classical cryptography based solutions, also referred as post-quantum cryptography~\cite{lyubashevsky2013, regev2009, bernstein2017}, involve using a family of mathematical problems that are conjectured to be resilient to quantum computing attacks. On the other hand, quantum cryptography based solutions~\cite{pirandola2020} using the laws of quantum mechanics can offer information-theoretic security, depending on the physical properties of quantum systems rather than computational hardness assumptions. Quantum Key Distribution (QKD)~\cite{wiesner83} is the most well-studied and developed of these quantum solutions, while other works beyond QKD have been proposed~\cite{broadbent2016}.

It is noteworthy that secure communication is not the only cryptographic task where end-users' private data may be exposed to an adversary. Cryptography beyond secure communication and key distribution includes zero-knowledge proofs, secret sharing, contract signing, bit commitment (BC), e-Voting, secure data mining, etc.~\cite{lindell2005}. A huge class of such problems can be cast as Multi-Party Computation (MPC), where distrustful parties can benefit from a joint collaborative computation on their private inputs. It requires parties' individual inputs to remain hidden from each other during the computation, among other security guarantees such as correctness, fairness, etc.~\cite{lindell2020}. Secure MPC is a powerful cryptographic tool with a vast range of applications as it allows collaborative work with private data. Generic MPC protocols work by expressing the function to evaluate as an arithmetic or Boolean circuit and then securely evaluating the individual gates. These protocols are based on one of two main fundamental primitives\cite{zhao19,yao82, goldreich2019, damgard12}: \emph{Oblivious Transfer} (OT) and \emph{Homomorphic encryption}, the former of which is the focus of this work.

A 1-out-of-2 OT~\cite{even85}, is the task of sending two messages, such that the receiver can choose only one message to receive, while the sender remains oblivious to this choice. The original protocol, now called all-or-nothing OT, was proposed by Rabin in 1981~\cite{rabin2005}, where a single message is sent and the receiver obtains it with 1/2 probability. The two flavours of OT were later shown to be equivalent~\cite{crepeau88}. Notably, it has been shown that it is possible to implement secure MPC using only OT as a building block~\cite{yao86, kilian88}. Relevant to our work is a variation of OT called Random Oblivious Transfer (ROT), which is similar to 1-out-of-2 OT, except that both the sent messages and the receiver's choice are randomly chosen during the execution of the protocol. This can be seen as analogous to the key distribution task, in which both parties receive a random message (the key) as output. By appropriately encrypting messages using the outputs of a ROT protocol as a shared resource, it is possible to efficiently perform 1-out-of-2 OT. As an important consequence, parties expecting to engage in MPC in the future can execute many instances of ROT in advance and save the respective outputs as keys to be later used as a resource to perform fast OTs during an MPC protocol~\cite{lemus2020}. Because of this, we can think of ROT as a basic primitive for secure MPC. 

In the context of quantum cryptography, OT is remarkable because, unlike classically, there exists a reduction from OT to commitment schemes~\cite{bennett91}. This result is somewhat undermined by the existence of several theorems regarding the impossibility of unconditionally secure commitments both in classical~\cite{canetti01} and quantum~\cite{lo1997, mayers1997} cryptography, and it was further proven impossible in the more general abstract cryptography framework~\cite{Maurer11}. These results, in turn, imply that unconditionally secure OT itself is impossible. In light of this, approaches with different technological or physical constraints on the adversarial power have been proposed. Practical solutions based on hardware limitations, such as bounded and noisy storage~\cite{damgaard2008, erven2014, liu2014, furrer2018}, have the disadvantage that the performance of such protocols decreases as technology improves.

Computationally-secure classical protocols have also been proposed~\cite{mansy2019, mi2018, mi2019, branco2021}, which work under the assumptions of post-quantum public-key cryptography. Alternatively, we can take advantage of quantum reduction from OT to commitments by implementing commitment schemes using (non-trapdoor) one-way functions (OWF) such as Hash functions~\cite{halevi96} and pseudo-random generators~\cite{naor91} which allows us to construct OT from symmetric cryptography primitives. The existence of general OWFs is a weaker assumption than public-key cryptography~\cite{impagliazzo95, barak17}, which requires the existence of the more restrictive \textit{trapdoor} OWFs. This difference is significant, as the latter are defined over mathematically rich structures, such as elliptic curves and lattices, and the computational hardness of the associated problems is less understood than that of their private-key counterparts. For an in-depth study of the relation between OT and OWFs see~\cite{grilo21}.

Having established that there is a theoretical merit in using computationally-secure quantum protocols to implement secure MPC, it is also important to understand how practical quantum protocols compare with currently used classical solutions in security, computational and communication complexity, and practical speed in current setups. This work focuses in studying the performance of a practical quantum ROT protocol and its potential advantages compared to currently used classical solutions for OT during MPC.
 

The idea of using quantum conjugate coding and commitments for oblivious transfer was originally proposed by Cr{\'e}peau and Kilian~\cite{crepeau88} and then refined by Bennet et al, in~\cite{bennett91} with the BBCS92 protocol (shown in Fig.~\ref{fig:simplifiedprotocol}). This construction has been extensively studied from the point of view of its theoretical security~\cite{yao95, damgard09, unruh10, grilo21, bartusek21, santos22}. However, while practical security analyses and experimental implementations have been made for quantum OT in the noisy storage model~\cite{liu2014,furrer2018}, there are no works analyzing the quantum resource requirements and the resulting performance of implementing the BBCS92 protocol using existing computationally-secure commitment schemes based on OWFs. Such analyses are needed to demonstrate secure experimental implementations, and provide an important step in bringing quantum OT to real-world usage.

\begin{figure}
\noindent
\framebox{\parbox{\dimexpr\linewidth-2\fboxsep-2\fboxrule}{\small
  {\centering \textbf{BBCS92 Quantum OT protocol} \par}
  \textbf{Parties:} The sender Alice and the receiver Bob. \\
  \begin{enumerate}[leftmargin=0.6cm]
      \item Alice prepares $N$ entangled states of the form $\frac{1}{\sqrt{2}}(\ket{00} + \ket{11})$ and, for each state prepared, sends one of the qubits to Bob.
      \item Alice randomly chooses a measurement bases string $\theta^A \in \lbrace +, \times \rbrace^{N}$ and, for each $i = 1,\ldots,N$ measures her share of the $i$-th entangled state in the $\theta_i$ basis to obtain outcome $x^{A}_i$ and the outcome string $x^{A}= (x^A_1,\ldots, x^A_N)$.
      \item Bob uses the same process to obtain the measurement bases and outcome strings $\theta^B$ and $x^B$, respectively.
      \item For each $i$, Bob commits $(\theta^B_i,x^B_i)$ to Alice.
      \item Alice chooses randomly a set of indices $T \subset \lbrace 1,\ldots, N \rbrace$ of some fixed size and sends $T$ to Bob.
      \item For each $j \in T$, Bob opens the commitments associated to $(\theta^B_j,x^B_j)$.
      \item Alice checks that $x^A_j = x^B_j$ whenever $\theta^A_j = \theta^B_j$ within the test set. If the test fails Alice aborts the protocol, otherwise she sends the string $\theta^A$ to Bob.
      \item Bob separates the remaining indices in two sets: $I_0$ - the indices where Bob's measurement bases match Alice's, and $I_1$ - the set of indices where their bases do not match. Then, he samples randomly $c$ and sends the ordered pair $(I_c, I_{\bar{c}})$ to Alice.
      \item Alice defines the strings $\mathbf{x}^A_c, \mathbf{x}^A_{\bar{c}}$ using the indices in the respective sets $(I_c, I_{\bar{c}})$. Then, she samples randomly a function $f$ from a universal hash family, sends $f$ to Bob and outputs $m_c = f(\mathbf{x}_0)$ and $m_1= f(\mathbf{x}_{\bar{c}})$ to Bob.
      \item Similarly, Bob defines the string  $\mathbf{x}^B$ from the set $I_0$ and outputs $m_c = f(\mathbf{x}^B)$ and $c$.
    \end{enumerate}
  }}
  \caption{Quantum oblivious transfer protocol based on commitments}
  \label{fig:simplifiedprotocol}
\end{figure}

Motivated by practical considerations, we consider Naor-style statistically binding and computationally hiding commitments, as these are well understood and efficient to implement (note that stronger commitments can be considered, such as the quantum-based commitments studied in~\cite{grilo21, bartusek21}, however, implementing those requires significantly more computational and quantum resources).

The contributions of this work can be summarized as follows: 

We introduce the definition for a quantum ROT protocol, satisfying a strong indistinguishability-based security notion equivalent to the one presented in~\cite{konig12}, which generalizes the security of classical ROT protocols. We present a protocol that realizes said quantum ROT based on the BBCS construction. The protocol uses a weakly-interactive string commitment scheme which is statistically binding and computationally hiding, and can be implemented in practice using current QKD setups.

We present a formal finite-key security proof of the proposed protocol accounting for noisy quantum channels assuming only the existence of quantum-secure OWFs, together with security bounds as functions of the protocol's parameters. We also present calculations for the maximum usable channel error, as well as for the key rate as a function of the number of shared signals per instance of the protocol. Additionally, we study the composability properties of said protocol. In particular, we show that there is a family of weakly-interactive commitments which, when used in the quantum OT protocol, result in universally composable quantum OT in the classical access random oracle model. We experimentally demonstrate our protocol using current technology with a setup based on polarization-entangled photons. We also present a security analysis which accounts for potential implementation-specific attacks and how they can be circumvented using an appropriate reporting strategy. Finally, we compare our performance results with the performance of current ROT solutions and point out the advantages and disadvantages of using quantum ROT in the context of MPC.

\section{Quantum Random Oblivious Transfer (ROT)}
\label{sec:basicdefinitions}

In this work, the concept of indistinguishability will be often used to compare the state of systems in a ``real'' run of the protocol versus another ``ideal'' desired state. These relations are defined over families of quantum states parametrized by the security parameter of the respective protocol. Hence, indistinguishability relations are statements on the asymptotic behavior of the protocol as the security parameter is increased. For formal definitions of both statistical and computational indistinguishability see Appendix~\ref{sec:preliminaries}.

When talking about two indistinguishable families $\lbrace \rho_1^{(k)} \rbrace$ and $\lbrace \rho_2^{(k)} \rbrace$, if the parameter $k$ is implicit, we will just refer to them as $\rho_1$ and $\rho_2$ and use the following notation to denote indistinguishability:
\begin{align*}
    \rho_1 \approx \rho_2 &\quad \textnormal{for statistically indistinguishable}; \\
    \rho_1 \approx^{(c)} \rho_2 &\quad \textnormal{for computationally indistinguishable}.
\end{align*}

Additionally, in this work we consider protocols that can abort if certain conditions are satisfied. Mathematically, it is useful to consider the state of the aborted protocol as the zero operator. This means that events that trigger the protocol to abort are described as \textit{trace-decreasing} operations, and hence, the operator representing the associated system at the end of the protocol is, in general, not normalized. The probability of the protocol finishing successfully is given then by the trace of the final state of the output registers. Note that the above definitions of indistinguishability can be naturally extended to non-normalized operators since the outcomes of a quantum program can always be represented by the outcomes of a POVM $\lbrace F_i \rbrace$, whose probabilities are given by $\Tr[F_i \rho]$, which is a well defined quantity even for non-normalized $\rho$. 

\begin{definition} (Quantum Random Oblivious Transfer)\\
\label{def:ROT}
An $n$-bit Quantum Random Oblivious Transfer with security parameter $k$ is a protocol, without external inputs, between two parties \textit{S} (the sender) and \textit{R} (the receiver) which, upon finishing, outputs the joint quantum state $\rho_{M_0,M_1,C,M_C}$ satisfying: 
\begin{enumerate}
    \item (Correctness) The final state of the outputs when the protocol is run with both honest parties satisfies
    \begin{align}
        \label{eq:qot_correctness}
        \rho_{M_0,M_1,C,M_C} \approx \frac{p_{\textnormal{succ}}}{2^{(2n+1)}}
        \!\!\!\!\!\!\!\!\!
        \sum\limits_{\substack{m_0,m_1 \in \bs^{n} \\ c \in \bs}}
        \!\!\!\!\!\!\!\!\!
        \Big( \ketbra{m_0}_{M_0} \ketbra{m_1}_{M_1} \ketbra{c}_{C} \ketbra{m_c}_{M_C} \Big),
    \end{align}
    where $p_{\textnormal{succ}} = \textnormal{Tr}[\rho_{M_0,M_1,C,M_C}]$ is the probability of the protocol finishing successfully.
    \item (Security against dishonest sender) Let $H_S$ be the Hilbert space associated to all of the sender's memory registers. For the final state after running the protocol with an honest receiver it holds that
    \begin{equation}
        \label{eq:uncertain_b}
        \rho_{S,C} \approx  \rho_{S} \otimes \mathsf{U}_{C}.
    \end{equation}
    \item (Security against dishonest receiver) Let $H_R$ be the Hilbert space associated to all of the receiver's memory registers. For the final state after running the protocol with an honest sender, there exists a binary probability distribution given by $(p_0, p_1)$ such that
    \begin{equation}
        \label{eq:uncertain_m}
        \rho_{R,M_0,M_1} \approx  \sum_{b} \left(p_b \, \rho_{R,M_{\bar{b}}}^b \otimes  \mathsf{U}_{M_{b}}\right).
    \end{equation}
\end{enumerate}
The above properties define statistical security for each feature of the ROT protocol. If any of them holds for the case of a dishonest party being limited to efficient quantum operations and the notion of computational indistinguishability $\approx^{(c)}$ instead, we say that the ROT protocol is computationally secure in the respective sense. 
\end{definition}

We expect the outputs $m_0,m_1,c$ to be uniformly distributed and the receiver always obtaining the correct corresponding $m_c$. The first property is typically called \textit{correctness} and it states that, when both parties follow the protocol, the probability of it not aborting \textit{and} having incorrect outputs is neglible in the security parameter. The probability $p_{\textnormal{succ}}$ of the protocol finishing appears explicitly in this expression as the success of quantum protocols often depends on external conditions, most notably the noise in the quantum communication channels. For any specific value of $p_{\textnormal{succ}}$ and any $\varepsilon^{r} \leq 1 - p_{\textnormal{succ}}$ we say that, under the associated external conditions, the protocol is $\varepsilon^{(r)}$\textit{-robust}.

The second property, called \textit{security against dishonest sender}, states that regardless of how much the sender deviates from the protocol, their final quantum state (which includes all the information accessible to them) is uncorrelated to the uniformly distributed value of the receiver's choice bit $c$. Analogously, the third property, called \textit{security against dishonest receiver}, states that even for a receiver running an arbitrary program, by the end of the protocol there is always at least one of the two strings $m_0,m_1$ that is completely unknown to them (denoted by $m_b$).

\subsection{Additional schemes}

In this section, we define the subroutines used inside of our main protocol. We start by defining a weakly-interactive commitment scheme, which gets its name from the fact that the verifier publishes a single random message at the start, which defines the operations that the committer performs. 

\begin{definition} (String commitment scheme)\\
\label{def:sc}
Let $k, n \in \mathbb{N}$. A weakly-interactive $n$-bit string commitment scheme with security parameter $k$ is a family of efficient (in $n$, as well as in $k$) programs $\com, \open, \ver$

\begin{equation}
\begin{split}
    \com &: \bs^n \times \bs^{n_s(k)} \times \bs^{n_r(k)}\rightarrow \bs^{n_c(k)}; \\
    \open &: \bs^n \times \bs^{n_s(k)} \rightarrow \bs^{n_o(k)}; \\
    \ver &: \bs^{n_c(k)} \times \bs^{n_o(k)} \times \bs^{n_r(k)}\rightarrow \bs^n \cup \lbrace \bot \rbrace,
\end{split}
\end{equation}

such that
\begin{enumerate}
    \item (correctness) $\ver\big(\com(m,s,r), \open(m,s), r\big) = m$ for all $m \in \bs^n$, $s \in \bs^{n_s}$, and $r \in \bs^{n_r}$.
    \item (hiding property) For all $m_1, m_2 \in \bs^n$ and $r \in \bs^{n_r}$ the distributions for $\com(m_1,s_1,r)$ and $\com(m_2, s_2, r)$ are computationally (or statistically) indistinguishable in $k$ whenever $s_1, s_2$ are sampled uniformly.
    \item (binding property) For uniformly sampled $r$, the probability $\varepsilon_{\textnormal{bind}}(k)$ that there exists a tuple $(\ccom, \oopen_1, \oopen_2)$ such that $\ver(\ccom, \oopen_{1/2}, r) \neq \bot$ and
    \begin{align}
        \label{eq:commitmentbinding}
        \ver(\ccom, \oopen_1, r) \neq \ver(\ccom, \oopen_2, r), 
    \end{align}
    is negligible in $k$.
\end{enumerate}
\end{definition}

Weakly-interactive string commitment schemes can be implemented using common cryptographic primitives like hash functions or pseudo-random generators. Most notably, Naor's commitment protocol~\cite{naor91} provides a black box construction of weakly-interactive commitments from OWFs.

\begin{definition} (Verifiable information reconciliation scheme)\\
\label{def:vir}
Let $\mathcal{C} \subseteq \lbrace 0,1 \rbrace^{n} \times \lbrace 0,1 \rbrace^{n}$. A verifiable one-way Information Reconciliation (IR) scheme with security parameter $k$ and leak $\ell$ for $\mathcal{C}$ is a pair of efficient programs $(\syn, \dec)$ with
\begin{equation}
\begin{split}
    \syn&: \lbrace 0,1 \rbrace^{n} \rightarrow \lbrace 0,1 \rbrace^{\ell},\\
    \dec&:\lbrace 0,1 \rbrace^{\ell} \times \lbrace 0,1 \rbrace^{n} \rightarrow \lbrace 0,1 \rbrace^{n} \cup \lbrace \bot \rbrace,
\end{split}
\end{equation}
such that, 
\begin{enumerate}
    \item (correctness) Whenever $(x,y) \in \mathcal{C}$ it holds that $\dec(\syn(x),y)=x$ except with negligible probability in~$k$.
    \item (verifiability) For any $(x,y) \in \lbrace 0,1 \rbrace^{n} \times \lbrace 0,1 \rbrace^{n}$ it holds that either $\dec(\syn(x),y)=x$ or $\dec(\syn(x),y)=\bot$, except with negligible probability $\varepsilon_{\textnormal{IR}}(k)$.
\end{enumerate}
\end{definition}

Due to Shannon's Noisy-channel coding theorem, the size of the leak $\ell$ for any IR scheme over a discrete memoryless channel is lower bounded by $h(p)$, where $p$ represents the bit-error probability, and $h(\cdot)$ denotes the binary entropy function. For concrete IR schemes, we can usually describe their efficiency using the ratio between the scheme's leak and the theoretical optimal: $f = \frac{\ell}{h(p)}$.

\section{The protocol}
\label{sec:protocol}
In this section we present the protocol $\pi_{\textnormal{QROT}}$ for an $n$-bit quantum ROT based on the primitives described in the previous section and the use of quantum communication. The protocol's main security parameter is $N_0$, which corresponds to the number of quantum signals sent during the quantum phase. Additionally, it has two secondary security parameters $k,k'$, which define the security of the underlying commitment and IR schemes, respectively. 

In order to facilitate the finite-key security analysis, the description of $\pi_{\textnormal{QROT}}$ features two statistical tolerance parameters, denoted as $\delta_1, \delta_2 $. The role of $\delta_1$ is to account for the error in the estimation of the Qubit Error Rate (QBER), while the role of $\delta_2$ is to account for the small variations in the frequency of outcomes of $50/50$ events. These parameters can be ignored (set to zero) when considering very large values of $N_0$. 

In the following description of the protocol we use the common conjugate coding notation used in BB84-based protocols. The bit values $0,1$ the denote the computational and Hadamard bases for qubit Hilbert spaces, respectively. For added clarity, we use the superscripts $A$ and $B$ to respectively denote Alice and Bob. Additionally, we use variable $x$ to denote measurement outcomes and $\theta$ to denote measurement bases (e.g. the pair $(\theta^A_i, x^A_i)$ denotes that Alice measured her $i$-th subsystem in the $\theta^A_i$ basis and obtained $x^A_i$ as the outcome). We use $\ket{\Phi^+}$ to denote the Bell state $\frac{1}{\sqrt{2}}(\ket{00} + \ket{11})$. Finally, we will use the relative (or normalized) Hamming weight function $r_H:\bs^{n} \rightarrow [0,1]$ defined for any $x=(x_1, \ldots, x_n)$ as
\begin{equation}
          r_{\textnormal{H}}(x)=\frac{1}{n}\sum_{i=1}^{n}x_i.
\end{equation}

\noindent
\textbf{Parameters:}
\begin{itemize}
    \item Parameter estimation sample ratio $0 < \alpha < 1$
    \item Statistical tolerance parameters $\delta_1, \delta_2 $
    \item Maximum qubit error rate $0 \leq p_{\text{max}} \leq 1/2$
    \item Coincidence block size $N_{0} \in \mathbb{N}$, test set size $N_{\textnormal{test}} = \alpha N_{0}$, minimum check set size $N_{\textnormal{check}} = (\frac{1}{2}- \delta_2) \alpha N_{0}$, and raw string block size $N_{\textnormal{raw}} = (\frac{1}{2}- \delta_2) (1 - \alpha) N_{0}$
    \item Weakly-interactive 2-bit string commitment scheme $(\com, \open, \ver)$, which is computationally hiding and statistically binding, with security parameter $k \in \mathbb{N}$ and associated string lengths $n_s, n_r, n_c, n_o$ 
    \item Verifiable one-way information reconciliation scheme $(\syn, \dec)$ on the set $\mathcal{C} = \lbrace (x,y) \in \bs^{N_{\text{raw}}} \times \bs^{N_{\text{raw}}} : r_{\textnormal{H}}(x \oplus y)  < p_{\text{max}} + \delta_1 \rbrace $, with security parameter $k' \in \mathbb{N}$ and leak $\ell = f \cdot h(p_{\text{max}} + \delta_1)$
    \item Universal hash family $\textnormal{\textbf{F}}= \big\{ f_{i}: \bs^{N_{\text{raw}}} \rightarrow \bs^{n} \big\}_i$
\end{itemize}
\textbf{Parties:} The sender Alice and the receiver Bob. \\
\textbf{Protocol steps:} \\
\textit{Quantum phase}
\begin{enumerate}
      \item Alice generates the state $\bigotimes_{i=1}^{N_0} \ket{\Phi^+}_i$ and sends one qubit of each entangled pair to Bob through a (potentially noisy) quantum channel. Then she samples the string $\theta^{A}\in \bs^{N_0}$ and for each $i \in I = \lbrace 1,\ldots,N_0 \rbrace$ performs a measurement in the basis $\theta^{A}_i$ on her qubit of $\ket{\Phi^+}_i$ to obtain the outcome string $x^{A}$.
      \item Bob samples the string $\theta^{B}\in \bs^{N_0}$ and for each $i \in I$ performs a measurement in the basis $\theta^{B}_i$ on his qubit of $\ket{\Phi^+}_i$ to obtain the outcome string $x^{B}$.
\end{enumerate}
\textit{Commit/open phase} 
\begin{enumerate}
\setcounter{enumi}{2}
      \item Alice uniformly samples the string $r \in \bs^{n_r}$ and sends it to Bob.
      \item For each $i \in I$, Bob samples a random string $s_i \in \bs^{n_s}$, computes
      \begin{align}
            (\ccom_i, \oopen_i) = 
            \Big(&\com\big((\theta^{B}_i,x^{B}_i),s_i, r \big), \nonumber \\ &\open\big((\theta^{B}_i,x^{B}_i),s_i\big)\Big),
      \end{align}
      and sends the string $\ccom = (\ccom_i)$ to Alice.
      \item Alice randomly chooses a subset test $I_{t} \subset I$ of size $\alpha N_0$ and sends $I_{t}$ to Bob.
      \item For each $j \in I_{t}$, Bob sends $\oopen_j$ to Alice.
      \item For each $j \in I_{t}$, Alice checks that $\ver(\ccom_j, \oopen_j, r) \neq \bot$. If so, she sets $(\tilde{\theta}^{B}_j, \tilde{x}^{B}_j) = \ver(\ccom_j, \oopen_j, r)$.  Then, Alice computes the set $I_{s}=\lbrace j \in I_{t}  \vert \theta^{A}_j = \tilde{\theta}^{B}_j \rbrace$ and the quantity
      \begin{equation}
          p=r_{\textnormal{H}}\left( x^{A}_{I_{s}} \oplus \tilde{x}^{B}_{I_{s}} \right), 
      \end{equation}
      and checks that $|I_{s}| \geq N_{\textnormal{check}}$ and $p \leq p_{\text{max}}$. If any of the checks fail Alice aborts the protocol.
\end{enumerate}
\textit{String separation phase}
\begin{enumerate}
\setcounter{enumi}{7}
      \item Alice sends $\theta^{A}_{\Bar{I}_t}$ to Bob.
      \item Bob constructs the set $I_0$ by randomly selecting $N_{\text{raw}}$ indices $i \in \Bar{I}_t$ for which $\theta^{A}_i = \theta^{B}_i$. Similarly, he constructs $I_1$ by randomly selecting $N_{\text{raw}}$ indices $i \in \Bar{I}_t$ for which $\theta^{A}_i \neq \theta^{B}_i$. He then samples a random bit $c$ and sends the ordered pair $(I_c, I_{\Bar{c}})$ to Alice. If Bob is not able to construct $I_0$ or $I_1$, he aborts the protocol.
\end{enumerate}
\textit{Post processing phase}
\begin{enumerate}
\setcounter{enumi}{9}
      \item Alice computes the strings $\left(\syn(x^{A}_{I_{c}}), \syn(x^{A}_{I_{\Bar{c}}})\right)$ and sends the result to Bob.
      \item Bob computes $\dec\left(x^{B}_{I_{0}}, \syn(x^{A}_{I_{0}})\right) = y^{B}$. If $y^{B} = \bot$ Bob aborts the protocol. 
      \item Alice randomly samples $f \in \textnormal{\textbf{F}}$, computes $m^{A}_0 = f(x^{A}_{I_{c}})$ and $m^{A}_1 = f(x^{A}_{I_{\Bar{c}}})$, sends the description of $f$ to Bob and outputs $(m^{A}_0, m^{A}_1)$.
      \item Bob computes $m^{B}=f(y^B)$ and outputs $(m^{B},c)$.
\end{enumerate}

\subsection{Security and performance of the main protocol}
We start by stating the main theorem regarding security of the proposed $\pi_{\textnormal{QROT}}$ protocol.

\begin{theorem} (Security of $\pi_{\textnormal{QROT}}$)\\
\label{thm:OKDsecurity}
The protocol $\pi_{\textnormal{QROT}}$ is a statistically correct, computationally secure against dishonest sender, and statistically secure against dishonest receiver $n$-bit ROT protocol.
\end{theorem}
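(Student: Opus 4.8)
The plan is to verify each of the three properties of Definition~\ref{def:ROT} separately, reducing each one to a combination of (i) the security of the underlying weakly-interactive commitment scheme (Definition~\ref{def:sc}), (ii) the verifiability/correctness of the IR scheme (Definition~\ref{def:vir}), (iii) the privacy amplification properties of the universal hash family, and (iv) standard finite-statistics/sampling bounds governed by $\delta_1,\delta_2,\alpha,N_0$. Throughout, the abort events are modeled as trace-decreasing maps, so $p_{\textnormal{succ}}=\Tr[\rho_{M_0,M_1,C,M_C}]$ and all indistinguishability claims are understood for the (possibly subnormalized) output states.

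First I would establish \emph{statistical correctness}. Run both parties honestly: after the quantum phase, for every index $i$ where $\theta^A_i=\theta^B_i$ we have $x^A_i=x^B_i$ except on channel-noise events, whose frequency concentrates around the true QBER $p$; a Hoeffding/Serfling bound over the random test set $I_t$ shows that, conditioned on $p\le p_{\max}$, the empirical error on $I_s$ exceeds $p_{\max}$ only with probability negligible in $N_0$ (this is where $\delta_1$ enters), and similarly $|I_s|\ge N_{\textnormal{check}}$ and the construction of $I_0,I_1$ of size $N_{\textnormal{raw}}$ succeed except with probability exponentially small in $\delta_2^2 N_0$. On the surviving branch, $(x^A_{I_0},x^B_{I_0})\in\mathcal{C}$, so by IR correctness $\dec$ returns $y^B=x^A_{I_0}=x^A_{I_c}$ (using that $I_0$ is whichever of $I_c,I_{\bar c}$ was labeled by $c$), hence $m^B=f(y^B)=f(x^A_{I_c})=m^A_c$; commitment correctness guarantees no spurious abort in step~7. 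Uniformity of $(m_0,m_1,c)$ follows because $c$ is sampled uniformly by Bob and, conditioned on all public transcript data, $x^A_{I_c},x^A_{I_{\bar c}}$ are near-uniform so the hashed outputs are statistically close to uniform by the Leftover Hash Lemma; assembling these, the honest output state is $\varepsilon$-close to the target in~\eqref{eq:qot_correctness} with $\varepsilon$ negligible in $N_0$ (plus the negligible IR- and commitment-correctness errors in $k',k$).

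Next, \emph{computational security against a dishonest sender}: here Bob is honest, and I must show $\rho_{S,C}\approx^{(c)}\rho_S\otimes\mathsf{U}_C$. The only place $c$ enters the transcript is the ordered pair $(I_c,I_{\bar c})$ that Bob sends in step~9; since honest Bob's bases $\theta^B$ are uniform and independent of everything Alice has produced, the unordered partition $\{I_0,I_1\}$ is determined by the transcript but \emph{which} is labeled $I_0$ (equivalently the value of $c$) is uniform and independent of Alice's view — \emph{provided} Alice cannot have learned $\theta^B$ earlier. The commitments $\ccom_i$ carry $(\theta^B_i,x^B_i)$, so this is exactly where the computational hiding property of the commitment scheme is used: a reduction shows that any efficient Alice distinguishing $c=0$ from $c=1$ breaks hiding. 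One must be careful that the test set $I_t$ and the openings revealed in step~6 only expose $\theta^B_j$ for $j\in I_t$, while the partition uses indices in $\bar I_t$, so no opened commitment leaks relevant information; the residual bias is bounded by $\varepsilon_{\textnormal{bind}}$-type and sampling terms that are negligible. This step yields only \emph{computational} security because hiding is only computational.

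Finally, \emph{statistical security against a dishonest receiver}: Alice is honest, Bob runs an arbitrary (even unbounded) strategy, and I must exhibit $(p_0,p_1)$ with $\rho_{R,M_0,M_1}\approx\sum_b p_b\,\rho^b_{R,M_{\bar b}}\otimes\mathsf{U}_{M_b}$. This is the heart of the proof and the main obstacle. The strategy is the standard BBCS argument adapted to the finite-key, commitment-based setting: (1) purify Bob's attack and use the statistically binding property to argue that, at the moment Bob sends $\ccom$, he is committed to a fixed classical string $(\tilde\theta^B,\tilde x^B)$ except with probability $\varepsilon_{\textnormal{bind}}$; this lets us treat Bob's stored quantum side-information about the $\bar I_t$ positions as independent of the basis choice $\theta^A_{\bar I_t}$ that Alice announces in step~8. (2) Passing the QBER test forces $\tilde x^B$ to agree with $x^A$ on matched bases on $I_s$ up to error $p_{\max}$, and a sampling argument (using the random choice of $I_t$) transfers this to a guarantee that on the complement the fraction of positions where Bob's committed value is "wrong for the mismatched basis" is large; equivalently, by an entropic uncertainty relation / min-entropy-splitting argument (in the spirit of Damgård–Fehr–Renner–Schaffner), Bob's smooth min-entropy about at least one of the two raw strings $x^A_{I_0},x^A_{I_1}$, conditioned on his entire quantum register, the announced bases, and the syndrome $\syn(x^A_{I_{\bar c}})$ (whose length $\ell$ is subtracted via the chain rule), is at least $N_{\textnormal{raw}}(1-h(p_{\max}+\delta_1)-o(1)) - \ell - O(\log 1/\varepsilon)$. (3) Choosing the hash output length $n$ below this bound and invoking the (quantum) Leftover Hash Lemma makes $m_b=f(x^A_{I_b})$ for the "bad" $b$ statistically close to uniform and decoupled from Bob's register and $m_{\bar b}$; defining $p_b$ as the (attack-dependent) probability that $b$ is the high-entropy side gives the claimed mixture. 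The delicate points — and where I would spend the most care — are: making the min-entropy-splitting quantitative with the finite sample sizes $N_{\textnormal{test}},N_{\textnormal{raw}}$ and the tolerances $\delta_1,\delta_2$; handling the fact that Bob, not Alice, chooses the partition $(I_c,I_{\bar c})$ so the "bad" index $b$ is not under Alice's control; and correctly accounting for the syndrome leakage $\ell$ in the conditioning, which forces the constraint $n \lesssim N_{\textnormal{raw}}(1-h(p_{\max}+\delta_1)) - \ell$ that ultimately determines the key rate. I would then collect all the error terms — $\varepsilon_{\textnormal{bind}}(k)$, $\varepsilon_{\textnormal{IR}}(k')$, the smoothing parameter, the Leftover-Hash error, and the Serfling/Hoeffding sampling errors in $N_0$ — and note that each is negligible in its respective security parameter, which is exactly the conclusion of the theorem.
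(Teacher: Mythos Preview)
Your proposal is correct and follows essentially the same three-part structure as the paper's proof: correctness via IR verifiability plus the leftover hash lemma (the paper's Lemma~\ref{lem:IRverifiability}), sender security via commitment hiding plus a symmetry argument for the string-separation step (Lemmas~\ref{lem:commithidingsecurity} and~\ref{lem:stringsepsecurity}), and receiver security via statistical binding, a quantum sampling lemma, a post-measurement min-entropy bound, min-entropy splitting, and the quantum leftover hash lemma (Lemma~\ref{lem:postmeasuremententropy}).

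One quantitative point to fix: your claimed min-entropy lower bound of roughly $N_{\textnormal{raw}}(1-h(p_{\max}+\delta_1))$ for ``at least one'' of the two strings is too optimistic. The paper's argument (via the Bouman--Fehr post-measurement bound, Lemma~\ref{lem:minentropybound}, rather than a direct entropic uncertainty relation) only yields a bound on the \emph{sum} $H_{\min}(X^A_{J_0}\mid X^A_{J_1}B)+H_{\min}(X^A_{J_1}\mid X^A_{J_0}B)$ of roughly $N_{\textnormal{raw}}(1-2h(\cdot))$, so after splitting the per-string bound is approximately $N_{\textnormal{raw}}\bigl(\tfrac{1}{2}-h(\tfrac{p_{\max}+\delta_1}{1/2-\delta_2})\bigr)$; this is why the critical QBER is $\approx 2.8\%$ rather than the QKD-like $\approx 11\%$. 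Also note that Alice sends \emph{both} syndromes in step~10, so you must subtract the full leak $\ell$ (not just one syndrome) via the chain rule, which the paper does in Eq.~\eqref{eq:totalleak1}. These do not affect the asymptotic statement of the theorem but do change the explicit constants in Lemmas~\ref{lem:correctness} and~\ref{lem:dishonestreceiver}.
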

A high-level proof of Theorem~\ref{thm:OKDsecurity}, including the derivation of the security bounds from Lemmas~\ref{lem:correctness} and~\ref{lem:dishonestreceiver} can be found in Section~\ref{sec:secanalysis} and further details can be found in Appendix~\ref{sec:longproof}. The security of $\pi_{\textnormal{QROT}}$ is given by its main security parameter $N_0$, as well as the security parameters of the underlying commitment and IR schemes $k$ and $k'$, respectively. These values can be computed for the statistical security features of the protocol and are given by the following lemmas:

\begin{lemma} (Correctness) \\
    \label{lem:correctness}
    The outputs of $\pi_{\textnormal{QROT}}$ when run by honest sender and receiver satisfy
    \begin{align}
        \label{eq:lemma_correctness}
        \rho_{M_0,M_1,C,M_C} \approx_{\varepsilon} \frac{p_{\textnormal{succ}}}{2^{(2n+1)}}
        \!\!\!\!\!\!\!\!\!
        \sum\limits_{\substack{m_0,m_1 \in \bs^{n} \\ c \in \bs}}
        \!\!\!\!\!\!\!\!\!
        \Big( \ketbra{m_0}_{M_0} \ketbra{m_1}_{M_1} \ketbra{c}_{C} \ketbra{m_c}_{M_C} \Big),
    \end{align}
    with
    \begin{equation}
    \varepsilon = 2^{-\frac{1}{2}(N_{\textnormal{raw}}-n)} + 2\varepsilon_{\textnormal{IR}}(k'),
    \end{equation}
    where $\varepsilon_{\textnormal{IR}}$ is a negligible function given by the security of the underlying IR scheme.
\end{lemma}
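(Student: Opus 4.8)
The plan is to trace through an honest execution of $\pi_{\textnormal{QROT}}$, identify the two independent sources of deviation from the ideal output state in~\eqref{eq:qot_correctness}, and bound each separately; the two contributions will be exactly the two terms appearing in $\varepsilon$. First I would argue that, conditioned on the protocol not aborting, Alice's and Bob's raw strings are correctly correlated on the relevant index sets. Because Alice generates $N_0$ copies of $\ket{\Phi^+}$ and both parties measure, on the indices in $I_0$ where $\theta^A_i = \theta^B_i$ the outcomes $x^A_i$ and $x^B_i$ are perfectly correlated in the noiseless case, and in the noisy case their relative Hamming distance is (with the parameter-estimation guarantees controlled by $\delta_1$) below $p_{\max} + \delta_1$, so $(x^A_{I_0}, x^B_{I_0}) \in \mathcal{C}$. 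The correctness clause of the verifiable IR scheme (Definition~\ref{def:vir}) then gives $\dec(x^B_{I_0}, \syn(x^A_{I_0})) = x^A_{I_0}$ except with probability $\varepsilon_{\textnormal{IR}}(k')$; a second invocation, bounding the event that Bob's output disagrees with Alice's on the $I_c$ branch even though reconciliation on $I_0$ succeeded, contributes another $\varepsilon_{\textnormal{IR}}(k')$, yielding the $2\varepsilon_{\textnormal{IR}}(k')$ term. On this good event, $y^B = x^A_{I_c}$, so $m^B = f(y^B) = f(x^A_{I_c}) = m^A_c$, i.e.\ Bob's output bit string matches $m^A_c$ and his choice bit is the $c$ he sampled.

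Second I would address the distribution of the output strings, which is the source of the $2^{-\frac12(N_{\textnormal{raw}}-n)}$ term. The bits $x^A_{I_c}$ and $x^A_{I_{\bar c}}$ are outcomes of measuring halves of Bell pairs in bases chosen uniformly and independently, so the raw strings are uniform on $\bs^{N_{\textnormal{raw}}}$, and $c$ is uniform and independent. Passing them through a function $f$ drawn from a universal hash family yields $m^A_0, m^A_1 \in \bs^n$ that are close to uniform: by the leftover hash lemma (or a direct universal-hashing argument), since the raw strings have min-entropy $N_{\textnormal{raw}}$, the pair $(f, f(x^A_{I_c}), f(x^A_{I_{\bar c}}))$ is within statistical distance $2^{-\frac12(N_{\textnormal{raw}}-n)}$ of $(f, U_n, U_n)$ — here one uses that conditioning on one branch leaves the other with full entropy because the corresponding index sets are disjoint. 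Combining: on the good reconciliation event the joint state has $M_C = M_C$ deterministically equal to $M_c$, and the marginal on $(M_0, M_1, C)$ is $2^{-\frac12(N_{\textnormal{raw}}-n)}$-close to uniform; off the good event (probability $\le 2\varepsilon_{\textnormal{IR}}(k')$) we simply absorb the discrepancy. A triangle inequality on trace distance then gives the claimed $\varepsilon = 2^{-\frac12(N_{\textnormal{raw}}-n)} + 2\varepsilon_{\textnormal{IR}}(k')$, with the overall $p_{\textnormal{succ}}$ prefactor accounting for the (honest) abort probability, which here comes only from channel noise and is folded into the normalization as discussed before Definition~\ref{def:ROT}.

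The main obstacle I anticipate is the careful bookkeeping around the tolerance parameters $\delta_1, \delta_2$ and the conditioning structure: one must verify that the parameter-estimation step (checking $p \le p_{\max}$ on the test set $I_s$) implies, with the stated confidence, that the \emph{untested} block also lies in $\mathcal{C}$, so that the IR correctness clause actually applies — this is a sampling-without-replacement concentration argument (Serfling-type bound) whose failure probability must either be shown negligible in $N_0$ or explicitly incorporated. A secondary subtlety is ensuring the leftover-hash step is applied to the right conditional state: since Alice computes $m^A_0 = f(x^A_{I_c})$ and $m^A_1 = f(x^A_{I_{\bar c}})$ on \emph{disjoint} index sets, the two outputs are genuinely independent given $f$, so the $2^{-\frac12(N_{\textnormal{raw}}-n)}$ bound for a single branch suffices rather than needing a union bound — but this disjointness, guaranteed by Bob's construction of $I_0$ and $I_1$ in Step~9, should be stated explicitly. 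Everything else is a routine chain of triangle inequalities over trace distance.
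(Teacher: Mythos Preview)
Your plan has a genuine gap in how it handles the information reconciliation step, and this gap is precisely the ``main obstacle'' you yourself flag. You invoke the \emph{correctness} clause of the IR scheme (Definition~\ref{def:vir}, item~1), which only guarantees $\dec(\syn(x),y)=x$ when $(x,y)\in\mathcal{C}$. That forces you into the Serfling-type sampling argument to certify that the untested block satisfies $r_H(x^A_{I_0}\oplus x^B_{I_0})<p_{\max}+\delta_1$, and the failure probability of that concentration bound would have to appear as an additional term in $\varepsilon$. But the lemma you are proving has no such term: $\varepsilon = 2^{-\frac12(N_{\textnormal{raw}}-n)} + 2\varepsilon_{\textnormal{IR}}(k')$ contains nothing in $N_0,\alpha,\delta_1,\delta_2$. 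The paper avoids this entirely by using the \emph{verifiability} clause (Definition~\ref{def:vir}, item~2) instead: for \emph{any} pair $(x,y)$, the decoder outputs either $x$ or $\bot$ except with probability $\varepsilon_{\textnormal{IR}}(k')$. Hence, conditioned on the protocol not aborting at Step~(11) (i.e.\ $y^B\neq\bot$), one has $y^B=x^A_{I_0}$ up to $\varepsilon_{\textnormal{IR}}(k')$, with no need to control the actual error rate on $I_0$. This is the content of Lemma~\ref{lem:IRverifiability}, and it is what makes the stated $\varepsilon$ achievable.

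Your accounting of the factor $2$ in $2\varepsilon_{\textnormal{IR}}(k')$ is also off. There is only one call to $\dec$ in the protocol (on $I_0$), so there is no ``second invocation'' of IR on another branch. In the paper's argument, Lemma~\ref{lem:IRverifiability} gives a single $\varepsilon_{\textnormal{IR}}(k')$-approximation to a state in which $X^A_{I_0},X^A_{I_1},C$ are independent and uniform and $Y^B=X^A_{I_0}$; this yields $H_{\min}^{\varepsilon_{\textnormal{IR}}(k')}(X^A_{I_0}Y^B\mid X^A_{I_1}C)=H_{\min}^{\varepsilon_{\textnormal{IR}}(k')}(X^A_{I_1}\mid X^A_{I_0}Y^BC)=N_{\textnormal{raw}}$. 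The leftover hash lemma (Lemma~\ref{lem:qlhl}) is then applied \emph{twice}, once for $m_0$ and once for $m_1$, each application contributing $\varepsilon_{\textnormal{IR}}(k')+\tfrac12\cdot 2^{-\frac12(N_{\textnormal{raw}}-n)}$ via the smoothing parameter; the triangle inequality gives the claimed total. So the $2\varepsilon_{\textnormal{IR}}$ is an artefact of two hash extractions with a smoothed entropy bound, not of two IR decodings.
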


\begin{lemma} (Security against dishonest receiver) \\
    \label{lem:dishonestreceiver}
    For the final state after running the protocol of $\pi_{\textnormal{QROT}}$ with an honest sender, there exists a binary probability distribution given by $(p_0, p_1)$ such that
    \begin{equation}
        \label{eq:dishonestreceiver}
        \rho_{R,M_0,M_1} \approx_{\varepsilon'}  \sum_{b} \left(p_b \, \rho_{R,M_{\bar{b}}}^b \otimes  \mathsf{U}_{M_{b}}\right), 
    \end{equation}
    with 
\begin{align}
    \label{eq:epsilondisonestbob}
    \varepsilon' &= \sqrt{2} \left(e^{-\frac{1}{2} (1-\alpha)^{2} N_{\textnormal{test}} \delta^{2}_1} +  e^{-\frac{1}{2} N_{\textnormal{check}} \delta^{2}_1} \right)^{\frac{1}{2}} + e^{-D_{KL}(\frac{1}{2} - \delta_2|\frac{1}{2})(1-\alpha) N_0}  + \varepsilon_{\textnormal{bind}}(k) \\ \nonumber
    &\quad  + \frac{1}{2} \cdot 2^{\frac{1}{2}\left(n - N_{\text{raw}} \left( \frac{1}{2} - \frac{2\delta_2}{1-2\delta_2} - h\left( \frac{p_{\text{max}} + \delta_1}{\frac{1}{2} - \delta_2}\right) - f \cdot h(p_{\text{max}} + \delta_1) \right) \right)}.
\end{align}
    where $\mathcal{H}_R$ denotes the Hilbert space associated to all of the receiver's memory registers and $\varepsilon_{\textnormal{bind}}$ is a negligible function given by the security of the underlying commitment scheme.
\end{lemma}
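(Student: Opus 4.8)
The plan is to run the standard BBCS-style receiver-security argument: use statistical binding to replace the dishonest receiver's commitments by fixed classical data, use a quantum-sampling argument to turn the success of Alice's test into a smooth min-entropy bound on her raw string, and finish with privacy amplification by the family $\textnormal{\textbf{F}}$. Throughout I would work in the entanglement-based picture for the honest Alice: since she only measures her halves of the $\ket{\Phi^+}$ pairs in the bases $\theta^A$, and those measurements commute with every operation and message of the receiver, I would defer them to the very end, so that after step~4 the object to analyse is the joint state of Alice's $N_0$ qubits and the receiver register $\mathcal H_R$. By the binding property of $(\com,\open,\ver)$, except with probability $\varepsilon_{\textnormal{bind}}(k)$ (the $\varepsilon_{\textnormal{bind}}(k)$ term of $\varepsilon'$) the string $\ccom$ sent in step~4 fixes, for each $i$, a unique pair $(\tilde\theta^B_i,\tilde x^B_i)$ to which the receiver could ever open; I may therefore adjoin $\tilde\theta^B,\tilde x^B$ to $\mathcal H_R$ and treat the receiver as having committed to them before learning $I_t$, $\theta^A$, or anything later. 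Write $\tilde I_0=\{i\in\bar I_t:\theta^A_i=\tilde\theta^B_i\}$ and $\tilde I_1=\{i\in\bar I_t:\theta^A_i\neq\tilde\theta^B_i\}$ for the matching and conjugate index sets on the non-test block; since $\theta^A$ is uniform and independent of $\tilde\theta^B$, a relative-entropy Chernoff bound gives $|\tilde I_1|\ge N_{\textnormal{raw}}$ (hence $|\tilde I_0|<2N_{\textnormal{raw}}$) except with probability $e^{-D_{KL}(\frac12-\delta_2|\frac12)(1-\alpha)N_0}$, the second term of $\varepsilon'$.

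Next I would run the sampling step. Conditioning on the step~7 test passing — $|I_s|\ge N_{\textnormal{check}}$ and relative error $p\le p_{\text{max}}$ on the matching test positions — two Serfling-type tail bounds for sampling without replacement control, up to accuracy $\delta_1$, the propagation of the observed error rate from the matching test positions to the matching non-test positions, producing the exponentials $e^{-\frac12(1-\alpha)^2N_{\textnormal{test}}\delta_1^2}$ and $e^{-\frac12N_{\textnormal{check}}\delta_1^2}$. Feeding this classical statement into the Bouman--Fehr ``quantum sampling is close to ideal'' lemma shows the post-test state is $\sqrt2\,\big(e^{-\frac12(1-\alpha)^2N_{\textnormal{test}}\delta_1^2}+e^{-\frac12N_{\textnormal{check}}\delta_1^2}\big)^{1/2}$-close (the square root and the $\sqrt2$ being the usual price of converting a classical tail bound to a purified/trace distance) to an ideal state in which Alice's would-be outcomes on $\tilde I_0$ agree with $\tilde x^B$ up to relative error $p_{\text{max}}+\delta_1$; this is the first summand of $\varepsilon'$. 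By the smoothed entropic uncertainty relation for the two BB84 bases, such near-determination of $x^A$ on $\tilde I_0$ forces Alice's outcomes on the conjugate block to have smooth min-entropy against $\mathcal H_R$ of at least $|\tilde I_1|-N_{\textnormal{raw}}\,h\!\big(\tfrac{p_{\text{max}}+\delta_1}{1/2-\delta_2}\big)$, the $\tfrac12-\delta_2$ being the relative size of the matching block in the non-test block over which $p_{\text{max}}+\delta_1$ is normalised.

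Finally I would fix the hidden index and apply privacy amplification. The receiver reports an ordered pair $(I_c,I_{\bar c})$ of $N_{\textnormal{raw}}$-sets (which may be taken disjoint, as overlap only helps the sender) that Alice hashes into $M_0=f(x^A_{I_c})$ and $M_1=f(x^A_{I_{\bar c}})$; since the $2N_{\textnormal{raw}}$ reported positions cannot all lie in $\tilde I_0$, at least $\big(1-\tfrac{4\delta_2}{1-2\delta_2}\big)N_{\textnormal{raw}}$ of them lie in $\tilde I_1$, so by pigeonhole one of the two sets — the one feeding into $M_b$, where $b$ is the transcript function whose law is the claimed $(p_0,p_1)$ — satisfies $|I_b\cap\tilde I_1|\ge\big(\tfrac12-\tfrac{2\delta_2}{1-2\delta_2}\big)N_{\textnormal{raw}}$. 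Restricting the entropy bound of the previous step to $I_b$ and invoking the min-entropy chain rule to discount the $f\cdot h(p_{\text{max}}+\delta_1)\,N_{\textnormal{raw}}$ bits leaked by $\syn(x^A_{I_b})$ in step~10 and the $n$ bits of $M_{\bar b}$ (taken as side information, hence the $-n$), I obtain smooth min-entropy of $x^A_{I_b}$ given $(R,M_{\bar b})$ at least $N_{\textnormal{raw}}\big(\tfrac12-\tfrac{2\delta_2}{1-2\delta_2}-h(\tfrac{p_{\text{max}}+\delta_1}{1/2-\delta_2})-f\,h(p_{\text{max}}+\delta_1)\big)-n$. The Leftover Hash Lemma for $\textnormal{\textbf{F}}$ then puts $M_b=f(x^A_{I_b})$ within $\tfrac12\cdot 2^{\frac12(n-N_{\textnormal{raw}}(\cdots))}$ of uniform and product with $(R,M_{\bar b})$ — the last summand of $\varepsilon'$ — and a triangle inequality over all the error events yields $\rho_{R,M_0,M_1}\approx_{\varepsilon'}\sum_b p_b\,\rho^b_{R,M_{\bar b}}\otimes\mathsf U_{M_b}$ (when the exponent is nonnegative the claim is vacuous, since then $\varepsilon'\ge\tfrac12$ already bounds the trace distance of subnormalised states).

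The step I expect to be the main obstacle is the quantum-sampling one: making rigorous the passage from ``the classical test passed'' to a smooth-min-entropy bound on the \emph{actual} final state — the receiver's operations being adaptive and entangled with Alice's undisturbed qubits — requires carefully commuting Alice's deferred measurements past those operations, choosing the sampling strategy so that Serfling's parameters land exactly on $(1-\alpha)^2N_{\textnormal{test}}\delta_1^2$ and $N_{\textnormal{check}}\delta_1^2$, and propagating the smoothing parameter through both the uncertainty relation and the chain rule so that it collapses into the single square-root term of $\varepsilon'$. The $\delta_2$-counting and the final leftover-hash step are, by comparison, routine.
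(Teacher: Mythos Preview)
Your high-level strategy matches the paper's (binding $\to$ quantum sampling $\to$ min-entropy bound $\to$ leftover hash), but the central technical step differs. You invoke the smoothed entropic uncertainty relation to convert ``near-determination of $x^A$ in the committed bases'' into a min-entropy lower bound on the conjugate block, and then pigeon-hole on $|I_b\cap\tilde I_1|$. The paper never uses an uncertainty relation; instead it applies a direct post-measurement min-entropy bound (Lemma~\ref{lem:minentropybound}, from Bouman--Fehr) to \emph{each} $J_b$ separately, after first conditioning on every value of $x^A_{J_{\bar b},J_d}$. This yields $H_{\textnormal{min}}(X^A_{J_b}\mid X^A_{J_{\bar b}}B)\ge w_H(\theta^{\textnormal{ch}}_{J_b})-h(\cdot)\,N_{\textnormal{raw}}$ directly; summing over $b\in\{0,1\}$ and pigeon-holing on the \emph{entropies} (not on set sizes) then gives the bound of Lemma~\ref{lem:postmeasuremententropy}. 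The payoff of the paper's route is that the conditioning on $X^A_{J_{\bar b}}$ is built in from the start, so $M_{\bar b}=f(X^A_{J_{\bar b}})$ never needs to be chain-ruled out.

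That is precisely where your proposal slips. You subtract $n$ bits for $M_{\bar b}$ via the chain rule, arriving at min-entropy $\ge N_{\textnormal{raw}}(\cdots)-n$, but then quote the leftover-hash term as $\tfrac12\cdot 2^{\frac12(n-N_{\textnormal{raw}}(\cdots))}$. With output length $n$ and entropy $N_{\textnormal{raw}}(\cdots)-n$, the LHL actually gives $\tfrac12\cdot 2^{\frac12(2n-N_{\textnormal{raw}}(\cdots))}$, a factor $2^{n/2}$ worse than the stated $\varepsilon'$. To match the paper's bound along your route you must apply the uncertainty relation to the block $J_b$ with the side-information system $E$ already containing $X^A_{J_{\bar b}}$ (legitimate, since Alice can be thought of as measuring those positions first), rather than chain-ruling $M_{\bar b}$ out afterwards. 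The ``restricting the entropy bound of the previous step to $I_b$'' move you flag is likewise handled more cleanly by the paper's approach, since Lemma~\ref{lem:minentropybound} is applied to $J_b$ from the outset and no restriction from a larger block is ever needed.
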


We can use these results to find the minimum requirements, both in terms of channel losses and number of shared entangled qubits, necessary to securely realize ROT for a given security level. We focus on the quantity
\begin{equation}
    \varepsilon_{\max}=\varepsilon+\varepsilon'.
\end{equation}
For the purposes of this analysis, we assume that the commitment and IR schemes, as well as their security parameters $k,k'$, are appropriately chosen to satisfy the desired security level and we focus on the dependence of $\varepsilon_{\max}$ on the channel error rate, characterized by the parameter $p_{\max}$, and the number of quantum signals $N_0$.  
We are also interested in a quantity known as the secret key rate $R_{\textnormal{key}}$. For given values of $N_0$, $\alpha$, 
$\delta_1$, $\delta_2$, $p_{\max}$, and $\varepsilon_{\max}$, let $n_{\max}$ be the largest number for which the associated $n_{\max}$-bit ROT has at least security $\varepsilon_{\max}$, then
\begin{equation}
    R_{\textnormal{key}} = \frac{n_{\max}}{N_0},
\end{equation}
represents the ratio in which the original measurements of the shared qubits ``transform'' into the oblivious key. In Figure~\ref{fig:MaxKeyRatesPlot} we can see the behavior of $R_{\textnormal{key}}$ as $p_{\max}$ increases. Note that, similarly to the case of quantum key distribution, there is a critical error $p_{\textnormal{crit}}$ after which $R_{\textnormal{key}}$ becomes negative and no secure key can be generated. The value of $p_{\textnormal{crit}}$ is upper bounded by $\approx 0.028$, which is achieved when we set $\alpha, \delta_1, \delta_2 \rightarrow 0$ and $N_0 \rightarrow \infty$.

\begin{figure}
    \centering
    \includegraphics[width=0.6\linewidth]{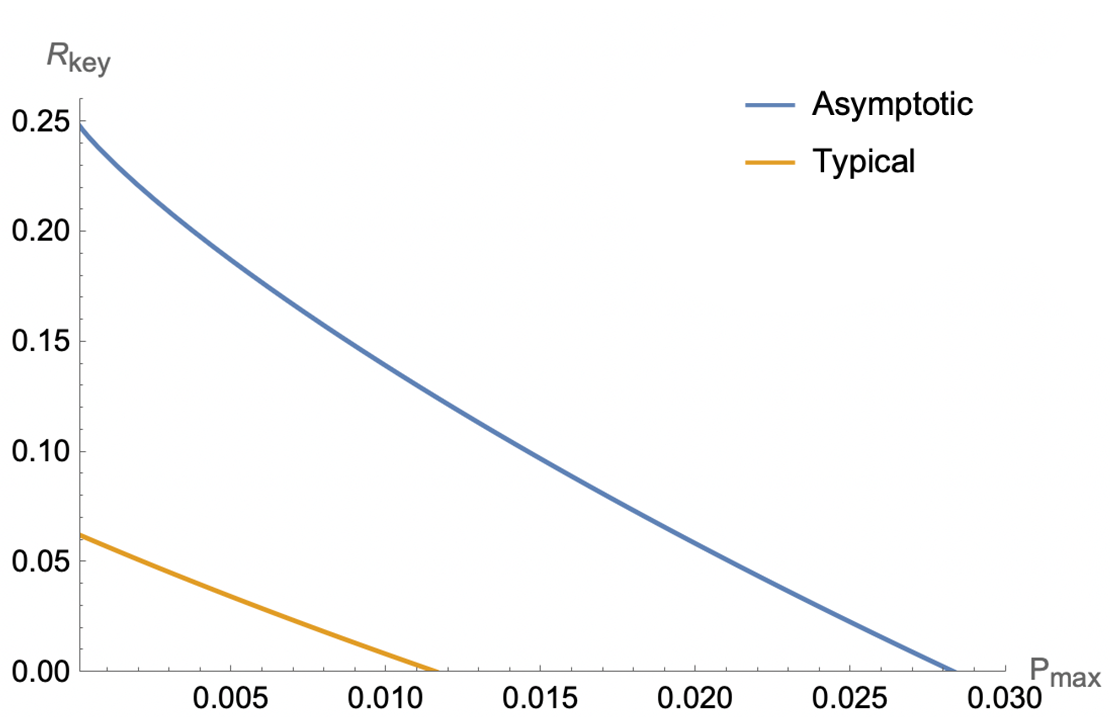}
    \caption{\textbf{Maximum key rate output $\frac{n}{N_0}$ versus error rate $p_{\max}$.} The blue line represents the upper bound for the key rate, when $N_0 \rightarrow \infty$, $\alpha, \delta_1, \delta_2$ are taken to be 0 and $f = 1$. The orange line represents a more typical case with $\alpha = 0.35$, $\delta_1 = 0.01, \delta_2 = 0.025$, and $f=1.2$.}
    \label{fig:MaxKeyRatesPlot}
\end{figure}
\begin{figure}
    \centering
    \includegraphics[width=0.6\linewidth]{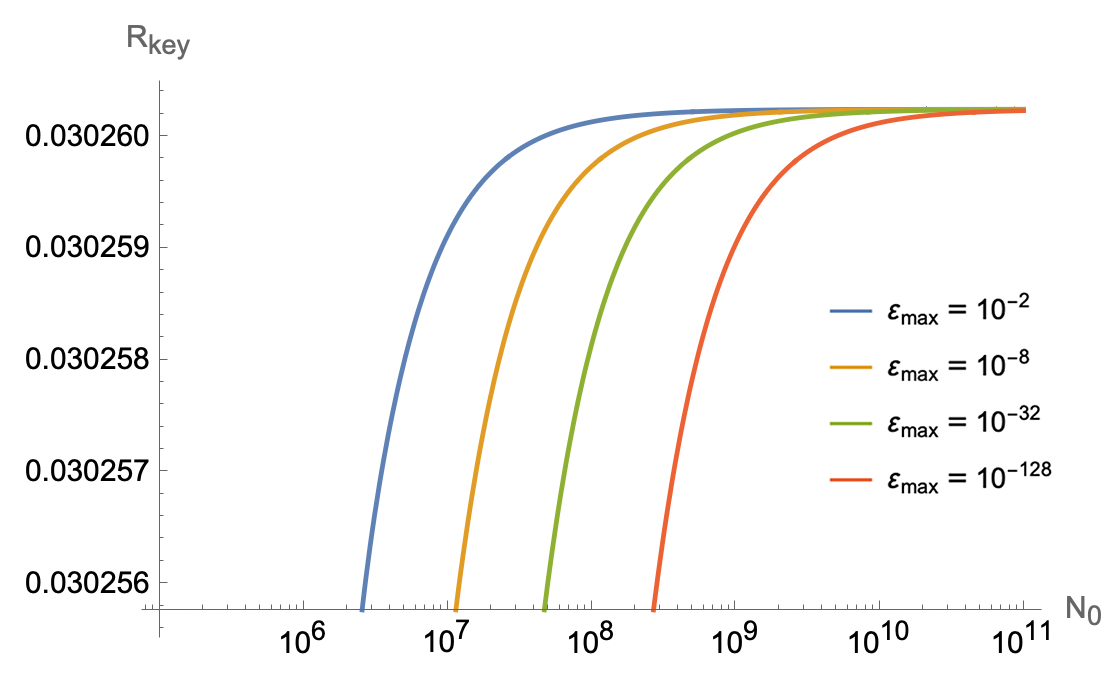}
    \caption{\textbf{Maximum key rate behaviour as a function of $N_0$ for different security levels.} Parameter values used are $\alpha =0.35; \delta_1 = \num{9.20e-3}; \delta_2 = \num{3.00e-3};p_{\textnormal{max}} = 0.01; f=1.2$.}
    \label{fig:KeyRatevsNPlot}
\end{figure}

Another important aspect to analyze is the relation between $R_{\textnormal{key}}$ and $N_0$, which is shown in Figure~\ref{fig:KeyRatevsNPlot}. Fixing the $\alpha, \delta_1, \delta_2, p_{\max}$, there is a clearly marked phase transition-like behaviour in which, for each $\varepsilon_{\max}$, there is a critical value of $N_0 = N_{\textnormal{crit}}$ before which  $R_{\textnormal{key}} = 0$, and after which it quickly reaches its maximum value. This result comes from the fact that the parameter estimation requires relatively big sample sizes to reach high confidence. It shows that, even for small $n$, there is a minimum amount of entangled qubits needed to be shared. In some cases, for instance, generating a $1$-bit oblivious key or a $128$-bit one may have similar costs in terms of quantum communication. Because the use of resources of the protocol scales with $N_0$, the parameters $\alpha, \delta_1, \delta_2$ should be chosen such that $N_{\textnormal{crit}}$ is the smallest. Figure~\ref{fig:MinNvsEPlot} exemplifies the dependency of $N_{\textnormal{crit}}$ on $\varepsilon_{\max}$.

\begin{figure}
    \centering
    \includegraphics[width=0.6\linewidth]{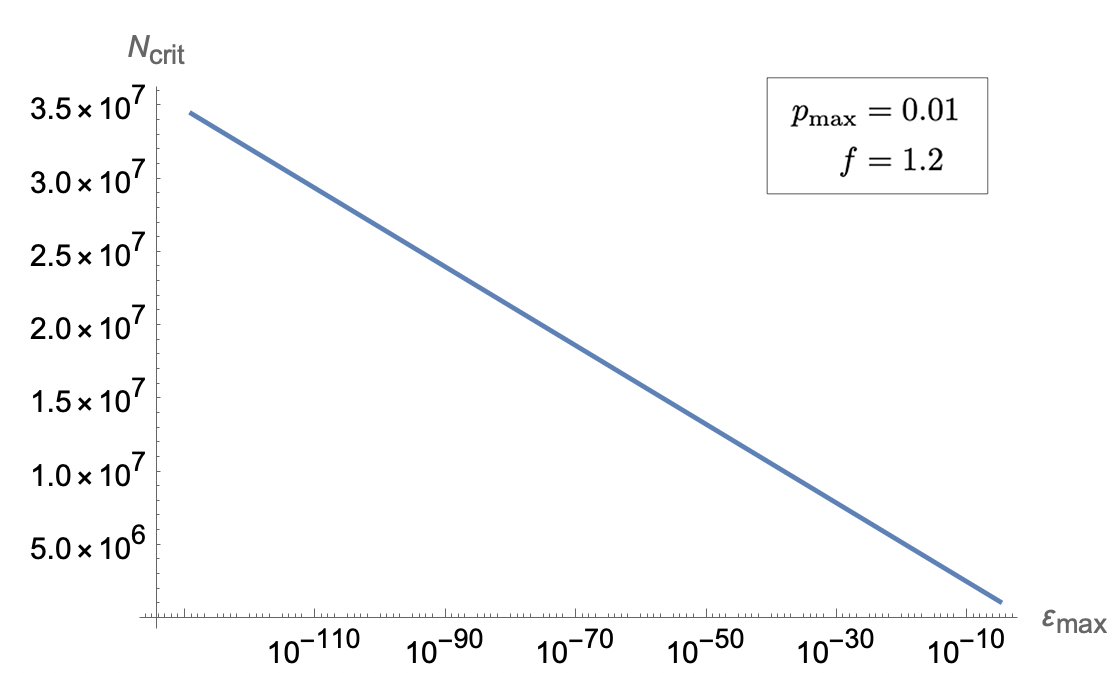}
    \caption{\textbf{Critical value $N_{\textnormal{crit}}$ of the number of shared qubits needed to obtain positive key rates as a function of the security level.} The values of $N_{\textnormal{crit}}$ were computed using the parameters $\alpha, \delta_1, \delta_2$ that minimize the value of $N_{\textnormal{crit}}$ for each $\varepsilon_{\max}$.}
    \label{fig:MinNvsEPlot}
\end{figure}

\subsection{Experimental implementation performance} 

An experiment was implemented to test the performance of the $\pi_{\textnormal{QROT}}$ protocol with contemporary technology. Data was acquired using a picosecond pulsed photon source in a Sagnac configuration ~\cite{sagnac06}, producing wavelength degenerate, polarization-entangled photons at 1550nm. In this setup, entangled photons were produced via spontaneous parametric down conversion (SPDC) by applying a laser pump beam into a 30mm long periodically-poled potassium titanyl phosphate (ppKTP) crystal. The photon pairs were split using a half-wave plate (HWP) and a polarizing beam splitter (PBS), and then sent to each party where they are detected using superconducting nanowire single-photon detectors.

To test the OT speed of this implementation, different values for the power $P$ of the laser pump were tested, as well as the use of multiplexing. As the $P$ increases, the amount of coincidences detected per second $R_{\textnormal{c}}$ increases, but the fidelity of the produced entangled pairs decreases, resulting in larger values for qubit error rate, which is represented by the protocol parameter $p_{\max}$. The number of maximum potential OT instances per second is computed as 
\begin{equation}
    R_{OT} = \frac{R_{\textnormal{c}}}{N_{\textnormal{crit}}},
\end{equation}
where $N_{\textnormal{crit}}$ is computed using the optimal values of $\alpha, \delta_1, \delta_2$ for the respective error rate $p_{\max}$ and undetected multi-photon rate $p_{\textnormal{multi}}$ associated to $P$, assuming perfectly efficient information reconciliation, $f=1$ (see Section \ref{sec:experiment} for the details on the implementation and its security). As seen in Figure~\ref{fig:expperformance}, for this implementation, the additional coincidence rate gained by increasing $P$ is not enough to compensate for the induced increased error. This result is not immediately obvious, as $N_{\textnormal{crit}}$ does not depend explicitly on $p_{\max}$. The decrease in performance comes from the fact that increasing $p_{\max}$ limits the values that $\delta_1$ can have while maintaining positive key rates. This restriction on the values of $\delta_1$ ultimately results in an increase in $N_{\textnormal{crit}}$ and therefore, a reduction on $R_{OT}$.

\begin{figure}[H]
    \centering
    \includegraphics[width=0.6\linewidth]{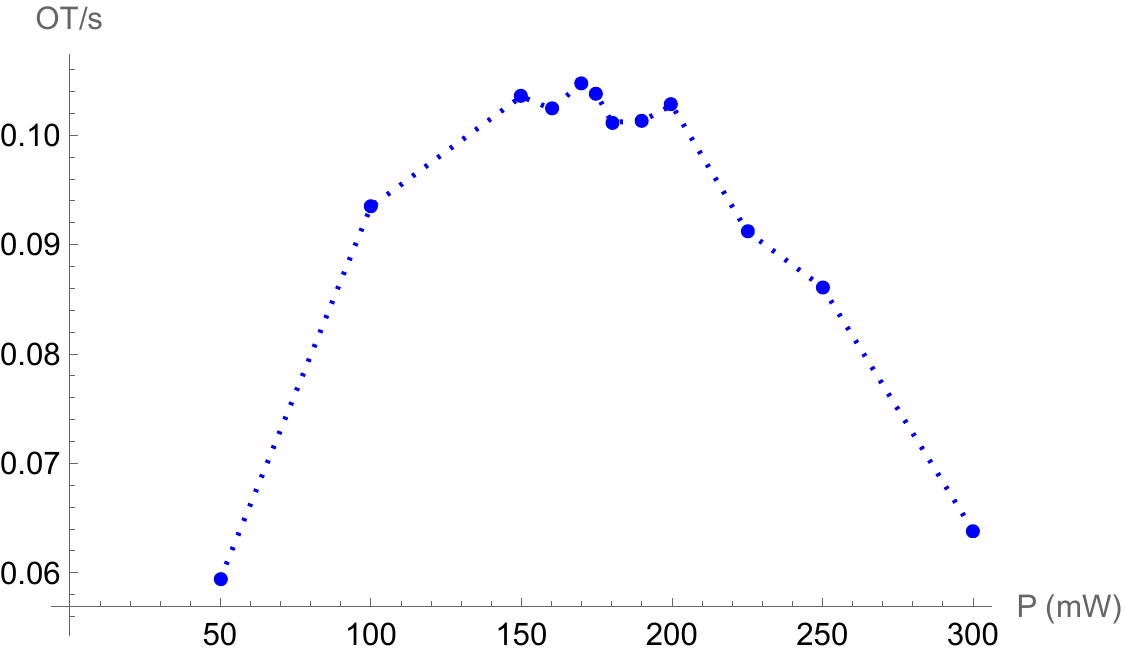}
    \caption{Maximum potential ROT rates as a function of the pump power P for $\varepsilon_{\max} = 10^{-7}$. We see that the best performance is obtained at a laser pump power of $P = 170$ mW, corresponding to a coincidence rate close to $\num{2.45}$ kHz. The uncertainty on the power measurement (x-axis) along with the error bars resulting from the Poissonian noise on the coincidence counts (used to calculate y-values) are negligible with respect to the current plot scale.}
    \label{fig:expperformance}
\end{figure}

Table~\ref{tab:expperformance} shows an example of the performance of the protocol in a real-world implementation using the data from the experimental setup. For the commit/open phase, the weakly-interactive string commitment protocol introduced in \cite{naor91} was implemented using the BLAKE3 hash function algorithm as a one-way function. For the post-processing phase, a low density parity check (LDPC) code was used for IR, and random binary matrices were used to implement the universal hash family for privacy amplification. We evaluated the performance by the number of $128$-bit ROT instances able to be completed per second (It is worth noting that, using a Mac mini M1 2020 16GB computer, the post-processing throughput was enough to handle all the data from the experiment, the bottleneck being the quantum signal generation rate).

\begin{table}[htbp]
\centering
  \begin{tabular}{lcc}
   \toprule
   \multicolumn{1}{c}{\textbf{Parameter}}    & \textbf{Symbol} & \textbf{Value} \\ \midrule
    Message size (bits)         &   $n$                                 &  $128$                \\ \midrule
    Security level              &   $\varepsilon_{\textnormal{max}}$    &  $\num{1.91e-8}$         \\ \midrule
    Cost in quantum signals     &   $N_0$                               &  $\num{5.86e6}$       \\ \midrule
    Max allowed QBER            &   $p_{\textnormal{max}}$              &  $1.14\%$              \\ \midrule
    Testing set ratio           &   $\alpha$                            &  $0.35$               \\ \midrule
    Statistical parameter 1     &   $\delta_1$                          &  $\num{9.00e-3}$       \\ \midrule
    Statistical parameter 2     &   $\delta_2$                          &  $\num{3e-3}$         \\ \midrule
    IR verifiability security   &   $\varepsilon_{\textnormal{IR}}$     &  $2^{-32}$         \\ \midrule
    Commitment binding security &   $\varepsilon_{\textnormal{bind}}$   &  $2^{-32}$         \\ \midrule
    Efficiency of IR            &   $f$                                 &  $1.64$               \\ \midrule
    Max allowed multi-photon rate   &   $p_{\textnormal{multi}}$    &  $\num{3.67e-3}$             \\ \midrule
                                &   ROT rate                            &  $\num{0.023}$ ROT/s       \\ 
   \bottomrule
  \end{tabular}
  \caption{Table of protocols parameters and the resulting performance. The values of $N_0$ and $\delta_1$ and the laser pump power were optimized to yield the highest ROT rate for an LDPC code with efficiency f = 1.61.}
  \label{tab:expperformance}
\end{table}

\section{Security Analysis}
\label{sec:secanalysis}
In this section, we prove the main security result, which relates the overall security of the protocol as a function of its parameters $N_0, \alpha, \delta_1$, and $\delta_2$ in Theorem~\ref{thm:OKDsecurity}. For clarity of presentation, we have compacted some of the properties into lemmas, for which detailed proofs can be found in Appendix~\ref{sec:longproof}. Definitions and properties of entropic quantities can be found in Appendix~\ref{sec:preliminaries} 

\subsection{Correctness}
\label{sec:correctness}
In order to prove correctness we need to show that either the protocol either finishes with Alice outputting uniformly distributed messages $m_0, m_1$ and Bob outputting a uniformly random bit $c$ and the corresponding message $m_c$, or it aborts, except with negligible probability. 

Recall that we model the aborted state of the protocol as the zero operator. This way, whenever we have a mixture of states, some of which trigger aborting and some that do not, the abort operation removes the events that trigger it from the mixture, effectively reducing its trace by the probability of aborting. There are three instances where the protocol can abort: first during Step (7) if the estimated qubit error rate is larger than $p_{\max}$; the second one is during Step (9) if Bob does not have enough (mis)matching bases to construct the sets $I_0, I_1$; and finally during Step (11) if the IR verification fails. The probability of aborting in Steps (7) and (11) depends on the particular transformation that the states undergo when being sent from Alice's to Bob's laboratory, about which we make no assumptions. We can group these three abort events and denote by $p_{\textnormal{abort}}$ the probability of the protocol aborting by the end of Step (11). The state at this point can be written as $(1-p_{\textnormal{abort}}) \rho^{\top}$, where $\rho^{\top}$ represents the normalized state conditioned that the protocol has not aborted by this point. As Lemma~\ref{lem:IRverifiability} states, the verifiability property of the Information Reconciliation scheme guarantees that the states that ``survive'' past Step (11) have the property that Bob's corrected string $y^{B}$ is the same as Alice's outcome string $x^{A}_{I_0}$, which is uniformly distributed. 

\begin{lemma}
    \label{lem:IRverifiability}
    Let $X^{A}_{I_0}, X^{A}_{I_1}, C, Y^{B}$ denote the systems holding the information of the respective values $x^{A}_{I_0}, x^{A}_{I_1}, c$, and  $y^{B}$ of $\pi_{\textnormal{QROT}}$. Denote by $\rho^{\top}$ the parties' joint state at the end of Step (11) conditioned that Bob constructed the sets $(I_0, I_1)$ during Step (9) and the protocol has not aborted. Assume both parties follow the Steps of the protocol, then
    \begin{equation}
        \rho^{\top}_{X^{A}_{I_0}, X^{A}_{I_1}, C, Y^{B}} \approx_{\varepsilon_{\textnormal{IR}}(k')} \tilde{\rho}^{ \top}_{X^{A}_{I_0}, X^{A}_{I_1}, C, Y^{B}},
    \end{equation}
    where $\varepsilon_{\textnormal{IR}}(k')$ is a negligible function given by the security of the underlying Information Reconciliation scheme, $k'$ its associated security parameter, and
        \begin{equation}
        \label{eq:IRverifiability}
        \tilde{\rho}^{ \top}_{X^{A}_{I_0}, X^{A}_{I_1}, C, Y^{B}} = \frac{1}{2^{(2N_{\textnormal{raw}}+1)}}
        \!
        \sum_{\substack{x_{I_0}, x_{I_1} \\ c}}
        \!
        \ketbra{x_{I_0}}_{X^{A}_{I_0}} \ketbra{x_{I_1}}_{X^{A}_{I_1}}  \ketbra{x_{I_0}}_{Y^{B}} \ketbra{c}_{C}.
    \end{equation}
\end{lemma}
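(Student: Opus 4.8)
The plan is to condition the honest--honest execution on surviving the three points where $\pi_{\textnormal{QROT}}$ can abort up to the end of Step~(11) --- the QBER/check-set test in Step~(7), the ``enough (mis)matching bases'' check in Step~(9), and the information-reconciliation check in Step~(11) --- and to establish two separate features of the surviving state: (i) the registers $X^{A}_{I_0}, X^{A}_{I_1}, C$ carry an \emph{exactly} uniform and independent string; and (ii) the register $Y^{B}$ holds a copy of $X^{A}_{I_0}$ up to an event of probability $\varepsilon_{\textnormal{IR}}(k')$. Feature (i) will match the $X^{A}_{I_0}X^{A}_{I_1}C$-marginal of $\tilde\rho^{\top}$ on the nose, and feature (ii) accounts for the $\varepsilon_{\textnormal{IR}}(k')$ slack in the final bound. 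As a first move I would replace Alice's entanglement-based description by the equivalent prepare-and-measure one, so that $x^{A}$ is literally a uniform string sampled by Alice at the start of the quantum phase (equivalently: her reduced state on her halves of the $N_0$ pairs is maximally mixed, so measuring it in any bases gives a uniform outcome string, regardless of what happens on Bob's side).

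For feature (i): $C$ is the fresh coin $c$ that Bob draws in Step~(9), hence uniform and independent of the earlier transcript. The substrings $x^{A}_{I_0}$ and $x^{A}_{I_1}$ are read off from the coordinates of $x^{A}$ lying in $\bar I_t$, and the sets $I_0, I_1 \subseteq \bar I_t$ are disjoint and are fixed by the basis-agreement pattern $\theta^{A}_{\bar I_t}, \theta^{B}_{\bar I_t}$ together with Bob's internal randomness --- data independent of $x^{A}$. The two abort conditions preceding Step~(11) depend only on the bases, on Bob's committed/opened values, and on $x^{A}$ restricted to the test set $I_t$: in Step~(7) honest Bob's commitments all verify (correctness of $(\com,\open,\ver)$), so $(\tilde\theta^{B}_j,\tilde x^{B}_j)=(\theta^{B}_j,x^{B}_j)$ and Alice's test is a function of $x^{A}_{I_t}$ and $x^{B}_{I_t}$ only, while the Step~(9) check is a pure count of basis (mis)matches on $\bar I_t$. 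Since in a faithful run the quantum channel acts signal by signal, $x^{B}_{I_t}$ is uncorrelated with $x^{A}_{\bar I_t}$, so conditioning on surviving Steps~(7) and~(9) leaves the distribution of $x^{A}_{\bar I_t}$ untouched; thus $(X^{A}_{I_0},X^{A}_{I_1},C)$ is exactly uniform in the state $\rho^{(10)}$ obtained after Step~(10).

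For feature (ii): in Step~(11) Bob sets $y^{B}=\dec(x^{B}_{I_0},\syn(x^{A}_{I_0}))$ and aborts precisely when this equals $\bot$. Here I invoke the verifiability property, Definition~\ref{def:vir}(2): for \emph{every} input pair, except with probability $\varepsilon_{\textnormal{IR}}(k')$, $\dec$ returns either $x^{A}_{I_0}$ or $\bot$; because this bound is worst-case over inputs it is not inflated by the conditioning on the earlier survival events, so on the surviving branch $y^{B}=x^{A}_{I_0}$ except on an event $E$ with $\Pr[E\mid\text{survived}]\le\varepsilon_{\textnormal{IR}}(k')$. To convert this into a statement about states I would compare the real Step~(11) with an idealized Step~(11)$'$ that aborts on exactly the same inputs but, on $E$, overwrites $Y^{B}$ with $x^{A}_{I_0}$: the two maps abort identically (so the surviving states carry the same weight) and differ only in the $Y^{B}$ register and only on $E$, whence their conditional states are $\varepsilon_{\textnormal{IR}}(k')$-close in trace distance. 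In the idealized process $Y^{B}=X^{A}_{I_0}$ deterministically while $X^{A}_{I_0},X^{A}_{I_1},C$ are untouched and so still uniform by (i) --- which is exactly $\tilde\rho^{\top}$. Chaining the two comparisons yields $\rho^{\top}_{X^{A}_{I_0},X^{A}_{I_1},C,Y^{B}}\approx_{\varepsilon_{\textnormal{IR}}(k')}\tilde\rho^{\top}_{X^{A}_{I_0},X^{A}_{I_1},C,Y^{B}}$.

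The step I expect to be the main obstacle is the careful bookkeeping of the conditional states across the three abort points within the trace-decreasing framework: one must verify that conditioning on surviving Steps~(7) and~(9) genuinely preserves the uniformity of $x^{A}$ on the untested coordinates (using that the tests never inspect $\bar I_t$ and that the honest channel acts per signal), and that it is the worst-case-over-inputs form of verifiability --- rather than an averaged version --- that keeps the decoding-failure contribution at $\varepsilon_{\textnormal{IR}}(k')$ instead of picking up an inverse-success-probability factor. The remaining ingredients --- commitment correctness for honest opens, disjointness of $I_0$ and $I_1$, and the trace-distance arithmetic for the ``overwrite on the bad branch'' comparison --- are routine.
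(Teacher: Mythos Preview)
Your proposal follows the paper's proof closely: first argue that the abort events at Steps~(7) and~(9) leave $(x^{A}_{I_0},x^{A}_{I_1},c)$ exactly uniform, then apply the worst-case verifiability of the IR scheme (Definition~\ref{def:vir}(2)) to force $Y^{B}=X^{A}_{I_0}$ except on an event of probability at most $\varepsilon_{\textnormal{IR}}(k')$. The paper carries out the first step by writing the sub-normalized joint state after Step~(10) explicitly, tracing out the basis registers, and regrouping by $(I_0,I_1)$ so that the $\tfrac{1}{2^{2N_{\textnormal{raw}}+1}}$ weight on $(x^{A}_{I_0},x^{A}_{I_1},c)$ appears directly in the conditioned state, rather than arguing informally about ``which variables the predicates inspect'' as you do; but the underlying logic is the same, and your ``overwrite $Y^{B}$ on the bad branch'' device is exactly the paper's decomposition into $P_{\textnormal{correct}}\ketbra{x^{A}_{I_0}}+P_{\bot}\ketbra{\bot}+P_{\textnormal{error}}\,\sigma$.

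One divergence worth flagging: you assume ``the honest channel acts per signal'' to decouple $x^{B}_{I_t}$ from $x^{A}_{\bar I_t}$. The paper makes no such memoryless assumption --- it explicitly says it makes no assumption on the transmission channel --- and its proof does not invoke one. You should therefore not present this as part of the hypotheses. The paper's route is to note that Alice's reduced state is maximally mixed (so $x^{A}$ is uniform regardless of the channel), and then carry the uniform $\tfrac{1}{2^{N_0}}$ weight through the explicit state rearrangement, using that the abort predicates and the $(I_0,I_1)$-construction depend only on $(\theta^{A},\theta^{B},x^{A}_{I_t},x^{B},I_t)$ and Bob's local coins, none of which touch the register $X^{A}_{\bar I_t}$ once $x^{B}$ is held fixed. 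Your instinct that a coherent channel could in principle correlate $x^{B}_{I_t}$ with $x^{A}_{\bar I_t}$ is a reasonable thing to worry about, and the paper's rearrangement is not fully explicit on how this is discharged either; but adding a per-signal hypothesis is not how the paper handles it, so you should drop that assumption and instead track the sub-normalized state as the paper does.
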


During Step (12) universal hashing is used in both $x^{A}_{I_0}$ and $x^{A}_{I_1}$ to obtain $m_c$ and $m_{\Bar{c}}$. Because Eq.~\eqref{eq:IRverifiability} describes a state for which the $X^{A}_{I_0}, X^{A}_{I_1}$, and $C$ subsystems are independent and uniformly distributed, it follows from Lemma~\ref{lem:IRverifiability} that 
\begin{align}
    \label{eq:entropyboundcorrectness}
    H_{\textnormal{min}}^{\varepsilon_{\textnormal{IR}}(k')}( X^{A}_{I_0} Y^{B}\vert X^{A}_{I_1}C)_{\rho^{\top}} &= H_{\textnormal{min}}^{\varepsilon_{\textnormal{IR}}(k')}( X^{A}_{I_1}\vert X^{A}_{I_0} Y^{B}C)_{\rho^{\top}} \nonumber \\
    &= N_{\textnormal{raw}}.
\end{align}
Finally, using the quantum leftover hash Lemma~\ref{lem:qlhl} twice (once for $m_0$ and $m_1$) with the corresponding entropy terms given by Eq.~\eqref{eq:entropyboundcorrectness}, together with Lemma~\ref{lem:indistproperties} (1), we conclude that the state $\rho^{(\textnormal{out})}_{M_0,M_1,C,M_C}$ of the output systems after the post processing phase satisfies (substituting $p_{\textnormal{succ}} = 1-p_{\textnormal{abort}}$)
    \begin{equation}
        \rho_{M_0,M_1,C,M_C}^{(\textnormal{out})} \approx_{\varepsilon} \frac{p_{\textnormal{succ}}}{2^{(2n+1)}}
        \!\!\!\!\!\!\!\!\!
        \sum\limits_{\substack{m_0,m_1 \in \bs^{n} \\ c \in \bs}}
        \!\!\!\!\!\!\!\!\!
        \ketbra{m_0}_{M_0} \ketbra{m_1}_{M_1}  \ketbra{c}_{C} \ketbra{m_c}_{M_C},
\end{equation}
with 
\begin{equation}
    \varepsilon \leq 2^{-\frac{1}{2}(N_{\textnormal{raw}}-n)}   + 2\varepsilon_{\textnormal{IR}}(k').
\end{equation}

\subsection{Security against dishonest sender}
\label{sec:dishonestalice}

For this scenario we show that, in the case of an honest Bob and Alice running an arbitrary program, the resulting state after the protocol successfully finishes satisfies Eq.~(\ref{eq:uncertain_b}). In other words, independently of what quantum state Alice shares at the beginning of the protocol and which operations she performs on her systems, her final state is independent of the value of $c$. We assume that Alice's laboratory consists of everything outside Bob's. In particular, this means that she controls the environment, which includes the transmission channels. We also assume that Alice is limited to performing efficient computations.

Let $A$ be the system consisting of all of Alice's laboratory after Step (1) of the protocol, that is, $A$ contains her part of the shared system and every other ancillary system she may have access, but does not contain any system from Bob's laboratory, including Bob's part of the system shared in Step (1). During the execution of the protocol, Alice receives external information from Bob exactly three times: the commitment information shared during Step (4), the opening information $\oopen_{I_t}$ for the commitments associated to the test set $I_t$ in Step (6), and the information of the pair of sets $(J_0, J_1) = (I_c, I_{\Bar{c}})$ during Step (9). Let $\CCOM = (\CCOM_i)_{i=1}^{N_0}$ and $\OOPEN = (\OOPEN_i)_{i=1}^{N_0}$ be the respective systems used by Bob to store the information of the strings $\ccom = (\ccom_i)_{i=1}^{N_0}$ and $\oopen = (\oopen_i)_{i=1}^{N_0}$, and let $\textnormal{SEP}$ be the system holding the string separation information $(J_0, J_1)$. We want to show that, by the end of the protocol, the state of the system $A, \CCOM, \OOPEN_J, \textnormal{SEP}, C$ satisfies:
\begin{equation}
    \label{eq:honestreceiver2}
    \rho_{A, \CCOM, \OOPEN_{\bar{I}_t}, \textnormal{SEP}, C} \approx^{(c)} \rho_{A, \CCOM, \OOPEN_{\bar{I}_t}, \textnormal{SEP}} \otimes \mathsf{U}_{C}.
\end{equation}
To guarantee that Alice will not be able to obtain information about the value of $c$ during the string separation phase, it is necessary to show that Alice does not have access to the information of Bob's bases choices $\theta^B_{I_0,I_1}$ from the commitments sent by Bob during Step (4) of the protocol. As shown by Lemma~\ref{lem:commithidingsecurity}, the shared state of the parties after the commitment information is sent is computationally indistinguishable from a state where Alice's information is independent of $\theta^{B}_{\Bar{I}_t}$.
\begin{lemma}
    \label{lem:commithidingsecurity}
    Assuming Bob follows the protocol, for any $J \subseteq I$, the state of the system $A, \CCOM, \OOPEN_J, \Theta^{B}_{\bar{J}}$ after Step (4) satisfies
    \begin{equation}
        \label{eq:commithidingsecurity}
        \rho_{A, \CCOM, \OOPEN_J, \Theta^{B}_{\bar{J}}} \approx^{(c)} \rho_{A, \CCOM, \OOPEN_J} \otimes \mathsf{U}_{\Theta^{B}_{\bar{J}}}.
    \end{equation}
\end{lemma}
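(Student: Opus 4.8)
The plan is to reduce the claimed computational indistinguishability to the hiding property of the underlying weakly-interactive commitment scheme via a hybrid argument over the indices in $\bar{J}$. First I would fix the randomness $r$ (which Bob publishes in Step (3), and which is uniform) and condition on it; since $r$ is part of the shared state and the hiding property in Definition \ref{def:sc} holds for every fixed $r$, it suffices to prove the statement for each fixed $r$ and then average. Next I would set up a sequence of hybrid states $\sigma_0, \sigma_1, \ldots, \sigma_{|\bar{J}|}$, where $\sigma_0$ is the real state $\rho_{A, \CCOM, \OOPEN_J, \Theta^{B}_{\bar{J}}}$ and in $\sigma_\ell$ the first $\ell$ of the commitments $\ccom_i$ for $i \in \bar{J}$ have been replaced by commitments to a fixed dummy value (say $(0,0)$) using fresh randomness $s_i$, while the corresponding registers $\Theta^{B}_i$ still hold the true (uniform, independent) bases bits. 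The last hybrid $\sigma_{|\bar{J}|}$ has all $\bar{J}$-commitments independent of $\Theta^{B}_{\bar{J}}$, so the true bases bits in those registers are now uniform and independent of everything Alice sees — i.e. $\sigma_{|\bar{J}|}$ is exactly (or is trivially equal to) the ideal state $\rho_{A, \CCOM, \OOPEN_J} \otimes \mathsf{U}_{\Theta^{B}_{\bar{J}}}$ on the relevant registers.

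The core step is then to argue that consecutive hybrids $\sigma_\ell$ and $\sigma_{\ell+1}$ are computationally indistinguishable. Suppose an efficient distinguisher $\mathcal{D}$ separates them; I would build an adversary $\mathcal{B}$ against the hiding game for a single commitment as follows: $\mathcal{B}$ internally simulates the whole protocol — it samples all of Bob's randomness, in particular a uniform bit $b^\ast$ for the $(\ell+1)$-st index of $\bar{J}$ and the full true bases/outcome string, prepares Alice's side (note Alice's operations are efficient by assumption, which is what makes this reduction run in polynomial time), and embeds the external challenge commitment in place of $\ccom_{i_{\ell+1}}$ where $i_{\ell+1}$ is the $(\ell+1)$-st index of $\bar{J}$; the challenge is to a message that is either the true pair $(\theta^B_{i_{\ell+1}}, x^B_{i_{\ell+1}})$ or the dummy $(0,0)$. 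Crucially, since $i_{\ell+1} \notin J$, Bob never has to open this commitment (the opening registers that appear in the state are $\OOPEN_J$ only), so $\mathcal{B}$ never needs the opening randomness it doesn't possess. Feeding $\mathcal{D}$ the simulated state and outputting its guess, $\mathcal{B}$ wins the hiding game with the same advantage, contradicting the hiding property. Summing over the $|\bar{J}| \le N_0$ hybrids and using that $N_0$ is polynomial, the total advantage is negligible, which is the definition of $\approx^{(c)}$.

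The main obstacle — and the point that needs care rather than being routine — is bookkeeping around what exactly sits in the registers and what the reduction must simulate without the hidden information. Specifically: (i) the state includes $\OOPEN_J$, so the reduction must faithfully produce genuine openings for all indices in $J$; this is fine because for $i \in J$ the commitment is generated honestly by $\mathcal{B}$ itself (only the single embedded challenge index, which lies outside $J$, is "borrowed"), but one must check that the test-set interaction of Steps (5)–(7) is consistent with this — in the lemma we are only looking at the post-Step-(4) state, so no opening of $\bar{J}$-commitments has yet occurred and this is automatic. (ii) One must confirm that replacing $\ccom_i$ by a dummy commitment genuinely decouples $\Theta^B_i$ from the rest of the state: since $\theta^B_i$ was sampled uniformly and independently and now appears nowhere except in register $\Theta^B_i$ (the dummy commitment does not depend on it, and its opening is never revealed), the final hybrid factorizes as claimed. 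A secondary subtlety is that "the state after Step (4)" for honest Bob is well-defined only once we also include that $r$ has been exchanged in Step (3); I would state the conditioning on $r$ explicitly at the start. Once these points are pinned down, the hybrid argument goes through verbatim.
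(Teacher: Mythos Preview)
Your approach is essentially the same as the paper's: both reduce to the hiding property of the commitment scheme index by index over $\bar{J}$. The paper writes out the post-measurement density operator explicitly, invokes hiding to replace each $\sigma^{\theta_i,x_i,r}_{\CCOM_i}$ by a fixed $\tilde{\sigma}^r_{\CCOM_i}$, and then uses the tensor/mixture closure properties of $\approx^{(c)}$ (their Lemma~\ref{lem:cindistproperties}) together with an explicitly derived ``basis-independence'' identity (tracing out $X^B_{\bar{J}}$ decouples $\Theta^B_{\bar{J}}$ from $A$) to conclude. Your hybrid/reduction phrasing is the standard unpacking of exactly those closure properties, and your point~(ii) is precisely that basis-independence step.

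One genuine imprecision to fix: your final hybrid $\sigma_{|\bar{J}|}$ is \emph{not} equal to $\rho_{A,\CCOM,\OOPEN_J}\otimes\mathsf{U}_{\Theta^B_{\bar{J}}}$, because in $\sigma_{|\bar{J}|}$ the register $\CCOM_{\bar{J}}$ holds dummy commitments, whereas the right-hand side of the lemma carries the \emph{real} marginal (with real commitments on $\bar{J}$). What you actually have is that $\sigma_{|\bar{J}|}$ factorises as $\tau_{A,\CCOM,\OOPEN_J}\otimes\mathsf{U}_{\Theta^B_{\bar{J}}}$ for some $\tau$; you then need one more step---either run the hybrid back (replacing dummies by real commitments, now tensored with an independent uniform $\Theta^B_{\bar{J}}$), or trace out $\Theta^B_{\bar{J}}$ from $\rho\approx^{(c)}\sigma_{|\bar{J}|}$ to get $\rho_{A,\CCOM,\OOPEN_J}\approx^{(c)}\tau$, tensor back with $\mathsf{U}_{\Theta^B_{\bar{J}}}$, and use transitivity. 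The paper does exactly this closing step via Lemma~\ref{lem:cindistproperties}(1) and~(2). Once this is added, your argument is complete and matches the paper's.
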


At Step (8) of the protocol, Alice sends Bob the system $\Theta^{A}_{\bar{I}_t}$ intended to have the information of her measurement bases. Bob then is able to determine the indices for which $\theta^{A}_{\bar{I}_t}$ and $\theta^{B}_{\bar{I}_t}$ coincide. With this information, he randomly selects sets $I_0, I_1 \in \Bar{I}_t$ of size $N_{\textnormal{raw}}$ for which all indices are associated with matching (for $I_0$) or nonmatching (for $I_1$) bases. Then he computes $(J_0, J_1) = (I_c, I_{\Bar{c}})$, by flipping the order if the pair $(I_0, I_1)$ depending on the value of $c$. Clearly, $(J_0, J_1)$ depend on both $\theta^{B}_{\Bar{I}_t}$ and $c$, but as Lemma~\ref{lem:stringsepsecurity} states, any correlation between $(J_0, J_1)$, $c$, and Alice's information disappears if one does not have access to $\theta^{B}_{\bar{I}_t}$. 
\begin{lemma}
    \label{lem:stringsepsecurity}
    Denote by $A'$ the system representing Alice's laboratory at the start of Step (9). Let $\mathcal{E}^{(I_t)}: \mathcal{D}(\mathcal{H}_{A',\Theta^{A}_{\bar{I}_t},{\Theta^{B}_{\bar{I}_t}}, C}) \rightarrow \mathcal{D}(\mathcal{H}_{A',\Theta^{A}_{\bar{I}_t}, {\Theta^{B}_{\bar{I}_t}}, C, \textnormal{SEP}})$ be the quantum operation used by Bob to compute the string separation information $(J_0, J_1)$ during Step (9) of the protocol. The resulting state after applying $\mathcal{E}^{(I_t)}$ to a product state of the form
    \begin{equation}
        \label{eq:preseparation}
        \mathcal{E}^{(I_t)}(\rho_{A',\Theta^{A}_{\bar{I}_t}} \otimes \mathsf{U}_{\Theta^{B}_{\bar{I}_t}} \otimes \mathsf{U}_{C}) = \sigma_{A',\Theta^{A}_{\bar{I}_t}, {\Theta^{B}_{\bar{I}_t}}, C, \textnormal{SEP}}
    \end{equation}
    satisfies
    \begin{equation}
    \label{eq:stringindependence1}
        \Tr_{\Theta^{A}_{\bar{I}_t},\Theta^{B}_{\bar{I}_t}} \big[ \sigma_{A',\Theta^{A}_{\bar{I}_t}, {\Theta^{B}_{\bar{I}_t}}, C, \textnormal{SEP}} \big] =  \sigma_{A'} \otimes \sigma_{\textnormal{SEP}} \otimes \mathsf{U}_{C}.
    \end{equation}
\end{lemma}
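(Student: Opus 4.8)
The plan is to peel off the quantum dressing and reduce the claim to an elementary combinatorial fact about Bob's selection rule. First I would note that $\mathcal{E}^{(I_t)}$ cannot act on $A'$, so it factors as $\mathrm{id}_{A'}\otimes\mathcal{F}$, where $\mathcal{F}$ acts only on $\Theta^{A}_{\bar{I}_t}\Theta^{B}_{\bar{I}_t}C$ together with Bob's private coins and appends the register $\textnormal{SEP}$. Since Bob's honest procedure reads $\theta^{A}_{\bar{I}_t}$ as a bit string in order to compare it with $\theta^{B}_{\bar{I}_t}$, we may take $\mathcal{F}$ to begin by measuring $\Theta^{A}_{\bar{I}_t}$ in the computational basis. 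Writing the input marginal on $A'\Theta^{A}_{\bar{I}_t}$ after this measurement as $\sum_t p_t\,\sigma^t_{A'}\otimes\ketbra{t}_{\Theta^{A}_{\bar{I}_t}}$ and using linearity, it therefore suffices to prove the following: for every fixed string $t$, after feeding $\ketbra{t}_{\Theta^{A}_{\bar{I}_t}}\otimes\mathsf{U}_{\Theta^{B}_{\bar{I}_t}}\otimes\mathsf{U}_C$ into $\mathcal{F}$ and tracing out $\Theta^{A}_{\bar{I}_t}\Theta^{B}_{\bar{I}_t}$, the state on $(\textnormal{SEP},C)$ equals $\tau_{\textnormal{SEP}}\otimes\mathsf{U}_C$ for a (generally sub-normalized) operator $\tau_{\textnormal{SEP}}$ independent of $t$; plugging this back in gives $\big(\sum_t p_t\sigma^t_{A'}\big)\otimes\tau_{\textnormal{SEP}}\otimes\mathsf{U}_C=\sigma_{A'}\otimes\sigma_{\textnormal{SEP}}\otimes\mathsf{U}_C$, which is \eqref{eq:stringindependence1}.

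For the combinatorial step, fix $t$ and let $M=|\bar{I}_t|$. Because $\theta^{A}_{\bar{I}_t}\oplus\theta^{B}_{\bar{I}_t}$ is uniform on $\bs^{M}$ for any fixed $\theta^{A}_{\bar{I}_t}$, the set $S$ of matching indices is a uniformly random subset of $\bar{I}_t$ whose distribution does not depend on $t$. Conditioned on $S$, Bob chooses $I_0$ uniformly among the $N_{\textnormal{raw}}$-subsets of $S$ and $I_1$ uniformly among the $N_{\textnormal{raw}}$-subsets of $\bar{I}_t\setminus S$, and aborts (maps the branch to $0$) if either set is too small; since this prescription depends on $t$ only through $S$, the law $\mu$ of the ordered pair $(I_0,I_1)$ — mass included — depends only on $M$ and $N_{\textnormal{raw}}$. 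The key point is that $\mu$ is invariant under the coordinate swap $(I_0,I_1)\mapsto(I_1,I_0)$: the involution $\theta^{B}_{\bar{I}_t}\mapsto\overline{\theta^{B}_{\bar{I}_t}}$ fixes the uniform distribution and exchanges $S$ with $\bar{I}_t\setminus S$, hence exchanges the roles of $I_0$ and $I_1$.

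Finally I would bring in the fresh uniform bit $c$, which is independent of $(I_0,I_1)$. Since $\textnormal{SEP}$ records $(J_0,J_1)=(I_c,I_{\bar c})$, for any disjoint $N_{\textnormal{raw}}$-subsets $(A,B)$ we have $\Pr[(J_0,J_1)=(A,B),\,c=0]=\tfrac12\mu(A,B)$ and $\Pr[(J_0,J_1)=(A,B),\,c=1]=\tfrac12\mu(B,A)=\tfrac12\mu(A,B)$ by the swap-invariance of $\mu$; hence $\textnormal{SEP}$ is independent of $C$, with $\textnormal{SEP}$-marginal $\mu$ and $C$-marginal uniform, which is exactly the required $\tau_{\textnormal{SEP}}\otimes\mathsf{U}_C$ with $\tau_{\textnormal{SEP}}$ the operator associated to $\mu$ (independent of $t$). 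The conceptual content lies entirely in this swap-invariance; the step I would treat most carefully is the bookkeeping for the abort branch — keeping track of the fact that $\mathcal{F}$ is trace-decreasing, so $\tau_{\textnormal{SEP}}$ is sub-normalized, and checking that the non-abort probability $\Pr[\,|S|\ge N_{\textnormal{raw}}\ \wedge\ |\bar{I}_t\setminus S|\ge N_{\textnormal{raw}}\,]$ (the two constraints having equal marginal probability by the same involution) is indeed a function of $M$ and $N_{\textnormal{raw}}$ only.
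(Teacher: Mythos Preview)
Your argument is correct and reaches the same conclusion as the paper, but the core combinatorial step is genuinely different. The paper establishes the stronger fact that the (sub-)probability $P(I_0,I_1)$ is actually \emph{constant} over all admissible ordered pairs: it fixes $\theta^{A}_{\bar I_t}$, expands $P(I_0,I_1)=\sum_{\theta^{\textnormal{ch}}\in C(I_0,I_1)} P(I_0,I_1\mid\theta^{\textnormal{ch}})\,2^{-N_1}$, and observes that the number of $\theta^{\textnormal{ch}}$ in $C(I_0,I_1)$ of each Hamming weight is the same for every $(I_0,I_1)$, so the sum is a constant $P^{\textnormal{SEP}}$. Uniformity then makes the relabelling $(I_0,I_1)\mapsto(I_c,I_{\bar c})$ trivially $c$-independent. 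You instead isolate exactly the property that is needed---swap-invariance of the law $\mu$ of $(I_0,I_1)$---and prove it by the involution $\theta^{B}_{\bar I_t}\mapsto\overline{\theta^{B}_{\bar I_t}}$, which preserves the uniform input and exchanges $S$ with its complement. Your route is shorter and more conceptual, and it also makes the independence of $\tau_{\textnormal{SEP}}$ from $t$ (hence the factoring off of $\sigma_{A'}$) explicit, a point the paper's proof leaves largely implicit. The paper's route buys a slightly stronger intermediate statement (full uniformity of $\mu$), which is not needed here but could be useful elsewhere.
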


A proof of both Lemmas~\ref{lem:commithidingsecurity} and~\ref{lem:stringsepsecurity} can be found in Appendix~\ref{sec:commithidingproof}. By setting $J = I_t$, Lemma~\ref{lem:commithidingsecurity} guarantees that Alice's system's state after the opening information has been sent is computationally indistinguishable from one that is completely uncorrelated with Bob's measurement basis information in $\Bar{I}_t$. 

Additionally, by recalling that the value of $c$ is sampled independently of any of the considered systems, we know that the state $\rho_{{A',\Theta^{A}_{\bar{I}_t}, \Theta^{B}_{\bar{I}_t}}, C}$ before $(J_0, J_1)$ is computed has the required product form and, from Lemma~\ref{lem:stringsepsecurity}, we conclude that the state of all of Alice's system at this point is computationally indistinguishable from a state uncorrelated with $C$. Let $\mathcal{E}$ be the operation Alice performs in her system from here to the end the protocol. By using Lemma~\ref{lem:cindistproperties} (4) and grouping all of Alice's systems into $S$, we obtain the desired result: 
\begin{equation}
    \rho_{S,C}  \approx^{(c)}  \rho_{S} \otimes \mathsf{U}_{C}.
\end{equation}

\subsection{Security against dishonest receiver}
We consider now the scenario in which Alice runs the protocol honestly and Bob runs an arbitrary program. For this analysis, note that Alice trusts her quantum state preparation and detection. We want to show that the state after finishing the protocol successfully satisfies Eq.~(\ref{eq:uncertain_m}). This means that the state at the end of the protocol can be described as a mixture of states where Bob's system is uncorrelated with at least one of the two strings outputted by Alice. Similarly to the dishonest sender's case, we assume that Bob's laboratory consists of everything outside Alice's, which means that he controls the communication channels and the environment. However, we do not assume that Bob is restricted to efficient computations. 

The values of Alice's output strings depend on several quantities: Alice's measurement outcomes, the choice of the $I_t, J_0, J_1$ subsets, and the choice of hashing function $f$ during the post-processing phase of the protocol. From all of these, the only ones that are not made explicitly public during the protocol's execution are Alice's measurement outcomes. Instead, partial information of these outcomes is revealed at different steps of the protocol. Let $x^{A}_{J_0}, x^{A}_{J_1}$ be the sub-strings of measurement outcomes used to compute Alice's outputs $m_0, m_1$, respectively, and let $R$ denote Bob's system at the end of the protocol (which includes all the systems that Alice sent during the execution of the protocol). In order to prove security we need to show that the joint state of the system $X^{A}_{J_0}, X^{A}_{J_1}, R$ can be written as a mixture of states $\rho^{b}$ (with $b\in \bs$) such that the conditional min-entropy $ H_{\textnormal{min}}^{\varepsilon} (X^{A}_{J_b}|R)_{\rho^{b}}$ is high enough, so that we can use the leftover hash Lemma~\ref{lem:qlhl} to guarantee that the outcome of the universal hashing $m_b = f(x^{A}_{J_b})$ is uncorrelated with $R$.

At the start of the protocol the parties share a completely correlated entangled system. If the parties make measurements as intended, their outcomes will be only partially correlated, but if Bob was able to postpone his measurement until after Alice's reveals her measurement bases, Bob could potentially obtain the whole information of $x^{A}$ by measuring in the appropriate basis on his system. To prevent this, Bob is required to commit his measurement bases and results to Alice before knowing which set is going to be tested. Then a statistical test is performed in Step (7) to estimate the correlation of Alice's measurement outcomes with with the ones that Bob committed. As Lemma~\ref{lem:postmeasuremententropy} states, any state passing the aforementioned test is such that, regardless of how Bob defines the sets $(J_0, J_1)$ during the string separation phase, there is a minimum of uncertainty that he has with respect to Alice's measurement outcomes. Recall that, when Alice is honest, the overall state of the protocol before Step (8) will be a partially classical state, which could be written as a mixture over all of Alice's classical information. Let $\Vec{\tau} = (x^{A}_{I_t},\theta^{A},r,\ccom,I_t,I_s, \oopen_{I_t})$ denote the \textit{transcript of the protocol} up to Step (8), and let $\rho_{X^{A} B}(\Vec{\tau}, J_0, J_1)$ be the joint state of Alice's measurement outcomes and Bob's laboratory conditioned to $\Vec{\tau}, J_0, J_1$. 

\begin{lemma}
    \label{lem:postmeasuremententropy}
      Assuming Alice follows the protocol, let $T, \textnormal{SEP}, B$ denote the systems of the protocols transcript, the strings $J_0, J_1$, and Bob's laboratory at the end of Step (9) of the protocol, and let $\rho_{T, \textnormal{SEP}, X^{A},B}$ be the state of the joint system at that point. There exists a state $\tilde{\rho}_{T, \textnormal{SEP}, X^{A}, B}$, which is classical in $T$ and $\textnormal{SEP}$ such as:       
      \begin{enumerate}
        \item The conditioned states $\tilde{\rho}_{X^{A}, B}(\Vec{\tau}, J_0, J_1)$ satisfy:
        \begin{align}
                H_{\textnormal{min}} (X^{A}_{J_0}|X^{A}_{J_1} B)_{\tilde{\rho}(\Vec{\tau}, J_0, J_1)} + H_{\textnormal{min}} (X^{A}_{J_1}|X^{A}_{J_0} B)_{\tilde{\rho}(\Vec{\tau}, J_0, J_1)} \geq 2 N_{\text{raw}} \left( \frac{1}{2} - \delta_2 - h\left( \frac{p_{\text{max}} + \delta_1}{\frac{1}{2} - \delta_2}\right) \right),
        \end{align}
        \item $\rho_{T, \textnormal{SEP}, X^{A}, B} \approx_{\varepsilon} \tilde{\rho}_{T, \textnormal{SEP}, X^{A}, B}$, with 
        \begin{equation}
        \label{eq:epsilonsecuritydishonestb}
        \varepsilon = \left(2 (e^{-\frac{1}{2} \alpha (1-\alpha)^{2} N_0 \delta^{2}_1} +  e^{-\frac{1}{2} (\frac{1}{2} -\delta_2)\alpha N_0 \delta^{2}_1}) \right)^{\frac{1}{2}} + e^{-D_{KL}(\frac{1}{2} - \delta_2|\frac{1}{2})(1-\alpha) N_0} + \varepsilon_{\textnormal{bind}}(k),
        \end{equation}
        where $h(\cdot)$ and $D_{KL}(\cdot | \cdot)$ denote the binary entropy and the binary relative entropy functions, respectively, and $\varepsilon_{\textnormal{bind}}(k)$ is a negligible function given by the binding property of the commitment scheme.
    \end{enumerate}
\end{lemma}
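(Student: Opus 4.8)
The plan is to track the protocol from the moment Bob sends his commitments (Step 4) up to the construction of the separation sets (Step 9), and to convert the statistical guarantee of the Step~(7) test into an entropic lower bound on Alice's outcomes conditioned on Bob's information. The first move is to use the binding property of the commitment scheme to replace Bob's actual behaviour by an idealized one: with the random challenge $r$ fixed and publicly known, except with probability $\varepsilon_{\textnormal{bind}}(k)$ every commitment $\ccom_i$ that Alice will ever accept has a \emph{unique} openable value $(\tilde\theta^B_i,\tilde x^B_i)$. Hence, up to a state that is $\varepsilon_{\textnormal{bind}}(k)$-close, we may assume that at the end of Step~(4) there is a well-defined (though possibly unknown to us) string $(\tilde\theta^B,\tilde x^B)\in(\bs\times\bs)^{N_0}$ committed by Bob; this is the content of the first term in Eq.~\eqref{eq:epsilonsecuritydishonestb}. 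From here on all reasoning is about the \emph{classical} joint distribution of Alice's outcomes $x^A$, bases $\theta^A$, and the committed string, together with whatever quantum side information $B$ Bob still holds.

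The core of the argument is a sampling/entropy-accumulation step on the complement of the test set. Alice's test in Step~(7) checks, on the random subset $I_t$ of size $\alpha N_0$, that the relative Hamming distance between $x^A$ and $\tilde x^B$ restricted to the \emph{basis-agreement} indices $I_s$ is at most $p_{\max}$, and that $|I_s|\ge N_{\textnormal{check}}=(\tfrac12-\delta_2)\alpha N_0$. Because $I_t$ is sampled uniformly and independently of the committed data, a Hoeffding/Serfling-type bound for sampling without replacement transfers these statistics to $\bar I_t$: except with probability $\big(e^{-\frac12\alpha(1-\alpha)^2 N_0\delta_1^2}+e^{-\frac12(\tfrac12-\delta_2)\alpha N_0\delta_1^2}\big)^{1/2}$ (the exponents matching the two checks, and the square root coming from the standard ``conditioning on a good event'' trick, e.g. via the Gentle Measurement / cut-and-paste lemma in the purified picture), on $\bar I_t$ the fraction of basis-agreement indices is at least $\tfrac12-\delta_2$ and, \emph{among those}, the error rate with the committed string is below $p_{\max}+\delta_1$. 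I would then account separately for the event that Bob chose fewer than $(\tfrac12-\delta_2)(1-\alpha)N_0$ honest ``$\Phi^+$'' rounds overall — this is a binomial deviation of a $1/2$-coin below $\tfrac12-\delta_2$, bounded by $e^{-D_{KL}(\frac12-\delta_2\|\frac12)(1-\alpha)N_0}$, the second term of Eq.~\eqref{eq:epsilonsecuritydishonestb} — so that on the surviving branch $\bar I_t$ really does contain $N_{\textnormal{raw}}$ agreement indices and $N_{\textnormal{raw}}$ disagreement indices for Bob to form $I_0,I_1$. Define $\tilde\rho$ to be the (subnormalized) state obtained by conditioning on all these good events and then re-inserting the committed string as a classical register; conditions (1) and (2) of the lemma will be stated for this $\tilde\rho$.

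It remains to extract the min-entropy bound (condition 1). Fix a transcript $\vec\tau$ and a choice of $(J_0,J_1)$; on the good branch, for \emph{each} $b$ the index set $J_b$ consists of $N_{\textnormal{raw}}$ rounds, of which at least a $(\tfrac12-\delta_2)/(\tfrac12-\delta_2)$-corrected fraction — more precisely, translating the ``$\ge(\tfrac12-\delta_2)$ agreement among a $(\tfrac12-\delta_2)$-weight subset'' statistic down to $J_b$ — are basis-agreement rounds on which Bob's committed bit is within distance $p_{\max}+\delta_1$ of $x^A$. On the disagreement rounds inside $J_b$, Bob's committed basis differs from Alice's, so an uncertainty-relation bound (the BB84 conjugate-coding min-entropy bound, as in the standard BBCS analysis) gives that Bob's information about those $x^A$ bits is bounded: measuring in the wrong basis yields a uniformly random and independent classical bit relative to $B$. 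Quantitatively, the remaining uncertainty about $X^A_{J_b}$ given $X^A_{J_{\bar b}}$ and $B$ is at least $N_{\textnormal{raw}}\big(\tfrac12-\delta_2 - h(\tfrac{p_{\max}+\delta_1}{1/2-\delta_2})\big)$: the $\tfrac12-\delta_2$ counts the disagreement-fraction of $J_b$ on which Bob has essentially no information, and the $h(\cdot)$ term is the (chain-rule) penalty for the worst case where all committed bits could deviate, using that within the agreement part the error is at most $p_{\max}+\delta_1$ relative to weight $\tfrac12-\delta_2$. Summing the two symmetric bounds for $b=0$ and $b=1$ gives the factor $2$ in condition~(1). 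The honest-Alice assumption is what makes the pre–Step-(8) state a classical mixture over $\vec\tau$, so the conditioned states $\tilde\rho(\vec\tau,J_0,J_1)$ are genuine sub-states and the bound holds branchwise.

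The main obstacle I anticipate is making the sampling argument fully rigorous in the presence of quantum side information: one cannot simply talk about ``the error rate on $\bar I_t$'' because Bob's state is quantum and the committed string, although classical once $r$ is fixed, is entangled with $B$. The clean way through is the purified / cut-and-paste approach — purify Bob's system, use that the test measurement on $I_t$ and the uniformity of $I_t$ let one apply a quantum sampling bound (Bouman–Fehr style) or, more elementarily, a classical Serfling bound conditioned on the committed-string register followed by a Gentle Measurement step to pay the square root — and only afterwards invoke the conjugate-coding min-entropy bound on the disagreement rounds. Getting the constants in Eq.~\eqref{eq:epsilonsecuritydishonestb} to come out exactly as stated (in particular the $(1-\alpha)^2$ and the two different exponents for the QBER check versus the check-set-size check) is where the bookkeeping is heaviest; I would defer that to the appendix and here only set up the three error events and the entropic conclusion.
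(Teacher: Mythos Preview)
Your overall structure is right and tracks the paper's approach: use binding to fix a unique committed string $(\tilde\theta^{B},\tilde x^{B})$ (cost $\varepsilon_{\textnormal{bind}}(k)$), apply a quantum sampling bound so that the surviving state is supported on strings whose error rate on $\bar I_t$ is within $\delta_1$ of the tested one (cost the square-root Hoeffding term), restrict to $\theta^{A}$ with at least $N_{\textnormal{raw}}$ mismatches on $\bar I_t$ (cost the $D_{KL}$ term), and then extract min-entropy via a Bouman--Fehr comparison to a mixed state. That is exactly the skeleton of the paper's proof.

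The entropy accounting in your third paragraph, however, does not go through. You claim that \emph{each} $J_b$ carries a disagreement-fraction of roughly $\tfrac12-\delta_2$, so that $H_{\textnormal{min}}(X^{A}_{J_b}\mid X^{A}_{J_{\bar b}}B)\ge N_{\textnormal{raw}}(\tfrac12-\delta_2-h(\cdot))$ individually, and then sum. But Bob is dishonest and sends $(J_0,J_1)$ \emph{after} learning $\theta^{A}_{\bar I_t}$ in Step~(8); nothing stops him from placing all basis-mismatch indices in $J_0$ and none in $J_1$, which would make your individual bound for $b=1$ vacuous. The paper's argument instead bounds each term only by $w_{\textnormal{H}}(\theta^{\textnormal{ch}}_{J_b})-h(\cdot)N_{\textnormal{raw}}$, where $w_{\textnormal{H}}(\theta^{\textnormal{ch}}_{J_b})$ is the actual number of mismatched bases inside $J_b$, and only after summing uses $w_{\textnormal{H}}(\theta^{\textnormal{ch}}_{J_0})+w_{\textnormal{H}}(\theta^{\textnormal{ch}}_{J_1})\ge w_{\textnormal{H}}(\theta^{\textnormal{ch}}_{\bar I_t})-|J_d|\ge N_{\textnormal{raw}}-2\delta_2(1-\alpha)N_0$. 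The inequality in condition~(1) is genuinely a statement about the \emph{sum}, not about either summand. Two smaller corrections: the $D_{KL}$ term does not bound ``Bob choosing few honest rounds'' but the probability that \emph{Alice's} uniform $\theta^{A}$ produces fewer than $N_{\textnormal{raw}}$ mismatches with the already-committed $\tilde\theta^{B}$ on $\bar I_t$; and the $h(\cdot)$ factor is not a chain-rule leak but the $\log|B_b|$ penalty from the Bouman--Fehr lemma, where $|B_b|\le 2^{h((p_{\textnormal{max}}+\delta_1)/(\tfrac12-\delta_2))N_{\textnormal{raw}}}$ is the size of the post-test superposition support.
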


To reach the desired result, we will first show that a state $\tilde{\rho}_{T, \textnormal{SEP}, X^{A}, B}$ satisfying Lemma~\ref{lem:postmeasuremententropy} (1) also satisfies a tighter version Lemma~\ref{lem:dishonestreceiver}, and then use Lemma~\ref{lem:postmeasuremententropy} (2) to attain the bound for the real protocol's outcome. Since $\tilde{\rho}_{T, \textnormal{SEP}, X^{A}, B}$ is classical in both $T$ and $\textnormal{SEP}$ we can write the state of the joint system of Alice's measurement outcomes and Bob's laboratory as a mixture over all the possible transcripts at that point, that is:
\begin{equation}
    \label{eq:bigjointmixture}
    \tilde{\rho}_{X^{A} B} = \sum_{\substack{\Vec{\tau} \\ {J_0,J_1}}} P(\Vec{\tau}, J_0,J_1)  \tilde{\rho}_{X^{A}B}(\Vec{\tau}, J_0,J_1),
\end{equation}
where $P(\Vec{\tau}, J_0,J_1)$ defines a probability distribution which is dependent on Bob's behavior during the previous steps. We can now separate the $\tilde{\rho}_{X^{A}B}(\Vec{\tau}, J_0,J_1)$ in two categories depending on which of the $x^{A}_{J_0}, x^{A}_{J_1}$ is the least correlated with Bob's system. Consider the function $b(\Vec{\tau}, J_0,J_1)$ to be equal to $0$ if $H_{\textnormal{min}}^{\varepsilon} (X^{A}_{J_0}|X^{A}_{J_1} B)_{\rho(\Vec{\tau}, J_0,J_1)} \geq H_{\textnormal{min}}^{\varepsilon} (X^{A}_{J_1}|X^{A}_{J_0} B)_{\rho(\Vec{\tau}, J_0,J_1)}$, and equal to $1$ otherwise. By regrouping the terms from~\eqref{eq:bigjointmixture} for which the value of $b$ is the same, we can rewrite the joint state as:
\begin{equation}
    \label{eq:smalljointmixture}
    \tilde{\rho}_{X^{A} B} = \sum_{b \in \bs} P_b \tilde{\rho}^{b}_{X^{A}B},
\end{equation}
where, from Lemma~\ref{lem:postmeasuremententropy} and recalling that, as Lemma~\ref{lem:entropyproperties} (5) states, the min-entropy of a mixture is lower bounded by that of the term with the least min-entropy, we know that 
  
\begin{align}
    H_{\textnormal{min}} (X^{A}_{J_b}|X^{A}_{J_{\Bar{b}}} B)_{\tilde{\rho}^{b}} \geq N_{\text{raw}} \left( \frac{1}{2} - \frac{2\delta_2}{1-2\delta_2} - h\left( \frac{p_{\text{max}} + \delta_1}{\frac{1}{2} - \delta_2}\right) \right).
\end{align}
At Step (10), Alice shares with Bob the syndromes $S_0 = \syn(x^{A}_{J_{0}})$ and $ S_1 =\syn(x^{A}_{J_{1}})$. Since these syndromes are completely determined by the respective sub-strings $x^{A}_{J_{i}}$, we know that 
\begin{align}
    \label{eq:totalleak1}
    H_{\textnormal{min}} (X^{A}_{J_b}|S_{b} S_{\Bar{b}} B) &\geq H_{\textnormal{min}} (X^{A}_{J_b}|S_{b} X^{A}_{J_{\Bar{b}}} B) \nonumber \\
    &\geq H_{\textnormal{min}}(X^{A}_{J_b}|X^{A}_{J_{\Bar{b}}} B) - H_{\max}(S_{b}) \nonumber \\
    &\geq N_{\text{raw}} \left( \frac{1}{2} - \frac{2\delta_2}{1-2\delta_2} - h\left( \frac{p_{\text{max}} + \delta_1}{\frac{1}{2} - \delta_2}\right) - f \cdot h(p_{\text{max}} + \delta_1)\right),
\end{align}
where the second inequality follows from Lemma~\ref{lem:entropyproperties} (3) and (4), and the max entropy term $H_{\max}(S_{b})$ is upper bounded by the size in bits of the syndrome $\ell =  N_{\text{raw}}(f \cdot h(p_{\text{max}} + \delta_1))$. Now we can apply Equation~\eqref{eq:totalleak1} to Lemma~\ref{lem:qlhl}, which states that, for the outcomes $M_0,M_1$ of the universal hashing by Alice in Step (12) and the Bob's final system $R$ it holds that
\begin{equation}
    \label{eq:postpadishonestbob}
    \tilde{\rho}_{R,M_0,M_1}^b \approx_{\varepsilon'} \tilde{\rho}_{R,M_{\bar{b}}}^b \otimes  \mathsf{U}_{M_{b}},
\end{equation}
with
\begin{equation}
    \varepsilon' = \frac{1}{2} \cdot 2^{\frac{1}{2}\left(n - N_{\text{raw}} \left( \frac{1}{2} - \frac{2\delta_2}{1-2\delta_2} - h\left( \frac{p_{\text{max}} + \delta_1}{\frac{1}{2} - \delta_2}\right) - f \cdot h(p_{\text{max}} + \delta_1) \right) \right)}.
\end{equation}
Finally, by applying Lemma~\ref{lem:indistproperties} (3) and (4) to Equations~\eqref{eq:smalljointmixture} and ~\eqref{eq:postpadishonestbob}, and then Lemma~\ref{lem:indistproperties} (1) to Eq.~\eqref{eq:epsilonsecuritydishonestb} we get the desired result:
\begin{equation}
    \rho_{R,M_0,M_1} \approx_{\varepsilon} \sum_{b} P_b \, \rho_{R,M_{\bar{b}}}^b \otimes  \mathsf{U}_{M_{b}},
\end{equation}
with
\begin{align}
    \varepsilon &= \sqrt{2} \left(e^{-\frac{1}{2} (1-\alpha)^{2} N_{\textnormal{test}} \delta^{2}_1} +  e^{-\frac{1}{2} N_{\textnormal{check}} \delta^{2}_1} \right)^{\frac{1}{2}} + e^{-D_{KL}(\frac{1}{2} - \delta_2|\frac{1}{2})(1-\alpha) N_0}  + \varepsilon_{\textnormal{bind}}(k) \\ \nonumber
    &\quad  + \frac{1}{2} \cdot 2^{\frac{1}{2}\left(n - N_{\text{raw}} \left( \frac{1}{2} - \frac{2\delta_2}{1-2\delta_2} - h\left( \frac{p_{\text{max}} + \delta_1}{\frac{1}{2} - \delta_2}\right) - f \cdot h(p_{\text{max}} + \delta_1) \right) \right)}.
\end{align}

\subsection{Composability considerations}
\label{sec:composabiltiy}

Since OT protocols are mainly used as a subroutine of larger applications it is important to understand the composability properties of $\pi_{\textnormal{ROT}}$. In general, this is done through simulation-based composability frameworks. As mentioned in Section~\ref{sec:introduction}, this protocol is based on the BBSC construction, which has been proven secure in the quantum Universal Composability (UC) framework by Unruh~\cite{unruh10} assuming access to an ideal commitment functionality. This means that we can understand the composability properties of $\pi_{\textnormal{ROT}}$ by understanding the respective properties of the underlying weakly-interactive commitment protocol.

It is well known that UC commitments are impossible to realize in the plain model~\cite{canetti01,Maurer11}. Because of this, protocols are often analyzed within a hybrid model, where the parties have access to some base external functionality. We show in Appendix~\ref{sec:composabiltiy} that there exists a family of commitment schemes that are both weakly-interactive and UC-secure in the \textit{classical access} Random Oracle Model (ROM)~\cite{yamakawa21}. This, in tandem with the aforementioned reduction of OT to commitments, results in the following theorem:

\begin{theorem}
    \label{thm:composability}
    There exists a family of weakly-interactive commitment schemes in relation to which $\pi_{\textnormal{ROT}}$ is UC-secure in the classical access ROM.
\end{theorem}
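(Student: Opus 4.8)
The plan is to reduce the statement to producing one weakly-interactive commitment scheme that UC-realizes the (multi-)commitment functionality $\mathcal{F}_{\textnormal{COM}}$ in the classical access ROM, and then to invoke the UC composition theorem together with Unruh's analysis of BBCS. Since $\pi_{\textnormal{ROT}}$ is literally the BBCS construction with its commit/open subroutine instantiated by a triple $(\com,\open,\ver)$, and~\cite{unruh10} establishes that BBCS-based (randomized) OT is UC-secure in the $\mathcal{F}_{\textnormal{COM}}$-hybrid model, it suffices to replace the ideal functionality by a protocol that securely realizes it. Composition then transfers UC-security to $\pi_{\textnormal{ROT}}$ relative to whatever base functionality the realization uses --- here, the classical-access random oracle.

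First I would fix the candidate scheme. On the verifier's random message $r\in\bs^{n_r}$, to commit to $m$ with fresh randomness $s$ set $\ccom=H(r\,\|\,i\,\|\,m\,\|\,s)$, where $i$ is the commitment index (included for domain separation, since the same $r$ is reused across the $N_0$ commitments of Step~(4)); let $\open(m,s)=(m,s)$ and let $\ver(\ccom,(m,s),r)$ return $m$ if $H(r\|i\|m\|s)=\ccom$ and $\bot$ otherwise. This is manifestly weakly-interactive in the sense of Definition~\ref{def:sc}: the verifier speaks once, correctness is immediate, hiding holds essentially perfectly in the random-oracle idealization (the output is a uniform string independent of $m$, except on the negligible event that the adversary queries the statistically hidden point $r\|i\|m\|s$), and $\varepsilon_{\textnormal{bind}}$ is bounded by the collision probability of $H$ over the adversary's query set, which is negligible for any query-bounded adversary (even one that is internally quantum but queries $H$ only classically).

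Next I would exhibit the simulator witnessing UC-security in the classical access ROM, handling equivocation and extraction separately. Against a corrupt receiver the simulator sends a uniformly random $\ccom$ and, upon learning the committed value $m$ from $\mathcal{F}_{\textnormal{COM}}$, programs $H(r\|i\|m\|s):=\ccom$ for a fresh uniform $s$; this is indistinguishable from the real run unless the adversary had already queried the hidden point. Against a corrupt committer the simulator records all of the adversary's classical queries to $H$ and, on receiving $\ccom$, extracts the unique recorded preimage of the form $r\|i\|m\|s$ (if any) and inputs the corresponding $m$ to $\mathcal{F}_{\textnormal{COM}}$, committing to a dummy value otherwise --- safe because without such a query the adversary can open to any fixed value only with negligible probability. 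Soundness of query recording and of one-time reprogramming against adversaries that are internally quantum but query $H$ only classically is exactly what the classical-access ROM of~\cite{yamakawa21} supplies, so I would invoke those recording/reprogramming lemmas rather than reprove them.

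Finally I would assemble the pieces: the scheme UC-realizes $\mathcal{F}_{\textnormal{COM}}$ (and its multi-commitment variant, since folding $i$ into the hash makes the $N_0$ parallel instances behave independently up to negligible terms) in the classical access ROM; by UC composition this realization can be substituted into Unruh's $\mathcal{F}_{\textnormal{COM}}$-hybrid analysis of BBCS; hence $\pi_{\textnormal{ROT}}$ is UC-secure in the classical access ROM. The main obstacle I anticipate is not the commitment construction but verifying that Unruh's model and the weak-interactivity constraint are simultaneously compatible: one must check that Unruh's simulator uses the commitment only as a black-box commit/open interface (so no extra rounds beyond the single verifier message are needed once it is instantiated), and that reusing a single $r$ across all $N_0$ commitments --- rather than sampling independent setups --- does not break the realization or its composition, which is precisely why the index $i$ is carried into the hash input.
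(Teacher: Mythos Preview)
Your proposal is correct and follows the same high-level strategy as the paper: exhibit a concrete weakly-interactive commitment that UC-realizes $\mathcal{F}_{\textnormal{COM}}$ in the classical-access ROM via a straight-line extractor (recording classical oracle queries) and a straight-line equivocator (one-time reprogramming of the oracle), and then compose with Unruh's $\mathcal{F}_{\textnormal{COM}}$-hybrid analysis of BBCS. The paper's simulators do exactly this; the only substantive difference is the choice of commitment. Rather than the direct hash commitment $H(r\|i\|m\|s)$ you propose, the paper instantiates with the Naor-style LRV25 scheme $\mathbf{c}=H(\mathbf{x})\oplus\sum_i b_i\mathbf{r}_i$ of~\cite{lorunser25}, importing its correctness, hiding and binding from that reference and supplying only the UC simulators. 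Your construction is the more elementary one and yields an equally clean proof; the paper's choice keeps the Naor-commitment theme used elsewhere in the text and provides an off-the-shelf citation for the stand-alone properties.

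Two small points to tighten in your write-up. First, Definition~\ref{def:sc} demands \emph{statistical} binding over the choice of $r$ (this is precisely how $\varepsilon_{\textnormal{bind}}$ enters Lemma~\ref{lem:dishonestreceiver}), so you should phrase binding as a property of the random oracle with sufficiently long output, not merely as collision-finding hardness for a query-bounded adversary. Second, the signature of $\com$ in Definition~\ref{def:sc} does not take an index, so to match the interface of $\pi_{\textnormal{ROT}}$ you should fold $i$ into the randomness $s$ rather than treat it as a separate input.
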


In relation to Theorem~\ref{thm:composability}, we want to emphasize that, even though limiting the access to the random oracle to be classical may seem at first strong in the context of a quantum protocol (where the parties are required access to some quantum capabilities), it has little impact in the resulting security of larger MPC protocols for which the security is analyzed in the classical setting. 

Finally, we would like to stress the merits of Def.~\ref{def:ROT} by itself. In particular, this definition was studied in~\cite{schaffner10} and~\cite{konig12} and stated to ensure security when the protocol is executed sequentially. Furthermore, the indistinguishability properties stated in Def.~\ref{def:ROT} provide a very strong security guarantee and, because the protocol does not have external inputs and the indistinguishability relations include arbitrary external systems, these properties will still hold in any environment, which makes it relatively straightforward to analyze as part of bigger applications.

\section{Experimental Implementation}
\label{sec:experiment}
\subsection{Description of the Setup}
\begin{figure}
    \centering
    \includegraphics[width=0.6\linewidth]{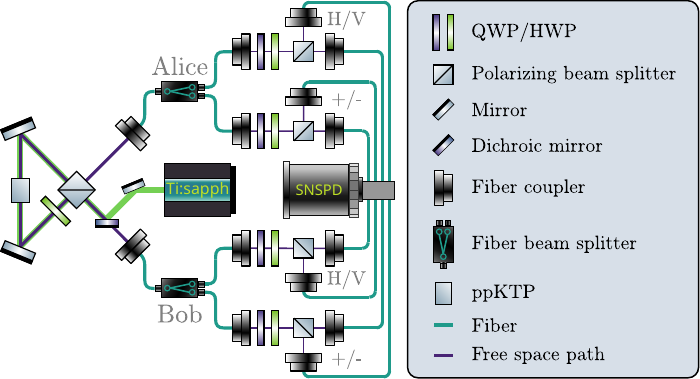}
    \caption{\textbf{Experimental setup.} Polarization-entangled photon pairs are created using spontaneous parametric down conversion. Alice's and Bob's photons are individually fiber coupled and each sent to 50/50 fiber beam splitters, which probabilistically route them to free-space polarization projection stages - one projecting onto the linear, and one onto the diagonal basis each for Bob and Alice.}
    \label{fig:experimental_setup}
\end{figure}
A schematic representation of the experimental setup can be seen in Fig.~\ref{fig:experimental_setup}. Spontaneous parametric down conversion (SPDC), attributed to Alice, is used to create polarization entangled photon pairs in the state $\ket{\Psi^+}=\frac{1}{\sqrt{2}} \left( \ket{HH}+\ket{VV }\right)$, which are coupled into optical fiber. One photon is sent through a 50/50 fiber beam splitter, probabilistically routing it to one of two polarization projection stages. There, a quarter-wave plate (QWP), a half-wave plate (HWP) and a polarizing beam splitter (PBS) are used to project the photons state onto the linear ($H/V$) or diagonal ($+/-$) basis, respectively. All photons are sent to superconducting nanowire single photon detectors (SNSPDs) and their arrival time is recorded using a time tagging module (TTM).
The second photon of the state  $\ket{\Psi^+}$, attributed to Bob, travels through an equivalent probabilistic projection setup.

Entangled photon pairs are generated using collinear type-II SPDC in a periodically poled $\mathrm{KTiOPO_4}$-crystal with a poling period of \SI{46.2}{\micro\meter} inside of a sagnac interferometer. The pump light is produced by a pulsed Ti:Sapphire laser (Coherent Mira 900HP) with a pulse width of \SI{2.93}{\pico\second} and a central wavelength of $\lambda_p = \SI{773}{\nano\meter}$, creating degenerate single-photon pairs at $\lambda_s = \lambda_i = \SI{1546}{\nano\meter}$. The laser's inherent pulse repetition rate of \SI{76}{\mega\hertz} is doubled twice to \SI{304}{\mega\hertz} using a passive temporal multiplexing scheme \cite{Greganti2018}. More precisely, for
$n$ simultaneously emitted pairs and
$k$ multiplexing stages, each doubling the repetition rate, higher-order pair production events are attenuated by a factor of $1 / (2^k)^{n-1} $. In our experiment, $k=2$, so this scheme reduces the probability of emitting a double pair ($n=2$) by a factor of $4$ compared to a source relying on the pump's inherent repetition rate, while the single-pair emission probability remains constant. Finally, about \SI{100}{\meter} of single mode fiber separate the experimental setup from the \SI{1}{\kelvin} cryostat housing the SNSPDs with a detection efficiency of around \SI{95}{\percent} and a dark-count rate of around \SI{300}{\hertz}.

We note that our entanglement-based implementation presents two main technological advantages over prepare-and-measure configurations:
\begin{itemize}
\item It circumvents the need for a certified quantum random number generator or for classical pseudorandomness that may compromise the security of the quantum phase: instead of feeding random (or pseudorandom) sequences into the active polarization modulator of a prepare-and-measure scheme, the choice of BB84 state is performed in a passive and uniformly random way by the beamsplitters present in both Alice and Bob's measurement setups (also known as "remote state preparation"). 
\item In free space, it avoids the need for active polarization modulation, which imposes a strict upper limit on the protocol's repetition rate governed by the bandwidth of the Pockels Cell and its high-voltage amplifier, typically achieving a few hundred kHz to a few hundred MHz \cite{HJW:Arx25}. By generating entangled photons that are passively projected onto one of the four BB84 states instead, our OT rate is not limited by any active prepare-and-measure encoding routine, but only by our picosecond-pulsed pump rate of around $300$ MHz. With other SPDC sources reaching the GHz \cite{JSM:SciRep14} to tens of GHz regimes \cite{WTF:OptExp20}, our passive state preparation routine can perform even better.
\end{itemize}

\subsection{Practical protocol}\label{sec:practical_protocol}

The protocol is identical to that from $\pi_{\textnormal{ROT}}$ as described in Section~\ref{sec:protocol}, with the following amendments:
\begin{itemize}

    \item The parties agree on an additional parameter $p_{\textnormal{multi}}$ -- the accepted ratio of multi-photon events. 
    \item During the quantum phase of the protocol, Alice may observe detection patterns that are incompatible with the emission of a single photon pair. Instead of sharing $N_0$ states in Step (1), she continues sharing states until, after agreeing on coincidence time-tags with Bob, the parties obtain $N_0$ coincidences associated with single-photon events on Alice's side. Let $N_{\textnormal{tot}}$ be the number of coincidences obtained at this point and $N_{\textnormal{multi}} = N_{\textnormal{tot}} - N_0$. Alice computes the value
    \begin{equation}
        \label{eq:pmultiphoton}
        p'_{\textnormal{multi}} = \frac{N_{\textnormal{multi}} }{3 N_{\textnormal{tot}}},
    \end{equation}
    and aborts the protocol if $p'_{\textnormal{multi}} \geq p_{\textnormal{multi}}$.
    \item Similarly to Alice, Bob may also observe multi-click patterns. While reporting its detection events he uses the following rules:
    \begin{enumerate}[label=(\alph*)]
        \item $1$ click: assign the correct measured bit value and report a successful round
        \item $2$ clicks from the same basis: assign a random bit value to the measurement result and report a successful round
        \item any other click pattern: report an unsuccessful round
    \end{enumerate}
\end{itemize}

\subsection{Practical security}

Any photonic implementation of quantum cryptography presents experimental imperfections, which can be exploited by dishonest parties to enhance their cheating probability and violate ideal security assumptions. Important examples of such imperfections include multiphoton noise, lossy/noisy quantum channels, non-unit detection efficiency and detector dark counts. 

\textit{Dishonest sender.} In our experiment, threshold detectors cannot resolve the incident photon number, and unexpected click patterns can occur. For example, several of the four detectors may simultaneously click for a given round, which leads to an inconclusive measurement outcome that has to be back-reported by the honest receiver. This in turn allows a dishonest sender to gain a significant amount of information about the receiver's measurement basis choice. Adopting the reporting strategy presented above makes the protocol secure against this type of attack. For a complete analysis of both the attack and its countermeasures, see \cite{bozzio21}.

\textit{Dishonest receiver.} Due to Poisson statistics in the SPDC process, emission of double pairs can occur for a given round. When the two photons kept by the sender are projected onto the same state (i.e. only a single click is recorded in the four detectors), the two photons sent to Bob have the same polarization. In this case, a dishonest receiver can split the two photons and measure one in each basis. Assuming $4$ detectors with equal efficiencies (which can be guaranteed in practice by appropriate attenuation the higher efficiency ones), and using the fact that for an SPDC source, whenever multiple pairs are produced, there is no correlation among them , we know that the number of undetected multi-photon events is approximately $\frac{1}{3}$ of the number of detected ones. We can then estimate the probability $p'_{\textnormal{multi}}$ of an accepted coincidence to be associated with a multi-photon event with Eq.~\eqref{eq:pmultiphoton}.

Note that the statistical check performed by Alice in the second step of the amended protocol (Section \ref{sec:practical_protocol}) ensures security under the assumption that there is no coherence in the photon-number basis. This is the case in our implementation, since SPDC produces states of the form $\sum_{n=0}^{\infty}\sqrt{c_n}\ket{n}_1\ket{n}_2$ in the number basis $\{\ket{n}\}$ \cite{Braczyk2010}, leaving the individual subsystems in incoherent mixtures of the form $\sum_{n=0}^{\infty}c_n\ket{n}\bra{n}$.  

To account for the leakage caused by undetected multi-photon emissions to our OT rate expression, we effectively grant Bob an amount of information about Alice's measurement outcomes equal to the number of indices in $I_{\bar{I}_t}$ associated to multi-photon events, upper bounded by $p_{\textnormal{multi}} (1- \alpha) N_0 $ for large $N_0$. Subtracting this leak to the total entropy expression in Eq.~\eqref{eq:totalleak1} leads to a version of Lemma~\ref{lem:dishonestreceiver} for security against dishonest receiver corrected for the experimental implementation, which differs from the theoretical version by replacing Eq~\eqref{eq:epsilondisonestbob} with
\begin{align}
    \label{eq:epsilondisonestbobexp}
    \varepsilon'_{\textnormal{exp}} &= \sqrt{2} \left(e^{-\frac{1}{2} (1-\alpha)^{2} N_{\textnormal{test}} \delta^{2}_1} +  e^{-\frac{1}{2} N_{\textnormal{check}} \delta^{2}_1} \right)^{\frac{1}{2}} + e^{-D_{KL}(\frac{1}{2} - \delta_2|\frac{1}{2})(1-\alpha) N_0}  + \varepsilon_{\textnormal{bind}}(k) \\ \nonumber
    &\quad \quad  + \frac{1}{2}\cdot 2^{\frac{1}{2}\left(n  - N_{\textnormal{raw}} \left( \frac{1}{2} - \frac{2\delta_2}{1-2\delta_2} - h\left( \frac{p_{\textnormal{max}} + \delta_1}{\frac{1}{2} - \delta_2}\right) - f \cdot h(p_{\textnormal{max}}+\delta_1) -\frac{p_{\textnormal{multi}}}{\frac{1}{2}-\delta_2} \right) \right)},
\end{align}    

\section{Discussion}

Using Naor's protocol~\cite{naor91} in conjunction with a linear time OWF (such as a hash function fron the SHA3 or BLAKE family), it is possible to implement the required 2-bit commitment in linear time in $k$. On the other hand, using an LDPC code with soft-decision decoding and hash based verification, one can implement an IR scheme which is linear in both the block size $N_{\textnormal{raw}}$ (and therefore $N_0$) and $k'$.  Finally, by taking the universal hash set $\textnormal{\textbf{F}}$ to be the set of Toeplitz matrices of size $N_{\textnormal{raw}} \times n$, and using the FFT algorithm for matrix-vector multiplication, the computation of the output strings can be done in time $O(N_{\textnormal{raw}} \log(N_{\textnormal{raw}}))$. Considering that the protocol requires $N_0$ commitments and all the remaining computations of random subsets and checks can be done in linear time in $N_0$, the total protocol running time is $O(N_{0} (k + k' + \log(N_{0}))$.

Regarding the practicality of implementing $\pi_{\textnormal{QROT}}$, the protocol is designed to be compatible with BB84-based QKD setups, both from the physical layer up to the post-processing, only requiring the addition of the commitment scheme. The most important difference to note is that $\pi_{\textnormal{QROT}}$ has significantly lower tolerance for Qubit Error Rate (QBER). While most common QKD protocols can produce keys through QBERs above $10\%$, this protocol is limited to a maximum of $2.8\%$. This comparatively reduces the distances at which the protocol can be successful. However, it is important to note that, as opposed to key distribution between trusting parties, there are legitimate use-cases for OT at short range. While being in proximity to each other can help two trusting parties isolate themselves from a third party eavesdropper, mistrusting parties do not gain anything (security wise) from being in the same place while attempting to do MPC, making the protocol useful regardless of the distance between the users.

Comparisons between classical and quantum protocols can be difficult because physical/technological assumptions, such as access to quantum communication or noisy quantum storage, do not straightforwardly compare with computational hardness assumptions. Furthermore, there is no natural way of quantitatively comparing statistical versus computational security. We can, however, contrast the (dis-)advantages of using a computationally-secure quantum OT protocol as compared to both fully classical computationally-secure protocols, as well as statistically-secure quantum ones.

Classical OT protocols based on asymmetric cryptography comprise the overwhelming majority of current real-world implementations of OT. The obvious main advantage of quantum OT is the weaker computational hardness assumption (OWF vs asymmetric cryptography), while the main advantage of current post-quantum classical OT implementations is speed. As shown in Fig.~\ref{fig:expperformance}, the presented experimental setup is able to produce up to $~0.10$ OT/s, which pales in comparison to contemporary classical protocols, such as~\cite{mansy2019, mi2018, mi2019, branco2021}, that can achieve upwards of $10^5$ OT/s (not including latency between parties) with current off-the-shelf hardware (for more details, see~\cite{branco2021}). This difference can be mitigated by the use of OT extension algorithms, as the difference in speed would only matter during the generation of the base OTs. Note that in this case one should use a OT extension that matches the computational assumption of this work, such as~\cite{costa21}.

Quantum protocols, both discrete variable (DV)~\cite{erven2014} and continuous variable (CV)~\cite{furrer2018}, have been shown to achieve statistically-secure OT in the Quantum Noisy-Storage model (QNS). Their experimental implementations show comparable values of quantum communication cost in terms of shared signals: $10^8$ (no memory encoding assumption), and $10^5$ (Gaussian encoding) for CV, and $10^7$ for DV. As shown in Fig.~\ref{fig:MinNvsEPlot}, our protocol requires $10^6$ quantum signals when matching their security ($\varepsilon = 10^{-7}$), which improves upon the alternatives when no additional assumption on the memory encoding of the adversary is made. Less straightforward to compare is the strength of the assumptions of noisy storage and OWFs. We note that the existence of OWFs is an assumption that permeates modern cryptography, from block cipher encryption and message authentication up to public-key cryptography protocols~\cite{Goldreich06}, which makes $\pi_{\textnormal{QROT}}$ more suited to be introduced in current cipher suites than protocols with alternative assumptions. In particular, as noted above, OWFs are required for OT extension algorithms. A summary of comparisons between the different approaches can be found in Fig.\ref{tab:otcomarison}.

\begin{table*}[htbp]
\centering
\renewcommand{\arraystretch}{1.8}
\begin{tabular}{p{3cm}|>{\centering\arraybackslash}p{3cm}|>{\centering\arraybackslash}p{3cm}|>{\centering\arraybackslash}p{3cm}|>{\centering\arraybackslash}p{3cm}}
\hline
\textbf{Protocol} & \textbf{Type} & \textbf{Assumption} & \textbf{Quantum Cost} & \textbf{Security} \\
\hline\hline

This work & \begin{tabular}{>{\centering\arraybackslash}p{3cm}}Quantum\\ Discrete Variable\end{tabular} & OWF & $\mathcal{O}(N)$ & \begin{tabular}{>{\centering\arraybackslash}p{3cm}}Indistinguishability\\ UC ROM\end{tabular} \\
\hline

GLSV21~\cite{grilo21}* & \begin{tabular}{>{\centering\arraybackslash}p{3cm}}Quantum\\ Discrete Variable\end{tabular} & OWF & $Poly(N)$ & Stand-alone plain model\\
\hline

S10~\cite{schaffner10,erven2014} & \begin{tabular}{>{\centering\arraybackslash}p{3cm}}Quantum\\ Discrete Variable\end{tabular} & QNS & $\mathcal{O}(N)$ & Indistinguishability \\
\hline

FGSPSW18~\cite{furrer2018} & \begin{tabular}{>{\centering\arraybackslash}p{3cm}}Quantum\\ Continuous Variable\end{tabular} & QNS & $\mathcal{O}(N)$ & Indistinguishability \\
\hline

MR19~\cite{mansy2019} & Classical & DDH & -- & Stand-alone ROM \\
\hline

BFGMMS21~\cite{branco2021} & Classical & RLWE & -- & UC ROM \\
\hline

P16~\cite{pitalua16}* & \begin{tabular}{>{\centering\arraybackslash}p{3cm}}Quantum/Relativistic\\ Discrete Variable\end{tabular} & SLS& $\mathcal{O}(N)$ & Other \\
\hline

\end{tabular}
\caption{Comparison of our work with other approaches for OT. $N$ denotes the respective security parameter. Acronyms for assumptions are as follows: OWF - One Way Functions; QNS - Quantum Noisy Storage; DDH - Decisional Diffie-Helmann; RLWE - Ring Learning With Errors - SLS - Space-Like Separation enforced. Protocols marked with * do not have a reference experimental implementation at the time of writing.}
\label{tab:otcomarison}
\end{table*}


Regarding potential improvements and further work, we can identify two main directions to build upon this work: performance and security. Regarding performance, we note that dominant term in the expression for $\varepsilon_{max}$ is the one associated with the significance of the parameter estimation (the first term in Eq.~\ref{eq:epsilondisonestbob}). This translates into the relatively large values of $N_0$ needed to achieve adequate security, which was the bottleneck in the performance of our implementation. One way to reduce the number of signals needed per OT is to modify the protocol to perform many concurrent ROTs in a single run. This would mean performing one single estimation, albeit of a larger sample, that would work for many OTs in such a way that the required number of signals per ROT is decreased. On the topic of increasing security two main directions come to mind. First, we can consider the constructions of \textit{collapsing} hash functions proposed in~\cite{unruh16, zhandry22} to implement statistically hiding, computationally collapse binding commitments, which in turn allow for OT protocols that feature forward security (the OT remains secure even if the underlying hash function can be attacked after the commit/open phase of the protocol). The second direction would be a deeper exploration of the composable security of the protocol in the ROM. This can come from generalizing Theorem~\ref{thm:composability} for any weakly-interactive commitments (currently the proof applies only to the LRV25 construction), or applying the techniques developed in~\cite{agarwal23} to prove UC security of commitments in the quantum ROM to remove the adversary's limitation of classical access to the oracle. From the practical implementation perspective, it seems natural to integrate quantum OT into both QKD setups for a unified physical layer capable of providing secure communication and computation powered by OT extension and MPC algorithms, bringing the benefits of quantum OT closer to real world usage. 


\section{Acknowledgements}

M.L., P.M., and N.P. acknowledge Fundação para a Ciência e Tecnologia (FCT), Instituto de Telecomunicações Research Unit, ref. UID/50008/2020, and PEst-OE/EEI/LA0008/2013, as well as FCT projects QuantumPrime reference PTDC/EEI-TEL/8017/2020. M.L. acknowledges PhD scholarship PD/BD/114334/2016. N.P. acknowledges the FCT Estímulo ao Emprego Científico grant no. CEECIND/04594/2017/CP1393/CT000. P.S., M.B. and P.W. acknowledge funding from the European Union’s Horizon Europe research and innovation program under Grant Agreement No. 101114043 (QSNP), along with the Austrian Science Fund FWF 42 through [F7113] (BeyondC) and the AFOSR via FA9550-21-1-0355 (QTRUST). D.E. was supported by the JST Moonshot R\&D program under Grant JPMJMS226C.
M.G. acknowledges FCT Portugal financing refs. UIDB/50021/2020 and UIDP/50021/2020 (resp. DOI 10.54499/UIDB/50021/2020 and 10.54499/UIDP/50021/2020).

\begin{appendices}

\section{Preliminaries}
\label{sec:preliminaries}

\subsection{Quantum computational efficiency and distinguishability}

We model the quantum capabilities of parties through programs running on quantum computers, for which we adopt a model based on deterministic-control quantum Turing Machines~\cite{mateus17}. For the purposes of the following definitions, a quantum computer is a device that has a classical interface and a quantum part, which contains the quantum memory registers available to the party. The classical interface has the capabilities of a classical computer augmented with the ability to perform a predefined universal set of quantum operators on the quantum memory registers and perform measurements in the canonical (computational) basis. Given a specified type of quantum computer, a quantum program is a classical description of a set of instructions to be run by the computer, including the quantum operations and measurements to be executed in the quantum part, as well as any classical computation. Quantum programs can be compared with probabilistic classical programs as they both have natural numbers as inputs/outputs. When a quantum computer runs the program $T$ with input $x \in \mathbb{N}$, we assume that the quantum part of the computer starts with some predefined initial state, performs a sequence of operations on its quantum registers, and upon halting, it outputs $T(x)\in \mathbb{N}$ on its classical interface by reading the appropriate registers associated with the program's output. Each execution of a quantum program is then associated to a quantum operation, which is the result of all the operations performed on the quantum part during the execution of the program.

\begin{definition} (Computational efficiency) \\
\label{def:efficient}
Let $T$ be a quantum program. We say that $T$ is computationally efficient (or polynomial-time) if there exists a polynomial $P$ such that the running time of $T(x)$ is $O(P(x))$. 
\end{definition}

\begin{definition} (Distinguishing Advantage) \\
\label{def:Dadvantage}
Let $X_{1}, X_{2}$ be two random variables with values in $\mathbb{N}$. For any quantum program $T$, the {\em distinguishing advantage} of $X_{1}, X_{2}$ using $T$ is defined as
\begin{equation}
    \textnormal{Adv}_{T}(X_{1}, X_{2}) =\big| \Pr[T(X_1)=1] - \Pr[T(X_2)=1] \big|,
\end{equation}
Analogously, let $\rho_{1}, \rho_{2} \in \mathcal{D}(\mathcal{H})$. For any quantum program $T$, the {\em distinguishing advantage} of $\rho_{1}, \rho_{2}$ using $T$ is defined as
\begin{equation}
    \textnormal{Adv}_{T}(\rho_{1}, \rho_{2}) =\big| \Pr[T^{(\rho_{1})}=1] - \Pr[T^{(\rho_{2})}=1] \big|,
\end{equation}
where $T^{(\rho)}$ denotes the classical output of the program starting with the quantum state $\rho$ and zero classical input.
\end{definition}

\begin{definition} (Indistinguishability - Finite) \\
\label{def:distinguishableQ}
Let $\rho_1, \rho_2 \in \mathcal{D}(\mathcal{H})$ and $\varepsilon \geq 0$. We say that $\rho_{1}$ and $\rho_{2}$ are { $\varepsilon$-indistinguishable}, denoted by $\rho_{1} \approx_{\varepsilon} \rho_{2}$, whenever
\begin{equation}
\!\!\!    \textnormal{Adv}_{T} \big(\rho_{1}, \rho_{2} \big) \leq \varepsilon,  \,\, \textnormal{for all quantum programs } T.
\end{equation}
$\varepsilon$-indistinguishability for random variables is defined analogously.
\end{definition}

As the following proposition states, to show that two states are $\varepsilon$-indistinguishable, it is enough to upper bound their trace distance $D$. (for more detail on the relationship of these quantities, see~\cite{helstrom76, fuchs99}).
\begin{proposition}
    \label{prop:L1_indistinguishability}
    For any pair of quantum states $\rho_1, \rho_2 \in \mathcal{D}(\mathcal{H})$ it holds that
    \begin{equation}
        \rho_1 \approx_{D(\rho_1, \rho_2)} \rho_2.
    \end{equation}
\end{proposition}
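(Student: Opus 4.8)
The plan is to unfold the definition of $\varepsilon$-indistinguishability (Definition~\ref{def:distinguishableQ}) and bound the distinguishing advantage of an \emph{arbitrary} quantum program by the trace distance, using two standard facts about $D$: its variational (Helstrom) characterization and its monotonicity under quantum operations. No efficiency assumption on the distinguisher is needed, consistently with the quantification over all quantum programs in Definition~\ref{def:distinguishableQ}.

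First I would recall, from the computational model described above, that running a quantum program $T$ on initial state $\rho$ with zero classical input is associated with a quantum operation $\mathcal{E}$ — a completely positive trace-preserving map that encompasses the preparation of ancillas, the unitary gates from the universal set, and any intermediate measurements together with their classical control — after which the classical output is obtained by reading the output registers. Hence the event $T^{(\rho)}=1$ is the outcome of a two-outcome measurement $\{P,\, I-P\}$ with $0 \le P \le I$ applied to $\mathcal{E}(\rho)$, so that $\Pr[T^{(\rho)}=1] = \Tr[P\,\mathcal{E}(\rho)]$. Consequently, for a fixed program $T$,
\[
    \textnormal{Adv}_{T}(\rho_1,\rho_2) = \big|\Tr[P\,\mathcal{E}(\rho_1)] - \Tr[P\,\mathcal{E}(\rho_2)]\big| = \big|\Tr\!\big[P(\mathcal{E}(\rho_1)-\mathcal{E}(\rho_2))\big]\big|.
\]

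Next I would invoke the Helstrom bound, i.e. the variational characterization $D(\sigma_1,\sigma_2) = \max_{0 \le Q \le I} \big|\Tr[Q(\sigma_1-\sigma_2)]\big|$, applied with $\sigma_i = \mathcal{E}(\rho_i)$ and $Q = P$, to obtain $\textnormal{Adv}_{T}(\rho_1,\rho_2) \le D(\mathcal{E}(\rho_1),\mathcal{E}(\rho_2))$. Then, using monotonicity (contractivity) of the trace distance under CPTP maps, $D(\mathcal{E}(\rho_1),\mathcal{E}(\rho_2)) \le D(\rho_1,\rho_2)$. Chaining the two inequalities gives $\textnormal{Adv}_{T}(\rho_1,\rho_2) \le D(\rho_1,\rho_2)$. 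Since $T$ was an arbitrary quantum program, Definition~\ref{def:distinguishableQ} yields $\rho_1 \approx_{D(\rho_1,\rho_2)} \rho_2$.

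The only real subtlety — and the step I would be most careful with — is the first one: justifying that the probability that a quantum program outputs $1$ is genuinely of the form $\Tr[P\,\mathcal{E}(\rho)]$ for a single CPTP map $\mathcal{E}$ and a single effect $P$. This requires the observation that adaptive, classically-controlled intermediate measurements, loops, and halting conditions still compose into one quantum operation from the input register to the final state of the output register — for instance by deferring measurements, or by carrying the classical control register along as part of the quantum state — so that $\Pr[T^{(\rho)}=1]$ is an affine functional of $\rho$ of the stated form. Once this is granted, the remainder is an immediate application of the two quoted properties of $D$ (both referenced via~\cite{helstrom76, fuchs99}).
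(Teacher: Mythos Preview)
Your proof is correct and is precisely the standard argument: represent the output event as a POVM effect applied after a CPTP map, bound by the Helstrom variational formula for $D$, then contract via monotonicity. The paper does not actually supply its own proof of this proposition; it merely states it and defers to~\cite{helstrom76, fuchs99}, so there is nothing further to compare against.
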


\begin{definition} (Indistinguishability -- Asymptotical) \\
\label{def:distinguishableQA}
Let $\lbrace \rho_1^{(k)} \in \mathcal{D}(\mathcal{H}_k) \rbrace$ and $\lbrace \rho_2^{(k)} \in \mathcal{D}(\mathcal{H}_k) \rbrace$ be two families of density operators. We say that the two families are {\em statistically indistinguishable} if there exists a negligible function $\varepsilon(k) \geq 0$ such that
\begin{equation}
    \rho_1^{(k)} \approx_{\varepsilon(k)} \rho_{2}^{(k)} \quad \textnormal{for all } k \in \mathbb{N}.
\end{equation}
Furthermore, we say the two families are {\em computationally indistinguishable} if for every efficient quantum program $T$, there exists a negligible function $\varepsilon_{T}(k) \geq 0$ such that
\begin{equation}
    \textnormal{Adv}_{T} \big( \rho_1^{(k)}, \rho_{2}^{(k)} \big) \leq \varepsilon_{T}(k) \quad \textnormal{for all } k \in \mathbb{N}.
\end{equation}
Statistical and computational indistinguishability for random variables is defined analogously.
\end{definition}

Recall from Section~\ref{sec:basicdefinitions} that, when the parameter $k$ is implicit, we may omit the explicit dependence on $k$ and use $\approx$ and $\approx^{(c)}$ for statistical and computational indistinguishability, respectively. We now turn our attention to the properties of indistinguishable states. It is worth noting that computational indistinguishability is only meaningful in terms of information security when the adversary is assumed to have limited computational capabilities. It is important then to define the type of quantum operations such adversary can perform:

\begin{definition} (Efficient quantum operation) \\
    \label{def:efficientQ}
    We say that a family $\lbrace \mathcal{E}^{(k)} \rbrace_{k=1}^{\infty}$ of quantum operations is efficient if there exists an efficient quantum program $T$ such that, for each $k$, $\mathcal{E}^{(k)}$ is the associated operation applied to the quantum part of the machine while running $T$ on input $k$
\end{definition}

The following properties are straightforward to prove from Definitions~\ref{def:distinguishableQ} and~\ref{def:distinguishableQA} and the properties of trace distance:
\begin{lemma} (Properties of indistinguishable states~I) \\
Let $\rho_{1}, \rho_{2},\rho_{3} \in \mathcal{D}(\mathcal{H})$:
    \begin{enumerate}
        \item $\rho_{1} \approx_{\varepsilon} \rho_{2} \wedge \rho_{2} \approx_{\varepsilon'} \rho_{3} \Rightarrow \rho_{1} \approx_{\varepsilon + \varepsilon'} \rho_{3}$.
        \item $\rho_{1} \approx_{\varepsilon} \rho_{2} \wedge \sigma_{1} \approx_{\varepsilon'} \sigma_{2} \Rightarrow \rho_{1}\otimes\sigma_{1} \approx_{\varepsilon + \varepsilon'}  \rho_{2}\otimes\sigma_{2}$.
        \item Let $x\in \mathcal{X}$. For any probability distribution $P_x$, assume that $\big( \forall x \in \mathcal{X} \big) \: \rho_1^{x} \approx_{\varepsilon^{x}} \rho_2^{x} $. Then 
        \begin{equation*}
            \sum_{x \in \mathcal{X}}P_x\rho_1^{x} \approx_{\varepsilon^{\textnormal{max}}} \sum_{x \in \mathcal{X}}P_x\rho_2^{x} \:\:\textnormal{where}\:\: \varepsilon^{\textnormal{max}} = \max_{x \in \mathcal{X}}\lbrace \varepsilon^{x} \rbrace.
        \end{equation*}
        \item $\rho_1 \approx_{\varepsilon} \rho_2 \Rightarrow \mathcal{E}(\rho_1) \approx_{\varepsilon} \mathcal{E}(\rho_2)$, for any completely positive, trace non-increasing map $\mathcal{E}$.
    \end{enumerate} 
    \label{lem:indistproperties}
\end{lemma}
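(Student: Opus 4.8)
The plan is to reduce all four items to elementary properties of the trace distance $D(\rho,\sigma)=\tfrac{1}{2}\|\rho-\sigma\|_1$. The bridge is the fact that, optimised over all quantum programs, the distinguishing advantage between $\rho_1$ and $\rho_2$ equals $D(\rho_1,\sigma)$ more precisely $D(\rho_1,\rho_2)$: one direction, $\textnormal{Adv}_T(\rho_1,\rho_2)\le D(\rho_1,\rho_2)$ for every $T$, is exactly Proposition~\ref{prop:L1_indistinguishability}, and the other (some program realises the two-outcome Helstrom measurement and achieves advantage arbitrarily close to $D(\rho_1,\rho_2)$) is the Helstrom--Holevo bound, cf.~\cite{helstrom76,fuchs99}. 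Hence $\rho_1\approx_\varepsilon\rho_2$ is equivalent to $D(\rho_1,\rho_2)\le\varepsilon$, and it is enough to verify the four assertions at the level of trace distance. Since the ambient operators may be subnormalised, I would remark once that every trace-norm fact used below is insensitive to normalisation: it either holds verbatim for positive subnormalised operators, or it picks up harmless prefactors $\Tr[\cdot]\le 1$.

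First I would handle items~(1) and~(3), which in fact follow straight from Definition~\ref{def:Dadvantage} without the bridge. For~(1): for every $T$, $\textnormal{Adv}_T(\rho_1,\rho_3)\le\textnormal{Adv}_T(\rho_1,\rho_2)+\textnormal{Adv}_T(\rho_2,\rho_3)\le\varepsilon+\varepsilon'$ by the triangle inequality for $|\cdot|$, hence $\rho_1\approx_{\varepsilon+\varepsilon'}\rho_3$; equivalently this is the triangle inequality for $D$. For~(3): a quantum program determines an effect $E_T$ with $\Pr[T^{(\rho)}=1]=\Tr[E_T\rho]$, which is linear in $\rho$, so $\textnormal{Adv}_T\big(\sum_x P_x\rho_1^x,\sum_x P_x\rho_2^x\big)\le\sum_x P_x\,\textnormal{Adv}_T(\rho_1^x,\rho_2^x)\le\sum_x P_x\varepsilon^x\le\varepsilon^{\textnormal{max}}$; equivalently this is joint convexity of $D$ together with $\sum_x P_x\varepsilon^x\le\max_x\varepsilon^x$.

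Next I would prove~(4), and then obtain~(2) as a corollary. For~(4): present $\mathcal{E}$ through a Stinespring-type dilation, i.e.\ as conjugation by an isometry into an enlarged system, followed by a projection that records a ``success'' flag (this accounts for the possible trace decrease), followed by a partial trace. Trace distance does not increase under trace-preserving completely positive maps, nor under projecting onto or discarding such a flag, so $D(\mathcal{E}(\rho_1),\mathcal{E}(\rho_2))\le D(\rho_1,\rho_2)\le\varepsilon$; equivalently, the program ``apply $\mathcal{E}$, then run $T$'' is again a quantum program $T'$, giving $\textnormal{Adv}_T(\mathcal{E}(\rho_1),\mathcal{E}(\rho_2))=\textnormal{Adv}_{T'}(\rho_1,\rho_2)\le\varepsilon$. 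For~(2): apply~(4) to the trace-preserving maps $\rho\mapsto\rho\otimes\sigma_1$ and $\sigma\mapsto\rho_2\otimes\sigma$ to get $\rho_1\otimes\sigma_1\approx_\varepsilon\rho_2\otimes\sigma_1$ and $\rho_2\otimes\sigma_1\approx_{\varepsilon'}\rho_2\otimes\sigma_2$, and chain with~(1). At the trace-distance level this is simply $\|(\rho_1-\rho_2)\otimes\sigma_1\|_1+\|\rho_2\otimes(\sigma_1-\sigma_2)\|_1=\Tr[\sigma_1]\,\|\rho_1-\rho_2\|_1+\Tr[\rho_2]\,\|\sigma_1-\sigma_2\|_1$, using multiplicativity of $\|\cdot\|_1$ under tensor products and $\Tr[\sigma_1],\Tr[\rho_2]\le1$.

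I expect the only point beyond pure bookkeeping to be the half of the advantage/trace-distance identification that is not Proposition~\ref{prop:L1_indistinguishability}, together with the mild technicality that the Helstrom measurement (and a general $\mathcal{E}$) may be realisable only up to arbitrarily small gate-approximation error by a program over the fixed universal set; both are standard, which is exactly why the statement is phrased as following ``from the properties of trace distance'' and points to \cite{helstrom76,fuchs99}.
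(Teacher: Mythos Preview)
Your proposal is correct and aligns with the paper's own treatment: the paper does not spell out a proof but simply states that the lemma ``is straightforward to prove from Definitions~\ref{def:distinguishableQ} and~\ref{def:distinguishableQA} and the properties of trace distance,'' which is precisely the route you take. Your direct advantage-level arguments for items~(1), (3), and~(4) (triangle inequality for $|\cdot|$, linearity of $\rho\mapsto\Tr[E_T\rho]$, and pre-composing $T$ with $\mathcal{E}$) already establish everything without invoking the Helstrom direction of the advantage/trace-distance identification, so the caveat you raise about gate-approximation of the optimal measurement is in fact unnecessary for the lemma as stated; the bridge via Proposition~\ref{prop:L1_indistinguishability} alone (the inequality direction) suffices throughout.
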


\begin{lemma} (Properties of indistinguishable states~II) \\
Let  $\lbrace \rho_{1}(k) \rbrace, \lbrace \rho_{2}(k) \rbrace, \lbrace \rho_{3}(k) \rbrace$ be families of density operators parameterized by $k=1,2,\ldots$ The following statements hold for asymptotic computational indistinguishability: 
    \begin{enumerate}
        \item $\rho_{1} \approx^{(c)} \rho_{2} \wedge \rho_{2} \approx^{(c)} \rho_{3} \Rightarrow \rho_{1} \approx^{(c)} \rho_{3}$.
        \item $\rho_{1} \approx^{(c)} \rho_{2} \wedge \sigma_{1} \approx^{(c)} \sigma_{2} \Rightarrow \rho_{1}\otimes\sigma_{1} \approx^{(c)}  \rho_{2}\otimes\sigma_{2}$.
        \item Let $x\in \mathcal{X}$. For any probability distribution $P_x$, assume that $\big( \forall x \in \mathcal{X} \big) \: \rho_1^{x} \approx^{(c)} \rho_2^{x} $. Then 
        \begin{equation*}
            \sum_{x \in \mathcal{X}}P_x\rho_1^{x} \approx^{(c)} \sum_{x \in \mathcal{X}}P_x\rho_2^{x}.
        \end{equation*}
        \item $\rho_1 \approx^{(c)} \rho_2 \Rightarrow \mathcal{E}(\rho_1) \approx^{(c)} \mathcal{E}(\rho_2)$, where $\lbrace \mathcal{E}^{(k)} \rbrace$ is an efficient family of quantum operations acting on the respective $\rho_i(k)$.
    \end{enumerate} 
    \label{lem:cindistproperties}
\end{lemma}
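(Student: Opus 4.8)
The plan is to obtain all four items directly from Definition~\ref{def:distinguishableQA}, using only the triangle inequality for distinguishing advantages and the fact that a finite sum (or scaling) of negligible functions is again negligible; under this lens items (1)--(3) are the computational, asymptotic shadows of the finite bounds of Lemma~\ref{lem:indistproperties}, while (4) is a one-line reduction. For \textbf{transitivity (1)} I would fix an arbitrary efficient program $T$, invoke the hypotheses to get negligible $\varepsilon_T^{12},\varepsilon_T^{23}$ bounding $\textnormal{Adv}_T(\rho_1^{(k)},\rho_2^{(k)})$ and $\textnormal{Adv}_T(\rho_2^{(k)},\rho_3^{(k)})$, and use
\begin{equation*}
|\Pr[T^{(\rho_1)}=1]-\Pr[T^{(\rho_3)}=1]|\le|\Pr[T^{(\rho_1)}=1]-\Pr[T^{(\rho_2)}=1]|+|\Pr[T^{(\rho_2)}=1]-\Pr[T^{(\rho_3)}=1]|
\end{equation*}
to conclude $\textnormal{Adv}_T(\rho_1^{(k)},\rho_3^{(k)})\le\varepsilon_T^{12}(k)+\varepsilon_T^{23}(k)$, which is negligible; since $T$ was arbitrary, $\rho_1\approx^{(c)}\rho_3$. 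For \textbf{convex combinations (3)} (with $\mathcal{X}$ finite, as in all applications here) I would again fix an efficient $T$; linearity of $\Pr[T^{(\cdot)}=1]$ in the state and the triangle inequality give $\textnormal{Adv}_T(\sum_x P_x\rho_1^{x},\sum_x P_x\rho_2^{x})\le\sum_x P_x\,\textnormal{Adv}_T(\rho_1^{x},\rho_2^{x})\le\sum_x P_x\,\varepsilon_T^{x}(k)$, and a finite convex combination of negligible functions is negligible.

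For \textbf{closure under efficient operations (4)} I would argue by contraposition. If some efficient $T$ distinguished $\{\mathcal{E}^{(k)}(\rho_1^{(k)})\}$ from $\{\mathcal{E}^{(k)}(\rho_2^{(k)})\}$ with non-negligible advantage, let $T_{\mathcal{E}}$ be the efficient program realizing the family $\{\mathcal{E}^{(k)}\}$ in the sense of Definition~\ref{def:efficientQ}, and let $T'$ be the program that, on input $k$ and quantum register $\rho$, first runs $T_{\mathcal{E}}(k)$ (thereby applying $\mathcal{E}^{(k)}$ to the register) and then runs $T$. Since $T_{\mathcal{E}}$ and $T$ are polynomial time, so is $T'$, and $\textnormal{Adv}_{T'}(\rho_1^{(k)},\rho_2^{(k)})=\textnormal{Adv}_T(\mathcal{E}^{(k)}(\rho_1^{(k)}),\mathcal{E}^{(k)}(\rho_2^{(k)}))$ would be non-negligible, contradicting $\rho_1\approx^{(c)}\rho_2$. (Register bookkeeping so that $T$ reads the output of $\mathcal{E}^{(k)}$ is routine.)

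For \textbf{the tensor product (2)} I would run a two-hop hybrid argument through the intermediate family $\rho_2\otimes\sigma_1$: by (1) it suffices to prove $\rho_1\otimes\sigma_1\approx^{(c)}\rho_2\otimes\sigma_1$ and $\rho_2\otimes\sigma_1\approx^{(c)}\rho_2\otimes\sigma_2$. For the first, given an efficient distinguisher $T$ for the two products I would build $T'$ which, on a state $\tau$ supplied on the first tensor factor, prepares $\sigma_1^{(k)}$ in a fresh register and runs $T$ on $\tau\otimes\sigma_1^{(k)}$, so that $\textnormal{Adv}_{T'}(\rho_1^{(k)},\rho_2^{(k)})=\textnormal{Adv}_T(\rho_1^{(k)}\otimes\sigma_1^{(k)},\rho_2^{(k)}\otimes\sigma_1^{(k)})$, which is negligible because $\rho_1\approx^{(c)}\rho_2$; the second hop is symmetric, using $\sigma_1\approx^{(c)}\sigma_2$. \textbf{The main obstacle} is precisely this reduction step: $T'$ must produce the auxiliary family $\sigma_1^{(k)}$ (resp. $\rho_2^{(k)}$ in the second hop) while remaining efficient. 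This is immediate whenever the auxiliary family is efficiently preparable — which holds for every auxiliary system to which the lemma is applied in this paper — and in general it holds in the non-uniform model, where $\sigma_1^{(k)}$ may simply be handed to $T'$ as polynomial-size quantum advice. Once this point is granted, (2) reduces cleanly to (1), and no further work is needed.
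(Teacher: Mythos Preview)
Your proposal is correct and matches the paper's approach: the paper does not give an explicit proof of this lemma, stating only that the properties are ``straightforward to prove from Definitions~\ref{def:distinguishableQ} and~\ref{def:distinguishableQA} and the properties of trace distance,'' and your arguments are exactly the natural unpacking of that sentence. Your flagging of the efficient-preparability caveat in item~(2) is a genuine subtlety that the paper leaves implicit; your resolution (efficiently preparable auxiliary families, as holds in all uses here, or non-uniform advice) is the standard one.
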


\subsection{Entropic quantities}
We start off by defining a useful pair of quantities for measuring information in quantum systems: the max-entropy and the conditional min-entropy. The max entropy is a measure of the number of possible different outcomes that can result from measuring a quantum state, whereas the conditional min-entropy is a way of measuring the information that a party can infer from a quantum system given access to another correlated quantum system. This measures will be useful to bound the distance between states based on their internal correlations.

\begin{definition} (Max-entropy) \\
\label{def:maxentropy}
Let $\rho \in \mathcal{D}(\mathcal{H})$. The max-entropy of $\rho$ is defined as
\begin{equation}
    \label{eq:maxentropy}
    H_{\max}(\rho) = \log\big(\dim(\textnormal{supp} (\rho))\big),
\end{equation}
where $\textnormal{supp} (\rho)$ denotes the support subspace of $\rho$ and $\dim$ denotes its dimension.
\end{definition}

\begin{definition} (Min-entropy and conditional min-entropy) \\
\label{def:minentropy}
Let $\rho \in \mathcal{D}(\mathcal{H})$ and $\lambda_{\max}(\rho)$ denote the maximum eigenvalue of $\rho$. The min-entropy of $\rho$ is defined as
\begin{equation}
    \label{eq:minentropy}
    H_{\textnormal{min}}(\rho) = -\log(\lambda_{\max}(\rho)).
\end{equation}
Let $\rho_{AB} \in \mathcal{D}(\mathcal{H}_A \otimes \mathcal{H}_B)$ and $\sigma_{B} \in \mathcal{D}(\mathcal{H}_B)$. The conditional min-entropy of $\rho_{AB}$ given $\sigma_{B}$ is defined as
\begin{equation}
    \label{eq:minentropy1}
    H_{\textnormal{min}}(\rho_{AB}|\sigma_B) = -\log(\lambda_{\sigma_B}),
\end{equation}
where $\lambda_{\sigma_B}$ is the minimum real number such that $\lambda_{\sigma_B} (\mathbb{1}_{A} \otimes \sigma_B) - \rho_{AB}$ is non-negative. The conditional min-entropy of $\rho_{AB}$ given $\mathcal{H}_B$ is defined as
\begin{equation}
    \label{eq:minentropy2}
    H_{\textnormal{min}}(A|B)_{\rho} = 
    \!\!\!\!\!\!
    \sup_{\sigma_B \in \mathcal{D}(\mathcal{H}_B)} 
    \!\!\!\!\!\!
    H_{\textnormal{min}}(\rho_{AB}|\sigma_B),
\end{equation}
Furthermore, let $\varepsilon > 0$. The $\varepsilon$-smooth conditional min-entropy is defined as
\begin{equation}
    \label{eq:smoothminentropy}
    H_{\textnormal{min}}^{\varepsilon}(A|B)_{\rho} = 
    \!\!\!\!\!\!\!\!\!
    \sup_{\rho'_{AB} \in \mathcal{B}^{\varepsilon}(\rho_{AB})}  
    \!\!\!\!\!\!\!\!\!
    H_{\textnormal{min}}(A|B)_{\rho'},
\end{equation}
where $\mathcal{B}^{\varepsilon}(\rho_{AB}) = \lbrace \rho'_{AB} : D(\rho_{AB} , \rho'_{AB}) < \varepsilon \rbrace$.
\end{definition}

The smooth conditional min-entropy is in general hard to compute. Because of this, it is useful to have some tools to bound it for states that have some specific forms. In our case we are interested in states that are partially {\em classical}.

\begin{definition} (Partially classical states) \\
\label{def:classical}
A quantum state described by the density operator $\rho_{AB} \in \mathcal{D}(\mathcal{H}_A \otimes \mathcal{H}_B)$ is classical in $\mathcal{H}_A$ (or classical in $A$) if it can be written in the form
\begin{equation}
    \label{eq:classical}
    \rho_{AB} = \sum_{x} \lambda_x \ketbra{x}_{A} \otimes \rho^{x}_{B},
\end{equation}
where the set $\lbrace \ket{x} \rbrace_x$ is an orthonormal basis for $\mathcal{H}_A$. A multipartite state is said to be classical if it is classical in all its parts.
\end{definition}

When dealing with partially classical states as shown in Eq.~\eqref{eq:classical}, we will refer to the operators $\rho^{x}_{B}$ as the state of the system $B$ \textit{conditioned} to $x$.  

\begin{lemma} (Properties of min- and max-entropy) \\
\label{lem:entropyproperties}
Let $\varepsilon, \varepsilon' \geq 0$:
\begin{enumerate}
    \item $H_{\textnormal{min}} (\rho_{A} \otimes \rho_{B} \vert \rho_{B}) = -\log(\lambda_{\max}(\rho_A))$. 
    \item $H_{\textnormal{min}}^{\varepsilon+\varepsilon'}(AA'|BB')_{\rho \otimes \rho'} \geq H_{\textnormal{min}}^{\varepsilon}(A|B)_{\rho} + H_{\textnormal{min}}^{\varepsilon'}(A'|B')_{\rho'}$.
    \item $H_{\textnormal{min}}^{\varepsilon}(A|BC)_{\rho} \leq H_{\textnormal{min}}^{\varepsilon}(A|B)_{\rho}$.
    \item $H_{\textnormal{min}}^{\varepsilon}(AB|C)_{\rho} \leq H_{\textnormal{min}}^{\varepsilon}(A|BC)_{\rho} + H_{\max}(\rho_{B})$.
    \item $H_{\textnormal{min}}^{\varepsilon}(A|B)_{\rho} \geq \inf_x \lbrace 
    H_{\textnormal{min}}^{\varepsilon}(\rho_A^{x}) \rbrace$, whenever the state $\rho_{AB}$ is classical on $B$.
\end{enumerate}
\end{lemma}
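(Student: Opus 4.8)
The plan is to read off each of the five items directly from the operator inequality that defines the (smooth) conditional min-entropy in Definition~\ref{def:minentropy}, using only elementary properties of the trace distance $D$: it is non-increasing under trace-non-increasing completely positive maps (in particular under partial traces, and under $X\mapsto\Pi X\Pi$ for a projector $\Pi$, since $\|\Pi X\Pi\|_1\le\|X\|_1$), and it is sub-additive under tensor products. In every case the recipe is the same: replace $\rho$ by a near-optimal smoothed state $\tilde\rho\in\mathcal{B}^{\varepsilon}(\rho)$ together with a near-optimal witness $\sigma_B$ satisfying $\lambda_{\sigma_B}\,\mathbb{1}_A\otimes\sigma_B-\tilde\rho\ge 0$, manipulate that single operator inequality, and then check that the resulting smoothed state still lies in the appropriate $\varepsilon$-ball and that the resulting witness is still a valid state.

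Items (1)--(3) are then immediate. For (1), taking $\sigma_B=\rho_B$ the inequality becomes $(\lambda\mathbb{1}_A-\rho_A)\otimes\rho_B\ge 0$; since $\rho_B\ge 0$ is nonzero this holds iff $\lambda\ge\lambda_{\max}(\rho_A)$, so the optimal $\lambda$ equals $\lambda_{\max}(\rho_A)$. For (2), pick near-optimal $\tilde\rho_{AB}\in\mathcal{B}^{\varepsilon}(\rho_{AB})$ and $\tilde\rho'_{A'B'}\in\mathcal{B}^{\varepsilon'}(\rho'_{A'B'})$ with optimal witnesses $\sigma_B,\sigma_{B'}$; then $\tilde\rho_{AB}\otimes\tilde\rho'_{A'B'}\in\mathcal{B}^{\varepsilon+\varepsilon'}(\rho_{AB}\otimes\rho'_{A'B'})$ by sub-additivity of $D$, and the identity $P\otimes Q-p\otimes q=(P-p)\otimes Q+p\otimes(Q-q)$ applied to $P=\lambda_{\sigma_B}\mathbb{1}_A\otimes\sigma_B\ge p=\tilde\rho_{AB}$ (and its primed analogue) shows that $\lambda_{\sigma_B}\lambda_{\sigma_{B'}}$ is feasible for the product with witness $\sigma_B\otimes\sigma_{B'}$, so the unsmoothed conditional min-entropy is additive on product states and (2) follows after letting the smoothing slack go to zero. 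For (3), take the optimal smoothed $\tilde\rho_{ABC}$ for $H_{\textnormal{min}}^{\varepsilon}(A|BC)_\rho$ with witness $\sigma_{BC}$; tracing out $C$ keeps $\tilde\rho_{AB}\in\mathcal{B}^{\varepsilon}(\rho_{AB})$ and makes $\sigma_B:=\textnormal{Tr}_C\sigma_{BC}$ feasible with the same $\lambda$, giving $H_{\textnormal{min}}^{\varepsilon}(A|B)_\rho\ge H_{\textnormal{min}}(A|BC)_{\tilde\rho}=H_{\textnormal{min}}^{\varepsilon}(A|BC)_\rho$.

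Item (4) is the only step I expect to require real care. I would first prove the weaker bound with $H_{\max}(\rho_B)$ replaced by $\log\dim\mathcal{H}_B$: given near-optimal $\tilde\rho_{ABC}$ and witness $\sigma_C$ with $\lambda\,\mathbb{1}_{AB}\otimes\sigma_C\ge\tilde\rho_{ABC}$, write $\mathbb{1}_B=\dim\mathcal{H}_B\cdot\tau_B$ with $\tau_B=\mathbb{1}_B/\dim\mathcal{H}_B$, so the left-hand side equals $\lambda\dim\mathcal{H}_B\,\big(\mathbb{1}_A\otimes(\tau_B\otimes\sigma_C)\big)$, exhibiting $\tau_B\otimes\sigma_C$ as a feasible witness for $H_{\textnormal{min}}(A|BC)$ and hence $H_{\textnormal{min}}^{\varepsilon}(A|BC)_\rho\ge H_{\textnormal{min}}^{\varepsilon}(AB|C)_\rho-\log\dim\mathcal{H}_B$. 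To upgrade $\log\dim\mathcal{H}_B$ to $H_{\max}(\rho_B)=\log\dim(\textnormal{supp}\,\rho_B)$ I would restrict everything to $\textnormal{supp}\,\rho_B$: since $\rho_B$ is supported there, $\rho_{ABC}$ is supported on $\mathcal{H}_{AC}\otimes\textnormal{supp}\,\rho_B$, and replacing $\tilde\rho_{ABC}$ by $(\mathbb{1}_{AC}\otimes\Pi_B)\tilde\rho_{ABC}(\mathbb{1}_{AC}\otimes\Pi_B)$ does not increase $D(\tilde\rho_{ABC},\rho_{ABC})$; we may therefore assume the smoothed state lives on that subspace and rerun the argument with $\Pi_B$ in place of $\mathbb{1}_B$. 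This support bookkeeping (together with keeping track of possibly sub-normalised smoothed states) is the one genuinely delicate point, which is why I single it out as the main obstacle.

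For (5), write $\rho_{AB}=\sum_x p_x\,\rho_A^x\otimes\ketbra{x}_B$, set $\gamma=\inf_x H_{\textnormal{min}}^{\varepsilon}(\rho_A^x)$, and for each $x$ choose $\tilde\rho_A^x\in\mathcal{B}^{\varepsilon}(\rho_A^x)$ with $\lambda_{\max}(\tilde\rho_A^x)\le 2^{-\gamma+\eta}$ for an arbitrary $\eta>0$. Then $\tilde\rho_{AB}:=\sum_x p_x\,\tilde\rho_A^x\otimes\ketbra{x}_B$ lies in $\mathcal{B}^{\varepsilon}(\rho_{AB})$, because the trace distance of these classical--quantum states is the $p_x$-weighted average of the distances $D(\tilde\rho_A^x,\rho_A^x)$, and taking the witness $\sigma_B=\textnormal{Tr}_A\tilde\rho_{AB}=\sum_x p_x\ketbra{x}_B$ the feasibility inequality decouples over $x$ into $\lambda\,\mathbb{1}_A\ge\tilde\rho_A^x$, which holds with $\lambda=\max_x\lambda_{\max}(\tilde\rho_A^x)\le 2^{-\gamma+\eta}$; hence $H_{\textnormal{min}}^{\varepsilon}(A|B)_\rho\ge\gamma-\eta$, and $\eta\to 0$ gives the claim. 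All of these facts are standard in the smooth-entropy literature and would be reproved here only to keep the appendix self-contained.
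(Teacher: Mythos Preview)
Your proof is correct, but there is nothing to compare it against: the paper does not prove Lemma~\ref{lem:entropyproperties} at all and simply refers the reader to Renner's thesis~\cite{renner08} (see the sentence immediately preceding Definition~\ref{def:uh}). What you have written is essentially a compact reproduction of the standard operator-inequality arguments found there, so in that sense your approach coincides with the cited source. Two very minor points: in item~(5) the witness $\sigma_B=\textnormal{Tr}_A\tilde\rho_{AB}$ need not be normalised if the smoothed $\tilde\rho_A^x$ are sub-normalised, but you can simply take $\sigma_B=\sum_x p_x\ketbra{x}_B$ directly, which is what your block-diagonal argument actually uses; and in item~(4) your support-projection step is fine precisely because the paper's $\mathcal{B}^{\varepsilon}$ is defined via trace distance without a normalisation constraint, so no additional bookkeeping for sub-normalised states is needed.
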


We use universal hashing to implement randomness extraction in the final steps of the protocol. The proof both Lemmas~\ref{lem:entropyproperties} and~\ref{lem:qlhl} can be found in~\cite{renner08}.

\begin{definition} (Universal hashing)\\
\label{def:uh}
A set of functions $\textnormal{\textbf{F}}=\lbrace f_{i}: \{0,1\}^m \rightarrow \{0,1\}^n \rbrace$ is a universal hash family if, for all $ x,y\in \{0,1\}^m$, such that $x\neq y $, and $i$ chosen uniformly at random, we~have
\begin{equation}
	\textnormal{Pr} [f_i(x) = f_i(y) ] \leq \frac{1}{2^n} .
\end{equation}
\end{definition}

\vspace{0mm}

\begin{lemma}(Quantum leftover hash)\\ 
\label{lem:qlhl}
Let $\textnormal{\textbf{F}}=\lbrace f_{i}: \{0,1\}^n \rightarrow \{0,1\}^{\ell} \rbrace$ be a universal hash family, let $\mathcal{H}_A,\mathcal{H}_B,\mathcal{H}_F,\mathcal{H}_E$ be Hilbert spaces such that $\lbrace \ket{x} \rbrace_{x \in \bs^{n}}, 
\lbrace \ket{f_{i}} \rbrace_{f_{i} \in \textnormal{\textbf{F}}}$, and $\lbrace \ket{e} \rbrace_{e \in \bs^{\ell}}$ are orthonormal bases for $\mathcal{H}_A, \mathcal{H}_F,$ and $\mathcal{H}_E$ respectively. Then for any $\varepsilon \geq 0$ and any state of the form 
\begin{align}
    \label{eq:posthashstate}
    \rho_{ABFE} = \frac{1}{\vert \textnormal{\textbf{F}}\vert} \sum_{\substack{x \in \bs^{n} \\ f_{i} \in \textnormal{\textbf{F}}}}  \Big( &\lambda(x) \ketbra{f_i}_F  \ketbra{f_i(x)}_E  \nonumber \\
    &\otimes \ketbra{x}_A \rho^{x}_B \Big), 
\end{align}
it holds that
\begin{equation}
    \label{eq:qlhl}
    \rho_{EBF} \approx_{\varepsilon'} \mathsf{U}_E \otimes \rho_{BF}, 
\end{equation}
with
\begin{equation}
    \varepsilon' = \varepsilon + \frac{1}{2}\cdot2^{-\frac{1}{2}(H_{\textnormal{min}}^{\varepsilon}(A \vert B)_{\rho} - \ell)}.
\end{equation}
\end{lemma}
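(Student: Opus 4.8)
The plan is to translate the indistinguishability claim into a trace-distance estimate via Proposition~\ref{prop:L1_indistinguishability}, dispatch the $\varepsilon$-smoothing by data processing, and then establish the non-smooth bound through the standard ``conditional $L_2$'' (pretty-good-measurement-weighted Hilbert--Schmidt) argument.

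\textbf{Reduction and smoothing.} By Proposition~\ref{prop:L1_indistinguishability} it suffices to bound $D(\rho_{EBF},\mathsf{U}_E\otimes\rho_{BF})$. First pick $\tilde\rho_{AB}\in\mathcal{B}^{\varepsilon}(\rho_{AB})$ attaining the supremum defining $H_{\textnormal{min}}^{\varepsilon}(A\vert B)_{\rho}$ in Definition~\ref{def:minentropy}, and let $\tilde\rho_{ABFE}$ be built from $\tilde\rho_{AB}$ exactly as in Eq.~\eqref{eq:posthashstate} (adjoin $F$ uniform, coherently copy $f_i(x)$ into $E$ controlled on $F=f_i$ and $A=x$). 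This map $\rho_{AB}\mapsto\rho_{EBF}$ is CPTP, hence trace-distance non-increasing, so $D(\rho_{EBF},\tilde\rho_{EBF})\le\varepsilon$; and since $\rho_{BF}=\rho_B\otimes\mathsf{U}_F$ by construction of the state in Eq.~\eqref{eq:posthashstate}, also $D(\mathsf{U}_E\otimes\rho_{BF},\mathsf{U}_E\otimes\tilde\rho_{BF})\le\varepsilon$. Using the triangle inequality, it then remains to show
\begin{equation}
D(\tilde\rho_{EBF},\,\mathsf{U}_E\otimes\tilde\rho_{BF})\;\le\;\tfrac12\,2^{-\frac12\left(H_{\textnormal{min}}(A\vert B)_{\tilde\rho}-\ell\right)},
\end{equation}
with the smoothing contributing the additive term of order $\varepsilon$ (the precise constant fixed by the radius convention in $\mathcal{B}^{\varepsilon}$).

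\textbf{Core estimate.} Write $H:=H_{\textnormal{min}}(A\vert B)_{\tilde\rho}$ and (restricting to $\operatorname{supp}\sigma_B$ if needed) let $\sigma_B$ be a density operator with $\tilde\rho_{AB}\le 2^{-H}(\mathbb{1}_A\otimes\sigma_B)$, which exists by Definition~\ref{def:minentropy}. By $\|M\|_1\le\sqrt{\operatorname{rank}M}\,\|M\|_2$ in its $\sigma_B$-weighted form, and since $\dim\mathcal{H}_E=2^{\ell}$ and $\dim\mathcal{H}_F=\vert\mathbf{F}\vert$,
\begin{equation}
D(\tilde\rho_{EBF},\mathsf{U}_E\otimes\tilde\rho_{BF})\;\le\;\tfrac12\sqrt{2^{\ell}\vert\mathbf{F}\vert}\;\big\|\,\Sigma\,(\tilde\rho_{EBF}-\mathsf{U}_E\otimes\tilde\rho_{BF})\,\Sigma\,\big\|_2,\qquad \Sigma:=\mathbb{1}_E\otimes\sigma_B^{-1/4}\otimes\mathbb{1}_F.
\end{equation}
Expanding the Hilbert--Schmidt norm: the $F$-register is classical, so the sum over $f_i$ factors out a $\vert\mathbf{F}\vert^{-1}$; for fixed $f_i$ the $E$-register collapses the double sum over outcome strings to pairs $x,x'$ with $f_i(x)=f_i(x')$; the cross term with, and the square of, $\mathsf{U}_E\otimes\tilde\rho_{BF}$ combine (using $\operatorname{Tr}_E\tilde\rho_{EBF}=\tilde\rho_{BF}$ and $\operatorname{Tr}[\mathsf{U}_E\,\cdot\,]=2^{-\ell}\operatorname{Tr}[\cdot]$) to cancel the normalized off-diagonal part; and averaging over $\mathbf{F}$ with the collision bound $\Pr_i[f_i(x)=f_i(x')]\le 2^{-\ell}$ for $x\ne x'$ (Definition~\ref{def:uh}) leaves only the diagonal collision term,
\begin{equation}
\big\|\,\Sigma\,(\tilde\rho_{EBF}-\mathsf{U}_E\otimes\tilde\rho_{BF})\,\Sigma\,\big\|_2^2\;\le\;\frac{1}{\vert\mathbf{F}\vert}\operatorname{Tr}\!\Big[\big((\mathbb{1}_A\otimes\sigma_B^{-1/4})\,\tilde\rho_{AB}\,(\mathbb{1}_A\otimes\sigma_B^{-1/4})\big)^2\Big].
\end{equation}
Finally, from $0\le(\mathbb{1}_A\otimes\sigma_B^{-1/4})\tilde\rho_{AB}(\mathbb{1}_A\otimes\sigma_B^{-1/4})\le 2^{-H}(\mathbb{1}_A\otimes\sigma_B^{1/2})$ and the fact that $\operatorname{Tr}[X^2]\le\operatorname{Tr}[XY]$ whenever $0\le X\le Y$, together with cyclicity $\operatorname{Tr}[(\mathbb{1}_A\otimes\sigma_B^{1/2})(\mathbb{1}_A\otimes\sigma_B^{-1/4})\tilde\rho_{AB}(\mathbb{1}_A\otimes\sigma_B^{-1/4})]=\operatorname{Tr}[\tilde\rho_{AB}]=1$, the trace above is at most $2^{-H}$. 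Combining, the $\vert\mathbf{F}\vert$ cancels and $D(\tilde\rho_{EBF},\mathsf{U}_E\otimes\tilde\rho_{BF})\le\tfrac12\sqrt{2^{\ell}2^{-H}}=\tfrac12\,2^{-\frac12(H-\ell)}$, which with $H\ge H_{\textnormal{min}}^{\varepsilon}(A\vert B)_{\rho}$ and the reduction step gives the claim.

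\textbf{Main obstacle.} The delicate part is the core $L_2$ estimate: making the $\sigma_B^{-1/4}$ weighting rigorous when $\sigma_B$ is not full rank (passing to its support and checking nothing is lost), carefully verifying that the terms involving the ideal state $\mathsf{U}_E\otimes\tilde\rho_{BF}$ cancel exactly the off-diagonal contribution after averaging over $\mathbf{F}$, and pinning down the operator inequality $\operatorname{Tr}[X^2]\le\operatorname{Tr}[XY]$ for $0\le X\le Y$ with the cyclicity step that collapses the weighted trace to $\operatorname{Tr}[\tilde\rho_{AB}]$. Everything else --- the reduction via Proposition~\ref{prop:L1_indistinguishability}, the data-processing bound on the smoothing error, and the purely combinatorial invocation of $2$-universality --- is routine.
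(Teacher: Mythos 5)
The paper does not actually prove this lemma; it is quoted verbatim from Renner's thesis~\cite{renner08}, so there is no in-paper argument to compare against. Your sketch is precisely the standard proof from that source --- the $\sigma_B^{-1/4}$-weighted Cauchy--Schwarz reduction of the trace distance to a conditional collision-entropy quantity, the exact cancellation of the cross terms with $\mathsf{U}_E\otimes\tilde\rho_{BF}$ after averaging over the $2$-universal family, and the final bound via $\Tr[X^2]\le\Tr[XY]$ for $0\le X\le Y$ together with $\tilde\rho_{AB}\le 2^{-H}(\mathbb{1}_A\otimes\sigma_B)$ --- and it is correct, including the technicalities you flag (support of $\sigma_B$, and the fact that dephasing $A$ in the computational basis neither decreases the min-entropy nor increases the trace distance, which is needed because the smoothed $\tilde\rho_{AB}$ need not be classical on $A$). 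The one loose end, which you acknowledge explicitly, is the smoothing constant: the triangle inequality through $\tilde\rho_{EBF}$ and $\mathsf{U}_E\otimes\tilde\rho_{BF}$ naively yields $2\varepsilon$ rather than the stated $\varepsilon$, and recovering $\varepsilon$ requires the particular (one-sided, non-normalized) smoothing convention used in~\cite{renner08} or a purified-distance smoothing; this is a bookkeeping issue rather than a gap in the argument.
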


\section{Detailed proof of Theorem~\ref{thm:OKDsecurity}}
\label{sec:longproof}

\subsection{Supporting lemmas}

One of the main features to analyze for the security against a dishonest receiver is the potential information that he can learn about the the sender's strings given the quantum state that remains with him after the commit/open phase. In order to talk about the security of the protocol independently of the specific cheating strategy that may be used by the dishonest parties (or possible effects that the environment can have in the shared quantum state), we want to understand the properties that a quantum state that passes Alice's test at Step (7) can have. We do this through the following lemma, a version of which was originally proven in~\cite{damgard09}. Here, we provide a more self contained statement and make explicit the trace distance bound.

\begin{lemma} 
\label{lem:testsecurity}
Let $\varepsilon > 0$, $I = \lbrace 1,\ldots,N \rbrace$, and $\rho_{T,\hat{X}, X, E}$ be a density operator of the form
\begin{equation}
    \label{eq:committedstate}
    \begin{split}
    \rho_{T,\hat{X}, X, E} &= 
    \!\!\!\!\!
    \sum_{\substack{I_{1},I_{2} \\\in \mathcal{P}(I) \setminus \lbrace \emptyset \rbrace}}
    \!\!\!\!\!
    q(I_{1},I_{2}) \ketbra{I_{1},I_{2}}_T \otimes \ketbra{\hat{x}}_{\hat{X}} \otimes \ketbra{\psi}_{X,E} \\
    \ket{\psi}_{X,E} &= \sum_{x \in \bs^N} \beta^{x} \ket{x}_X \ket{\phi^x}_E,
    \end{split}
\end{equation}
where $\dim(\mathcal{H}_{\hat{X}})=\dim(\mathcal{H}_{X})= 2^N$ and $\mathcal{P}(I)$ denotes the set of all subsets of $I$. For each  $I_{1},I_{2}$ define the set
\begin{equation}
    B_{I_{1},I_{2}} = \lbrace x \in \bs^N : \vert r_{\textnormal{H}}(x_{I_{1}} \oplus\hat{x}_{I_{1}})-r_{\textnormal{H}}(x_{I_{2}}\oplus\hat{x}_{I_{2}}) \vert < \varepsilon \rbrace.
\end{equation}
Additionally, let $\mathcal{Q}(N,\varepsilon)$ be a function such that, whenever the subsets $I_{1},I_{2}$ are sampled according to $q$, it holds that 
\begin{equation}
    \label{eq:deviationbound}
    \Pr[|r_{\textnormal{H}}(x_{I_{1}} \oplus\hat{x}_{I_{1}})-r_{\textnormal{H}}((x_{I_{2}}\oplus\hat{x}_{I_{2}})| > \varepsilon]  \leq \mathcal{Q}(N,\varepsilon)
\end{equation}
independently of $x$. There exists a state $\Tilde{\rho}_{T,\hat{X}, X, E}$ of the form
\begin{equation}
    \begin{split}
    \tilde{\rho}_{T,\hat{X}, X, E} &= 
    \!\!\!\!\!
    \sum_{\substack{I_{1},I_{2} \\\in \mathcal{P}(I) \setminus \lbrace \emptyset \rbrace}} 
    \!\!\!\!\!
    q(I_{1},I_{2}) \ketbra{I_{1},I_{2}}_T \otimes \ketbra{\hat{x}}_{\hat{X}} \otimes \ketbra{\psi_{I_{1},I_{2}}}_{X,E} \\
    \ket{\psi_{I_{1},I_{2}}} &= \sum_{x \in B_{I_{1},I_{2}}} \tilde{\beta}^{x}_{I_{1},I_{2}} \ket{x}_X \ket{\phi^{x}_{I_{1},I_{2}}}_E,
    \end{split}
\end{equation}
such that
\begin{equation}
    \label{eq:commitbound}
    D(\rho_{T,\hat{X}, X, E} , \Tilde{\rho}_{T,\hat{X}, X, E}) \leq \mathcal{Q}(N,\varepsilon)^{\frac{1}{2}}. 
\end{equation}
\end{lemma}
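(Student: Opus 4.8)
The plan is to obtain $\tilde\rho$ from $\rho$ by cutting off, in each branch of the classical register $T$, the part of the fixed purification $\ket{\psi}_{X,E}$ that is supported on ``bad'' strings, and then to control the resulting error using the uniform deviation bound Eq.~\eqref{eq:deviationbound}. Concretely, for every pair $(I_1,I_2)$ with $q(I_1,I_2) > 0$ I would fix the projector $P_{I_1,I_2} = \sum_{x \in B_{I_1,I_2}} \ketbra{x}_X \otimes \mathbb{1}_E$ and set $p_{I_1,I_2} := \| P_{I_1,I_2}\ket{\psi} \|^2 = \sum_{x \in B_{I_1,I_2}} |\beta^x|^2$. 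Since $r_{\textnormal{H}}(0) = 0 < \varepsilon$ the string $\hat x$ always lies in $B_{I_1,I_2}$, so this subspace is never trivial; define $\ket{\psi_{I_1,I_2}} := P_{I_1,I_2}\ket{\psi}/\sqrt{p_{I_1,I_2}}$ when $p_{I_1,I_2} > 0$, and $\ket{\psi_{I_1,I_2}} := \ket{\hat x}_X \otimes \ket{e_0}_E$ for some fixed unit vector $\ket{e_0}$ otherwise. Reading off the $X$-register coefficients of $\ket{\psi_{I_1,I_2}}$ gives the $\tilde\beta^x_{I_1,I_2}$ and $\ket{\phi^x_{I_1,I_2}}$ that put $\tilde\rho$ in exactly the claimed form, with support contained in $B_{I_1,I_2}$ by construction.

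Next I would evaluate the trace distance branch by branch. Because $\rho$ and $\tilde\rho$ share the same orthogonal classical register $T$ and the common fixed factor $\ketbra{\hat x}_{\hat X}$, the difference $\rho - \tilde\rho$ is block diagonal over $T$, hence $D(\rho,\tilde\rho) = \sum_{I_1,I_2} q(I_1,I_2)\, D\big(\ketbra{\psi}_{X,E},\ketbra{\psi_{I_1,I_2}}_{X,E}\big)$. For pure states $D(\ketbra{\psi},\ketbra{\phi}) = \sqrt{1 - |\langle\psi|\phi\rangle|^2}$, and because $P_{I_1,I_2}$ is a projector one has $\langle\psi|\psi_{I_1,I_2}\rangle = \|P_{I_1,I_2}\ket{\psi}\| = \sqrt{p_{I_1,I_2}}$ (in the degenerate case $\beta^{\hat x} = 0$ forces the overlap to vanish too), so every term equals $\sqrt{1 - p_{I_1,I_2}}$, giving $D(\rho,\tilde\rho) = \sum_{I_1,I_2} q(I_1,I_2)\sqrt{1 - p_{I_1,I_2}}$.

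To finish, I would apply concavity of the square root (Jensen) to get $D(\rho,\tilde\rho) \le \sqrt{\,1 - \sum_{I_1,I_2} q(I_1,I_2) p_{I_1,I_2}\,}$, then swap the order of summation: $\sum_{I_1,I_2} q(I_1,I_2) p_{I_1,I_2} = \sum_{x \in \bs^N} |\beta^x|^2 \Pr_{(I_1,I_2)\sim q}[x \in B_{I_1,I_2}]$. The hypothesis Eq.~\eqref{eq:deviationbound} bounds each of these probabilities below by $1 - \mathcal{Q}(N,\varepsilon)$ uniformly in $x$, and since $\sum_x |\beta^x|^2 = 1$ this forces $\sum_{I_1,I_2} q(I_1,I_2) p_{I_1,I_2} \ge 1 - \mathcal{Q}(N,\varepsilon)$; substituting back yields $D(\rho,\tilde\rho) \le \sqrt{\mathcal{Q}(N,\varepsilon)}$, which is Eq.~\eqref{eq:commitbound}.

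The only point needing care --- rather than a genuine difficulty --- is the bookkeeping: justifying the term-by-term decomposition of the trace distance over the classical register $T$ (standard for block-diagonal differences), and noting that $\hat x \in B_{I_1,I_2}$ so that the truncated pure state always exists and has the claimed support. With those in hand, concavity of $\sqrt{\cdot}$ and a change of summation order close the proof, and I do not expect any hard step.
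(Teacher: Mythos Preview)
Your proof is correct and follows essentially the same route as the paper: define $\ket{\psi_{I_1,I_2}}$ as the normalized projection of $\ket{\psi}$ onto the good subspace $B_{I_1,I_2}$, decompose the trace distance over the classical $T$ register, use the pure-state formula $D=\sqrt{1-|\langle\psi|\psi_{I_1,I_2}\rangle|^2}=\sqrt{1-p_{I_1,I_2}}$, apply Jensen's inequality for the concave square root, and then exchange the order of summation to invoke the uniform deviation bound~\eqref{eq:deviationbound}. Your additional bookkeeping about the degenerate case $p_{I_1,I_2}=0$ and the observation that $\hat x\in B_{I_1,I_2}$ are not in the paper but do no harm.
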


\begin{proof}
First, we choose an adequate definition for the $\tilde{\beta}^{x}_{I_{1},I_{2}}$ and then show that, under that choice, the bound in Eq.~\eqref{eq:commitbound} holds. Note that we can write the state
\begin{equation}
\begin{split}
    \ket{\psi}_{X,E} &= \sum_{x \in \bs^N} \beta_x \ket{x}_X \ket{\phi^x}_E \\
    &= \underbrace{\left(\sum_{x \in B_{I_{1},I_{2}}} |\beta_x|^2 \right)^{\frac{1}{2}}}_{\lambda_{I_{1},I_{2}}} 
    \underbrace{\frac{\sum_{x \in B_{I_{1},I_{2}}} \beta_x \ket{x}_X \ket{\phi^x}_E}{\left(\sum_{x \in B_{I_{1},I_{2}}} |\beta_x|^2 \right)^{\frac{1}{2}}}}_{\ket{\psi_{I_{1},I_{2}}}_{X,E}} +
    \underbrace{\left(\sum_{x \notin B_{I_{1},I_{2}}} |\beta_x|^2 \right)^{\frac{1}{2}}}_{\lambda^{\perp}_{I_{1},I_{2}}} 
    \underbrace{\frac{\sum_{x \notin B_{I_{1},I_{2}}} \beta_x \ket{x}_X \ket{\phi^x}_E}{\left(\sum_{x \notin B_{I_{1},I_{2}}} |\beta_x|^2 \right)^{\frac{1}{2}}}}_{\ket{\psi^{\perp}_{I_{1},I_{2}}}_{X,E}} 
\end{split}
\end{equation}
The trace distance between the pure states $\ket{\psi}$ and $\ket{\psi_{I_{1},I_{2}}}$ is given by $\sqrt{1-|\bracket{\psi}{\psi_{I_{1},I_{2}}}|^2} = \lambda^{\perp}_{I_{1},I_{2}}$, hence the trace distance between the complete joint states is given by
\begin{equation}
\label{eq:commitdistance}
    \begin{split}
    D(\rho_{T,\hat{X}, X, E}, \Tilde{\rho}_{T,\hat{X}, X, E})^2 &\leq\  \left( \sum_{I_{1},I_{2}  \in \mathcal{T}(N,\alpha)} q D( \ketbra{\psi}_{X,E} ,  \ketbra{\psi_{I_{1},I_{2}}}_{X,E} ) \right)^2 \\
    &= \left(\sum_{I_{1},I_{2}  \in \mathcal{T}(N,\alpha)} q \lambda^{\perp}_{I_{1},I_{2}} \right)^2 \\
    &\leq  \sum_{I_{1},I_{2}  \in \mathcal{T}(N,\alpha)} q {\lambda^{\perp}_{I_{1},I_{2}}}^2,
    \end{split}
\end{equation}
where the Jensen's inequality was used in the last Step. We proceed now to bound the right side of Eq.~\eqref{eq:commitdistance}.  For that purpose consider the function
\begin{equation}
    \xi(I_{1},I_{2},x)= 
    \begin{cases}
    0 &\quad \text{if} \quad x \in B_{I_{1},I_{2}} \\
    1 &\quad \text{otherwise}
    \end{cases},
\end{equation}
so that
\begin{equation}
\begin{split}
    \sum_{I_{1},I_{2}} q(I_{1},I_{2}) \xi(I_{1},I_{2},x) = \Pr[|r_{\textnormal{H}}(\hat{x}\oplus x\vert_{I_{1}})-r_{\textnormal{H}}(\hat{x}\oplus x\vert_{I_{2}})| > \varepsilon] = \mathcal{Q}(N,\varepsilon).
\end{split}
\end{equation}
Hence,
\begin{equation}
    \begin{split}
    ||\rho_{T,\hat{X}, X, E} - \Tilde{\rho}_{T,\hat{X}, X, E}||^2 &\leq  \sum_{I_{1},I_{2}} q(I_{1},I_{2}) {\lambda^{\perp}_{I_{1},I_{2}}}^2 \\
    &= \sum_{I_{1},I_{2}} q(I_{1},I_{2})\sum_{x \notin B_{I_{1},I_{2}}} |\beta_x|^2 \\
    &= \sum_{I_{1},I_{2}} q(I_{1},I_{2})\sum_{x \in \bs^N}\xi(I_{1},I_{2},x) |\beta_x|^2 \\
    &= \sum_{x \in \bs^N} |\beta_x|^2 \sum_{I_{1},I_{2}} q(I_{1},I_{2})\xi(I,x) \\
    &\leq \sum_{x \in \bs^N} |\beta_x|^2 \mathcal{Q}(N,\varepsilon) = \mathcal{Q}(N,\varepsilon),
    \end{split}
\end{equation}
as required.
\end{proof}

In order to use the above result in the context of the $\pi_{\textnormal{QROT}}$ protocol, we need to find an appropriate function $\mathcal{Q}(N,\varepsilon)$ that satisfies Eq.~\eqref{eq:deviationbound} for the case when the $\hat{x},x$ are the respective measurement outcomes of Alice and Bob when measuring in the same basis. We do this through the following lemma based on the Hoeffding inequality for sampling without replacement.

\begin{definition}
\label{def:sampling}
Given a set $I$ and an integer $n \leq |I|$, define the set $\mathcal{T}(n, I)$ as the set of all subsets of $I$ with size $n$.
\end{definition}

\begin{lemma} (Hoeffding's inequalities)\\
\label{lem:hoeffding}
Let $x \in \bs^N$, $\delta > 0$, $I = \lbrace 1,\ldots,N \rbrace$ and $0 < \alpha < \frac{1}{2}$ such that $\alpha N \in \mathbb{N}$.
\begin{enumerate}[label=(\alph*)]
    \item (Inequality for sampling without replacement comparing the sampled subset with the whole set) For $I_t \in \mathcal{T}(\alpha N, I)$ sampled uniformly, it holds that 
        \begin{equation}
        \label{eq:hoeffding1}
        \Pr[|r_{\textnormal{H}}(x\vert_{I_t})-r_{\textnormal{H}}(x)| > \delta] \leq 2e^{-2 \alpha N \delta^{2}}.
    \end{equation}
    \item (Inequality for sampling without replacement comparing the sampled subset with its complement)
        \begin{equation}
        \label{eq:hoeffding2}
        \Pr[|r_{\textnormal{H}}(x\vert_{I_t})-r_{\textnormal{H}}(x\vert_{\bar{I}_t})| > \delta] \leq 2 e^{-2 \alpha (1-\alpha)^{2} N \delta^{2}}.
    \end{equation}
    \item (Inequality for sampling without replacement comparing the sampled subset with its complement and ignoring part of the sample) Let $n \in \lbrace n_0,
    \ldots, \alpha N \rbrace$ be sampled according to some distribution $P(n)$. For $I_s \in \mathcal{T}(n, I_t)$ sampled uniformly, it holds that 
        \begin{equation}
        \label{eq:hoeffding3}
        \Pr[|r_{\textnormal{H}}(x\vert_{I_s})-r_{\textnormal{H}}(x\vert_{\bar{I}_t})| > \delta] \leq 2 (e^{-\frac{1}{2} \alpha (1-\alpha)^{2} N \delta^{2}} +  e^{-\frac{1}{2} n_0 \delta^{2}}).
    \end{equation}
\end{enumerate}
\end{lemma}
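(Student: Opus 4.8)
The plan is to derive all three inequalities from the classical Hoeffding bound for sampling \emph{without replacement} from a finite population, together with two elementary reductions. Recall Hoeffding's result that if a sample of size $n$ is drawn without replacement from a population of $N$ values lying in $[0,1]$, then the sample mean deviates from the population mean by more than $\delta$ with probability at most $2e^{-2n\delta^{2}}$ (the same bound as for i.i.d.\ Bernoulli sampling). Part (a) is then immediate: fixing $x\in\bs^{N}$, the population is $\lbrace x_1,\ldots,x_N\rbrace$, its mean is $r_{\textnormal{H}}(x)$, the sample $I_t$ has size $\alpha N$, and the sample mean is $r_{\textnormal{H}}(x|_{I_t})$, so the bound reads $2e^{-2\alpha N\delta^{2}}$.

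For part (b) I would use the exact identity $\alpha\, r_{\textnormal{H}}(x|_{I_t}) + (1-\alpha)\, r_{\textnormal{H}}(x|_{\bar I_t}) = r_{\textnormal{H}}(x)$, which follows simply by counting the ones of $x$. Solving for $r_{\textnormal{H}}(x|_{\bar I_t})$ and substituting gives $r_{\textnormal{H}}(x|_{I_t}) - r_{\textnormal{H}}(x|_{\bar I_t}) = \tfrac{1}{1-\alpha}\bigl(r_{\textnormal{H}}(x|_{I_t}) - r_{\textnormal{H}}(x)\bigr)$. Hence the event $|r_{\textnormal{H}}(x|_{I_t}) - r_{\textnormal{H}}(x|_{\bar I_t})| > \delta$ coincides with $|r_{\textnormal{H}}(x|_{I_t}) - r_{\textnormal{H}}(x)| > (1-\alpha)\delta$, and applying (a) with $\delta$ replaced by $(1-\alpha)\delta$ produces the factor $(1-\alpha)^{2}$ in the exponent.

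For part (c) the idea is a triangle inequality on the two-stage sample, $|r_{\textnormal{H}}(x|_{I_s}) - r_{\textnormal{H}}(x|_{\bar I_t})| \le |r_{\textnormal{H}}(x|_{I_s}) - r_{\textnormal{H}}(x|_{I_t})| + |r_{\textnormal{H}}(x|_{I_t}) - r_{\textnormal{H}}(x|_{\bar I_t})|$, so a union bound reduces the claim to bounding each summand by $\delta/2$. The second term is controlled by (b), giving $2e^{-\frac12\alpha(1-\alpha)^{2} N\delta^{2}}$. For the first term, condition on $I_t$ and on the realized value of $n$: then $I_s$ is a uniform size-$n$ subset of the size-$\alpha N$ population $x|_{I_t}$, so Hoeffding without replacement gives conditional probability at most $2e^{-2n(\delta/2)^{2}} = 2e^{-\frac12 n\delta^{2}} \le 2e^{-\frac12 n_0\delta^{2}}$ since $n\ge n_0$; this bound is uniform in $I_t$ and monotone in $n$, so it survives taking expectations over the (arbitrary) distribution of $I_t$ and over $P(n)$. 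Adding the two contributions gives exactly the stated bound.

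The arguments are essentially routine; the only points requiring care are (i) invoking the sampling-without-replacement concentration bound with the correct constant in the exponent, and (ii) in (c), the bookkeeping for the nested randomization — ensuring that the conditional bound on the inner sample does not depend on $I_t$, so that averaging over the distributions of $I_t$ and $n$ is dominated by the worst case $n=n_0$. I would also note in passing that the hypothesis $\alpha<\tfrac12$ is not actually used in any of the three inequalities and can be regarded as a convention inherited from the protocol.
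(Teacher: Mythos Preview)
Your proposal is correct and follows essentially the same route as the paper: part (a) is the classical Hoeffding bound for sampling without replacement, part (b) is obtained from (a) via the identity $r_{\textnormal{H}}(x)=\alpha\,r_{\textnormal{H}}(x|_{I_t})+(1-\alpha)\,r_{\textnormal{H}}(x|_{\bar I_t})$ and the substitution $\delta\mapsto(1-\alpha)\delta$, and part (c) is obtained by the triangle inequality plus a union bound, applying (b) to the $I_t$-vs-$\bar I_t$ term and (a) to the $I_s$-vs-$I_t$ term conditioned on $I_t$ and $n$, then averaging over $P(n)$ and bounding by the worst case $n=n_0$. Your remark that the hypothesis $\alpha<\tfrac12$ is not used in the derivation is also accurate.
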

\begin{proof}
\begin{enumerate}[label=(\alph*)]
    \item This is the original Hoeffding inequality for sampling without replacement, the proof of which can be found in \cite{hoeffding63}.
    \item Note that we can write $r_{\textnormal{H}}(x) = \alpha r_{\textnormal{H}}(x\vert_{I_t}) + (1-\alpha) r_{\textnormal{H}}(x\vert_{\bar{I}_t})$. Substituting $r_{\textnormal{H}}(x)$ in (\ref{eq:hoeffding1}) we get
    \begin{align}
        \Pr[|r_{\textnormal{H}}(x\vert_{I_t})-\alpha r_{\textnormal{H}}(x\vert_{I_t}) + (1-\alpha) r_{\textnormal{H}}(x\vert_{\bar{I}_t})| > \delta'] 
        &= \Pr[|r_{\textnormal{H}}(x\vert_{I_t}) - r_{\textnormal{H}}(x\vert_{\bar{I}_t})| > \delta'/(1-\alpha)] \\
        &\leq 2e^{-2 \alpha N \delta'^{2}}.
      \end{align}
      The result is obtained by taking $\delta' = (1-\alpha)\delta$
      \item Let us consider first the case where $n$ is fixed. From the triangle inequality we know that
      \begin{align}
           |r_{\textnormal{H}}(x\vert_{I_{s(n)}})-r_{\textnormal{H}}(x\vert_{\bar{I}_t})| > \delta &\Rightarrow |r_{\textnormal{H}}(x\vert_{I_{s(n)}})-r_{\textnormal{H}}(x\vert_{I_t})| + |r_{\textnormal{H}}(x\vert_{I_t})-r_{\textnormal{H}}(x\vert_{\bar{I}_t})| > \delta \\
           &\Rightarrow |r_{\textnormal{H}}(x\vert_{I_{s(n)}})-r_{\textnormal{H}}(x\vert_{I_t})| > \delta/2 \vee |r_{\textnormal{H}}(x\vert_{I_t})-r_{\textnormal{H}}(x\vert_{\bar{I}_t})| > \delta/2,
      \end{align}
      and hence, by the union bound
      \begin{align}
          \Pr[|r_{\textnormal{H}}(x\vert_{I_{s(n)}})-r_{\textnormal{H}}(x\vert_{\bar{I}_t})| > \delta] &\leq \Pr[|r_{\textnormal{H}}(x\vert_{I_{s(n)}})-r_{\textnormal{H}}(x\vert_{I_t})| + |r_{\textnormal{H}}(x\vert_{I_t})-r_{\textnormal{H}}(x\vert_{\bar{I}_t})| > \delta] \\
          &\leq \Pr[|r_{\textnormal{H}}(x\vert_{I_{s(n)}})-r_{\textnormal{H}}(x\vert_{I_t})| > \delta/2] \nonumber \\ &\quad \quad \quad + \Pr[|r_{\textnormal{H}}(x\vert_{I_t})-r_{\textnormal{H}}(x\vert_{\bar{I}_t})| > \delta/2] \label{eq:triangleinequality}\\
          &\leq 2e^{-\frac{1}{2} n \delta^{2}} + 2e^{-\frac{1}{2} \alpha (1-\alpha)^{2} N \delta^{2}},
      \end{align}
      where the last expression comes from applying the (b) and (a) inequalities to the first and second terms of (\ref{eq:triangleinequality}) respectively. Using this, we can consider the case in which $n$ is not fixed, but instead follows a probability distribution $P(n)$ such that $P(n) = 0$ for $n<n_0$. For this case
      \begin{align}
           \Pr[|r_{\textnormal{H}}(x\vert_{I_s})-r_{\textnormal{H}}(x\vert_{\bar{I}_t})| > \delta] &= \sum_n P(n) \Pr[|r_{\textnormal{H}}(x\vert_{I_{s(n)}})-r_{\textnormal{H}}(x\vert_{\bar{I}_t})| > \delta] \\
           &\leq \sum_n P(n) 2 (e^{-\frac{1}{2} \alpha (1-\alpha)^{2} N \delta^{2}} +  e^{-\frac{1}{2} n \delta^{2}}) \\
           &\leq 2\sum_n P(n) (e^{-\frac{1}{2} \alpha (1-\alpha)^{2} N \delta^{2}} +  e^{-\frac{1}{2} n_0 \delta^{2}}) \\
           &= 2(e^{-\frac{1}{2} \alpha (1-\alpha)^{2} N \delta^{2}} +  e^{-\frac{1}{2} n_0 \delta^{2}}).
      \end{align}
\end{enumerate}
\end{proof}

The following lemma helps us bound the conditional min-entropy of a partially measured pure state by comparing it with the one of an appropriately chosen, partially measured mixed state. A proof of this result can be found in~\cite{bouman10}.

\begin{lemma} (Entropy bound for post-measurement states)\\
\label{lem:minentropybound}
Let $\mathcal{H}_A$ and $\mathcal{H}_E$ be Hilbert spaces and $\lbrace \ket{x} \rbrace_{x \in \mathcal{X}}, \lbrace \ket{y} \rbrace_{y \in \mathcal{Y}}$ be orthonormal bases for $\mathcal{H}_A$. Let $J \subseteq \mathcal{X}$, define the states
\begin{equation}
    \rho_{AE} = \ketbra{\phi}_{AE} \;\;\;\; \textit{with} \;\;\;\; \ket{\phi}_{AE} = \sum_{x \in J} \beta_x\ket{x}_A\ket{\phi^{x}}_{E},
\end{equation}
\begin{equation}
    \rho^{\textnormal{mix}}_{AE}= \sum_{x \in J} {|\beta_{x}|^{2}} \ketbra{x}_A \otimes \ketbra{\phi^{x}}_{E}.
\end{equation}
Denote by $\sigma_{YE}$ and  $\sigma^{\textnormal{mix}}_{YE}$ the states resulting from measuring the subsystem $A$ of $\rho_{AE}$ and $\rho^{\textnormal{mix}}_{AE}$ respectively in the basis $\lbrace \ket{y} \rbrace_{y \in \mathcal{Y}}$, storing the result in the system $Y$, and then tracing out the $A$ subsystem; then it holds that
\begin{equation}
    H_{\textnormal{min}}(Y | E)_{\sigma} \geq H_{\textnormal{min}}(Y | E)_{\sigma^{\textnormal{mix}}} - \log|J|.
\end{equation}
\end{lemma}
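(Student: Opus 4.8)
\emph{Proof plan.} The approach is to reduce the statement to a single elementary operator inequality between $\sigma_{YE}$ and $\sigma^{\textnormal{mix}}_{YE}$ and then feed it into the semidefinite characterisation of the conditional min-entropy. Unwinding Definition~\ref{def:minentropy}, $H_{\textnormal{min}}(Y|E)_{\rho}$ is the largest $h$ for which there exists a density operator $\tau_E$ with $\rho_{YE}\leq 2^{-h}\,\mathbb{1}_Y\otimes\tau_E$. Hence it suffices to prove the operator inequality $\sigma_{YE}\leq |J|\,\sigma^{\textnormal{mix}}_{YE}$: if $\tau_E$ witnesses $H_{\textnormal{min}}(Y|E)_{\sigma^{\textnormal{mix}}}=h'$, i.e.\ $\sigma^{\textnormal{mix}}_{YE}\leq 2^{-h'}\mathbb{1}_Y\otimes\tau_E$, then $\sigma_{YE}\leq |J|\,2^{-h'}\,\mathbb{1}_Y\otimes\tau_E=2^{-(h'-\log|J|)}\mathbb{1}_Y\otimes\tau_E$, which is precisely the assertion $H_{\textnormal{min}}(Y|E)_{\sigma}\geq H_{\textnormal{min}}(Y|E)_{\sigma^{\textnormal{mix}}}-\log|J|$.

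To establish that operator inequality I would first write both states out explicitly. Because the measurement outcome is stored in a classical register $Y$, both $\sigma_{YE}$ and $\sigma^{\textnormal{mix}}_{YE}$ are block diagonal over $y\in\mathcal{Y}$. A direct computation shows that measuring the $A$-part of $\ket{\phi}_{AE}$ in the basis $\{\ket y\}$ and tracing out $A$ puts $\ketbra{M_y}_E$ into the $y$-block, where $\ket{M_y}_E=\sum_{x\in J}\beta_x\langle y|x\rangle\ket{\phi^x}_E$, whereas the same procedure applied to $\rho^{\textnormal{mix}}_{AE}$ puts $\sum_{x\in J}|\beta_x|^2|\langle y|x\rangle|^2\,\ketbra{\phi^x}_E$ into the $y$-block; the only effect of the decoherence is to delete the $x\neq x'$ cross terms. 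Writing $\ket{v_x}_E:=\beta_x\langle y|x\rangle\ket{\phi^x}_E$, the inequality $\sigma_{YE}\leq|J|\,\sigma^{\textnormal{mix}}_{YE}$ becomes, block by block,
\begin{equation*}
    \Big(\sum_{x\in J}\ket{v_x}\Big)\Big(\sum_{x'\in J}\bra{v_{x'}}\Big)\;\leq\;|J|\sum_{x\in J}\ketbra{v_x},
\end{equation*}
which follows by sandwiching between $\bra w$ and $\ket w$ for an arbitrary $\ket w\in\mathcal{H}_E$ and applying Cauchy--Schwarz: $\big|\sum_{x\in J}\langle w|v_x\rangle\big|^2\leq|J|\sum_{x\in J}|\langle w|v_x\rangle|^2$. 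Summing the per-block bounds over $y$ gives $\sigma_{YE}\leq|J|\,\sigma^{\textnormal{mix}}_{YE}$, and the first paragraph concludes.

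The part that needs care is conceptual rather than computational: one has to aim for this per-block operator inequality in the first place. A crude trace-distance estimate between $\sigma_{YE}$ and $\sigma^{\textnormal{mix}}_{YE}$ does not translate into a useful min-entropy statement, and the loss of exactly $\log|J|$ is what the $|J|$-factor Cauchy--Schwarz bound delivers. An alternative, more operational route would be to adjoin a register $X'$ carrying a coherent copy of $x$ (so that $\sigma^{\textnormal{mix}}_{YE}=\Tr_{X'}\sigma_{YX'E}$ for the measured version of $\sum_{x\in J}\beta_x\ket{x}_A\ket{x}_{X'}\ket{\phi^x}_E$) and to observe that projecting $X'$ onto the uniform superposition over $J$ recovers $\sigma_{YE}$ up to a factor $1/|J|$ in trace; since converting that picture into the required inequality still goes through the same Cauchy--Schwarz estimate, I would present the direct computation above as the proof.
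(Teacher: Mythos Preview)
Your argument is correct. The paper does not supply its own proof of this lemma but defers to the reference \cite{bouman10}; the route you take---establishing the block-diagonal operator inequality $\sigma_{YE}\leq|J|\,\sigma^{\textnormal{mix}}_{YE}$ via the Cauchy--Schwarz bound $\big|\sum_{x\in J}\langle w|v_x\rangle\big|^{2}\leq|J|\sum_{x\in J}|\langle w|v_x\rangle|^{2}$ and then feeding it into the semidefinite characterisation of $H_{\textnormal{min}}$ from Definition~\ref{def:minentropy}---is essentially the argument given in that reference.
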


\subsection{Proof of Lemma~\ref{lem:IRverifiability}}
\label{sec:parestproof}

Here we present a proof of Lemma~\ref{lem:IRverifiability} used in the protocol's correctness analysis in Section~\ref{sec:secanalysis}.
\begin{lemma}
    \label{lem:IRverifiability2}
    Let $X^{A}_{I_0}, X^{A}_{I_1}, C, Y^{B}$ denote the systems holding the information of the respective values $x^{A}_{I_0}, x^{A}_{I_1}, c$, and  $y^{B}$ of $\pi_{\textnormal{QROT}}$. Denote by $\rho^{\top}$ the parties' joint state at the end of Step (11) conditioned that Bob constructed the sets $(I_0, I_1)$ during Step (9) and the protocol has not aborted. Assume both parties follow the steps of the protocol, then
    \begin{equation}
        \rho^{\top}_{X^{A}_{I_0}, X^{A}_{I_1}, C, Y^{B}} \approx_{\varepsilon_{\textnormal{IR}}(k')} \tilde{\rho}^{ \top}_{X^{A}_{I_0}, X^{A}_{I_1}, C, Y^{B}},
    \end{equation}
    where $\varepsilon_{\textnormal{IR}}(k')$ is a negligible function given by the security of the underlying Information Reconciliation scheme, $k'$ its associated security parameter, and
    \begin{equation}
        \label{eq:IRverifiability2}
        \tilde{\rho}^{ \top}_{X^{A}_{I_0}, X^{A}_{I_1}, C, Y^{B}} = \frac{1}{2^{(2N_{\textnormal{raw}}+1)}}
        \!
        \sum_{\substack{x_{I_0}, x_{I_1} \\ c}}
        \!
        \ketbra{x_{I_0}}_{X^{A}_{I_0}} \ketbra{x_{I_1}}_{X^{A}_{I_1}}  \ketbra{x_{I_0}}_{Y^{B}} \ketbra{c}_{C}.
    \end{equation}
\end{lemma}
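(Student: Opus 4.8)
The plan is to split the claim into three facts about an honest execution and then recombine them: (i) Alice's raw outcomes on the blocks $I_0,I_1$ are uniform and independent; (ii) Bob's choice bit $c$ is uniform and independent of everything else; and (iii) conditioned on surviving Step~(11), Bob's reconciled string equals Alice's, i.e.\ $y^B=x^A_{I_0}$, up to the error budget $\varepsilon_{\textnormal{IR}}(k')$. Fact~(i) is structural: since Alice is honest she holds one half of each $\ket{\Phi^+}_i$, so the reduced state of the qubits she measures on $I_0\cup I_1$ is $\mathbb{1}/2^{2N_{\textnormal{raw}}}$ irrespective of what the channel and Bob do to the transmitted halves; hence each $x^A_i$ is a uniform bit in whatever basis $\theta^A_i$ is used, and because the $N_0$ pairs are mutually independent and $I_0,I_1\subseteq\bar I_t$ are disjoint, $x^A_{I_0}$ and $x^A_{I_1}$ are independent and each uniform on $\bs^{N_{\textnormal{raw}}}$. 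Moreover the aborts at Step~(7) and Step~(9) are functions only of the basis strings and of the outcomes on the test indices $I_t$ --- data living on Bell pairs disjoint from those carrying $x^A_{I_0},x^A_{I_1}$, and hence independent of them --- so conditioning on having passed those steps leaves the joint law of $(X^A_{I_0},X^A_{I_1})$ untouched. Fact~(ii) is immediate: $c$ is a fresh coin sampled by honest Bob in Step~(9) that feeds no test, hence uniform and independent of $X^A_{I_0},X^A_{I_1}$ and of the transcript.

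For fact~(iii), recall that in Step~(11) Bob computes $y^B=\dec\big(\syn(x^A_{I_0}),x^B_{I_0}\big)$ and aborts precisely when $y^B=\bot$. The key move is to invoke the verifiability property of $(\syn,\dec)$: for every fixed pair $(x,y)$ one has $\dec(\syn(x),y)\in\{x,\bot\}$ except with probability $\varepsilon_{\textnormal{IR}}(k')$ over the scheme's coins, so averaging over the (channel-dependent, a priori correlated) law the protocol induces on $(x^A_{I_0},x^B_{I_0})$, the event ``$y^B\notin\{x^A_{I_0},\bot\}$'' still has probability at most $\varepsilon_{\textnormal{IR}}(k')$. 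Working on the sub-normalised joint state at the end of Step~(11) --- so that the surviving-run probability $p_{\textnormal{succ}}$ is carried as a prefactor, consistently with Lemma~\ref{lem:correctness} --- this says that relabelling $y^B\mapsto x^A_{I_0}$ on the branch where $\dec$ did not output $\bot$ changes the classical registers $X^A_{I_0},Y^B$ by at most $\varepsilon_{\textnormal{IR}}(k')$ in trace distance, while leaving the abort indicator and the marginals of $X^A_{I_0},X^A_{I_1},C$ unchanged (the relabelling acts only inside the surviving branch and only on $Y^B$). The resulting corrected state is, up to the prefactor $p_{\textnormal{succ}}$, exactly $\tilde\rho^{\top}$ of Eq.~\eqref{eq:IRverifiability2}, and by Proposition~\ref{prop:L1_indistinguishability} together with the stability of $\approx_\varepsilon$ under trace-non-increasing maps (Lemma~\ref{lem:indistproperties}(4)) we obtain $\rho^{\top}_{X^A_{I_0},X^A_{I_1},C,Y^B}\approx_{\varepsilon_{\textnormal{IR}}(k')}\tilde\rho^{\top}_{X^A_{I_0},X^A_{I_1},C,Y^B}$.

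The step I expect to be the main obstacle is precisely this last one: passing from the worst-case, per-pair verifiability guarantee to a statement about the random pair $(x^A_{I_0},x^B_{I_0})$ produced by an arbitrary (possibly adversarial) quantum channel, while conditioning on $\dec$ not returning $\bot$. One must check that the abort decision at Step~(11) is a genuine deterministic function of $(x^A_{I_0},x^B_{I_0})$ and the scheme's coins --- so that the ``$y^B\mapsto x^A_{I_0}$'' correction can be applied without changing which runs abort --- and that the conditioning on survival, which proceeds through the syndrome $\syn(x^A_{I_0})$ that Alice broadcasts, does not distort the marginal of $X^A_{I_0}$ beyond the $\varepsilon_{\textnormal{IR}}(k')$ already accounted for; this is where both the independence of the distinct Bell pairs and the fact that everything is tracked at the level of sub-normalised states (with $p_{\textnormal{succ}}$ as a prefactor) are genuinely needed. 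Once this is in place, assembling the four-register marginal from (i)--(iii) --- uniformity and independence of $X^A_{I_0},X^A_{I_1},C$ together with $Y^B=X^A_{I_0}$ --- is routine bookkeeping.
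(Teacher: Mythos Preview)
Your approach is essentially the paper's. Both establish (i) that $x^A_{I_0}, x^A_{I_1}$ are uniform and independent because Alice's marginal on each Bell pair is maximally mixed and the aborts at Steps~(7) and~(9) depend only on the basis strings and on outcomes in the disjoint test block $I_t$; (ii) that $c$ is a fresh independent bit; and (iii) that verifiability forces $y^B\in\{x^A_{I_0},\bot\}$ up to $\varepsilon_{\textnormal{IR}}(k')$. The paper does this by tracking the joint density operator through the protocol, writing the post-Step-(11) state as
\[
\frac{1}{2^{2N_{\textnormal{raw}}+1}}\sum_{x^A_{I_0},x^A_{I_1},c}\ketbra{x^A_{I_0},x^A_{I_1},c}\otimes\big(P_{\textnormal{correct}}\ketbra{x^A_{I_0}}+P_\bot\ketbra{\bot}+P_{\textnormal{error}}\,\sigma\big)_{Y^B},
\]
bounding $P_{\textnormal{error}}\le\varepsilon_{\textnormal{IR}}$ via verifiability, and then conditioning on non-abort to conclude.

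The obstacle you flag in your last paragraph is real, and it is exactly where the paper is also terse. Your assertion that the corrected sub-normalised state equals $p_{\textnormal{succ}}\,\tilde\rho^\top$ requires $P_\bot$ (equivalently, the Step-(11) abort probability) to be independent of $x^A_{I_0}$; but $P_\bot$ is determined by the channel-induced joint law of $(x^A_{I_0},x^B_{I_0})$ and for an asymmetric channel can genuinely vary with $x^A_{I_0}$, so that conditioning on survival biases the marginal of $X^A_{I_0}$ by an amount not controlled by $\varepsilon_{\textnormal{IR}}$ alone. The paper's proof does not address this either---it simply writes ``conditioning the resulting state to not having aborted, we get the desired result''---so your sketch matches the paper's argument at its own level of detail, and the step you identified as the main obstacle is precisely the one that remains unjustified in both.
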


\begin{proof}
Note that, because the state shared by Alice at Step (1) of the protocol is a tensor product of maximally entangled states, the state of Alice's part is a product of maximally mixed states. This means that, regardless of the measurement bases $\theta^{A}$, the outcome of her measurements $x^{A}$ is always uniform in $\bs^{N_0}$. Let $\rho^{(2)}_{\Theta^{A}, \Theta^{B}, X^{A},X^{B}}$ be the state of the parties' respective measurement bases and outcomes at the end of Step (2) of the protocol, we can write 
\begin{align}
    \label{eq:correctnesspostmeasurement}
    \rho^{(2)}_{\Theta^{A}, \Theta^{B}, X^{A},X^{B}} =& \frac{1}{2^{2N_{0}}} \sum_{\theta^{A}, \theta^{B}} \ketbra{\theta^{A}, \theta^{B}}_{\Theta^{A} \Theta^{B}} \frac{1}{2^{N_{0}}} \sum_{x^{A}} \ketbra{x^{A}}_{X^{A}} \nonumber \\
    &\otimes \sum_{x^{B}} P(x^{B} | x^{A}, \theta^{A}, \theta^{B})\ketbra{x^{B}}_{X^{B}},
\end{align}
where $P(x^{B} | x^{A}, \theta^{A}, \theta^{B})$ denotes the probabilities of Bob's outcomes given each parties measurement bases and Alice's measurement outcomes, which in turn depends on the effect of the transmission channel when the state was shared from Alice's laboratory. All operations will be classical from this point onwards. To arrive to Eq~\eqref{eq:IRverifiability2}, we first need to show that the abort operations within the protocol do not bias the distribution of possible values of $x_{I_0}$ and $x_{I_1}$, and then use the verifiability property of the IR scheme to ensure that $y^{B} = x_{I_0}$ with high probability.

We will consider now the two abort instructions at Steps (7) and (9) as a single quantum operation $\mathcal{E}$ that maps the state to the zero operator if any of the two abort conditions is satisfied and applies the identity map otherwise. Let us first separate the values of $\theta^{{A/B}}$ and $x^{B}$ that ``survive'' the abort operation. For any given values of $x^{A}$ and $I_{t}$, define the sets:
\begin{equation}
    J^{(1)}_{I_{t}, x^{A}} = \lbrace (\theta^{A}, \theta^{B}) : w_{\textnormal{H}}\left(\overline{\theta^{A}_{I_t} \oplus \theta^{B}_{I_t}} \right) \geq N_{\textnormal{check}} \land N_{\textnormal{raw}}  \leq w_{\textnormal{H}}(\theta^{A}_{\bar{I}_t} \oplus \theta^{B}_{\bar{I}_t}) \leq (1-\alpha)N_0 - N_{\textnormal{raw}} \rbrace
\end{equation}
\begin{align}
    J^{(2)}_{I_{t}, x^{A}, \theta^{A}, \theta^{B}} = \lbrace x^{B} : r_{\textnormal{H}}(x^{A}_{I_s} \oplus x^{B}_{I_s}) \leq p_{\textnormal{max}} \rbrace,
\end{align}
where $w_{H}(\cdot)$ denotes the Hamming weight function. Let $\mathcal{T} = \mathcal{T}(\alpha N_0, I)$ be the set of all subsets of $I=\lbrace 1,\ldots, N_0 \rbrace$ of size  $\alpha N_0$, and denote by $S$ the system where Bob holds the information of the sets $I_0$ and $I_1$. The joint state of the systems $C, X^{A/B}, \Theta^{A/B}, S$ at the end of Step (10) of the protocol is 
\begin{align}
    \rho^{(10)}_{C,X^{A},X^{B}, \Theta^{A}, \Theta^{B}, S} =& \frac{1}{2}  \sum_{c} \ketbra{c}_{C} \frac{1}{2^{N_0}}  \sum_{x^{A}} \ketbra{x^{A}}_{X^{A}}
    \frac{1}{|\mathcal{T} |\cdot 2^{2 N_0}}
    \sum_{I_t}
    \!
    \sum_{\substack{(\theta^{A}, \theta^{B}) \\ \in J^{(1)}}}
    \!\!\!\!
    \ketbra{\theta^{A}, \theta^{B}}_{\theta^{A}, \theta^{B}} \nonumber \\
    &\otimes 
    \!\!\!\!
    \sum_{x^{B}\in J^{(2)}}
    \!\!\!\!
    P(x^{B} | x^{A}, \theta^{A}, \theta^{B}) \ketbra{x^{B}}_{X^{B}}
    \sum_{I_0,I_1}
    P(I_0,I_1 | I_{t}, \theta^{A}, \theta^{B})
    \ketbra{I_0,I_1}_{S},
\end{align}
where the conditional distribution $P(I_0,I_1 | I_{t}, \theta^{A}, \theta^{B})$ notably does not depend on $x^{A}$ or $c$. Tracing out the $\Theta^{A/B}$ systems and rearranging terms we get
\begin{align}
    \label{eq:correctnessbeforeIR}
    \rho^{(10)}_{S,C,X^{A},X^{B}} =& \sum_{I_0,I_1} P(I_0,I_1) \ketbra{I_0,I_1}_{S} \nonumber \\
    &\otimes \underbrace{\frac{1}{2^{N_0 +1}}  \sum_{x^{A},c} \ketbra{x^{A}, c}_{X^{A}, C} \sum_{x^{B}} P(x^{B} | I_0,I_1, x^{A}) \ketbra{x^{B}}_{X^{B}}}_{\textnormal{conditioned state}\quad \rho^{(10)}_{C,X^{A},X^{B}}(I_0,I_1) },
\end{align}
where
\begin{equation}
    P(I_0,I_1) = \frac{1}{|\mathcal{T} |\cdot 2^{2 N_0}} \sum_{I_t} \sum_{\substack{(\theta^{A}, \theta^{B}) \\ \in J^{(1)}}} P(I_0,I_1 | I_{t}, \theta^{A}, \theta^{B}),
\end{equation}
and
\begin{equation}
    P(x^{B} | I_0,I_1, x^{A}) = \frac{\sum_{I_t} \sum_{\substack{(\theta^{A}, \theta^{B}) \\ \in J^{(1)}}} P(x^{B} | x^{A}, \theta^{A}, \theta^{B}) P(I_0,I_1 | I_{t}, \theta^{A}, \theta^{B})} {\sum_{I_t} \sum_{\substack{(\theta^{A}, \theta^{B}) \\ \in J^{(1)}}} P(I_0,I_1 | I_{t}, \theta^{A}, \theta^{B})}.
\end{equation}
Now that we have a form for the conditioned state as pointed out in Eq.~\eqref{eq:correctnessbeforeIR}, we can move to the action of Step (11), where Bob computes $y^{B} = \dec(\syn(x^{A}_{I_0}, x^{B}_{I_0}))$. The resulting state of the systems $C,X^{A}_{I_0}, X^{A}_{I_1}, Y^{B}$ is then given by:
\begin{align}
    \rho^{(11)}_{C,X^{A}_{I_0}, X^{A}_{I_1}, Y^{B}}(I_0,I_1) =& \frac{1}{2^{2N_{\textnormal{raw}} +1}} \sum_{\substack{x^{A}_{I_0}, x^{A}_{I_1} \\ c }} \ketbra{x^{A}_{I_0}, x^{A}_{I_1}}_{X^{A}_{I_0}, X^{A}_{I_1}} \ketbra{c}_{C} \nonumber \\ &\otimes \left( P_{\textnormal{correct}}\ketbra{x^{A}_{I_0}}_{Y^{B}} + P_{\bot} \ketbra{\bot}_{Y^{B}} + P_{\textnormal{error}} \sigma_{Y^{B}}\right),
\end{align}
for some coefficients $P_{\textnormal{correct}}, P_{\bot}, P_{\textnormal{error}}$, and state $\sigma$ orthogonal to both $\ketbra{x^{A}_{I_0}}$ and $\ketbra{\bot}$. By applying the verifiability property of the IR scheme with security parameter $k'$ (where $P_{\textnormal{error}} = \varepsilon_{\textnormal{IR}}(k')$), and conditioning the resulting state to not having aborted, we get the desired result
\begin{equation}
    \rho^{(11),\top}_{X^{A}_{I_0}, X^{A}_{I_1}, Y^{B},C}(I_0,I_1) \approx_{\varepsilon_{\textnormal{IR}}(k')} \frac{1}{2^{2N_{\textnormal{raw}} +1}} \sum_{\substack{x^{A}_{I_0}, x^{A}_{I_1} \\ c }} \ketbra{x^{A}_{I_0}, x^{A}_{I_1}}_{X^{A}_{I_0}, X^{A}_{I_1}} \ketbra{x^{A}_{I_0}}_{Y^{B}} \ketbra{c}_{C}.
\end{equation}
\end{proof}

\subsection{Proof of Lemma~\ref{lem:commithidingsecurity}}
\label{sec:commithidingproof}

Here we present a proof of the Lemma~\ref{lem:commithidingsecurity} introduced in Section~\ref{sec:dishonestalice} regarding the hiding property of the commitment in the context of $\pi_{\textnormal{QROT}}$.

\begin{lemma}
    \label{lem:commithidingsecurityApp}
    Assuming Bob follows the protocol, for any $J \subseteq I$, the state of the system $A, \CCOM, \OOPEN_J, \Theta^{B}_{\bar{J}}$ after Step (4) satisfies
    \begin{equation}
        \label{eq:commithidingsecurityApp}
        \rho_{A, \CCOM, \OOPEN_J, \Theta^{B}_{\bar{J}}} \approx^{(c)} \rho_{A, \CCOM, \OOPEN_J} \otimes \mathsf{U}_{\Theta^{B}_{\bar{J}}},
    \end{equation}
    where 
    \begin{equation}
        \mathsf{U}_{\Theta^{B}_{\bar{J}}} = \frac{1}{2^{N_0 - |J|}}
        \sum_{\theta^{B}_{\bar{J}}}
        \ketbra{\theta^{B}_{\bar{J}}}_{\Theta^{B}_{\bar{J}}},
    \end{equation} 
    denotes the uniform distribution over all possible values of $\theta^{B}_{\bar{J}}$.
\end{lemma}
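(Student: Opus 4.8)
The plan is to invoke the computational hiding property of the commitment scheme (Def.~\ref{def:sc}) to replace, for every $j\in\bar J$, the commitment $\ccom_j$ by a commitment to a fixed dummy message $(0,0)$; after this replacement the register $\Theta^B_{\bar J}$ factors out of the state exactly, and undoing the replacement then yields the statement.

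To set up, recall that in Step~(1) Alice sends Bob some bipartite state $\rho_{A Q^B}$ (which may be adversarial, since this lemma is applied with a dishonest Alice), and in Step~(2) Bob measures each qubit $Q^B_i$ in the uniformly sampled basis $\theta^B_i$ to get $x^B_i$; write $\tilde\rho^{(\theta^B,x^B)}_A$ for the (unnormalized) state left on $A$ by Bob's outcome $x^B$. The only structural fact I need about these operators is that, because summing a projective measurement over all its outcomes is the same as a partial trace, $\sum_{x^B_{\bar J}}\tilde\rho^{(\theta^B,x^B)}_A$ equals $\bra{x^B_J}_{\theta^B_J}(\Tr_{Q^B_{\bar J}}\rho_{A Q^B})\ket{x^B_J}_{\theta^B_J}$, an operator on $A$ that depends on $(\theta^B_J,x^B_J)$ but not on $\theta^B_{\bar J}$. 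After Step~(4) --- in which Bob draws $s=(s_i)$ and sends $\ccom=(\ccom_i)$ with $\ccom_i=\com((\theta^B_i,x^B_i),s_i,r)$ and $\oopen_i=\open((\theta^B_i,x^B_i),s_i)$ --- the state of the statement is obtained by tracing out $X^B$, $\Theta^B_J$, $\OOPEN_{\bar J}$ and Bob's private $s$. (Whatever local operations Alice performs on $A$ jointly with the received $\CCOM$ never touch $\Theta^B_{\bar J}$ or $\OOPEN_J$, so it is enough to prove the claim for this ``fresh'' state and then appeal to Lemma~\ref{lem:cindistproperties}(4).)

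Now let $\rho^{\mathrm{dum}}$ be the same state with $\ccom_j$ replaced by $\com((0,0),s_j,r)$ for all $j\in\bar J$, leaving the $\oopen_j$ unchanged (those with $j\in\bar J$ are traced out anyway). Step one: $\rho_{A,\CCOM,\OOPEN_J,\Theta^B_{\bar J}}\approx^{(c)}\rho^{\mathrm{dum}}_{A,\CCOM,\OOPEN_J,\Theta^B_{\bar J}}$, via a hybrid over $\bar J=\{j_1,\dots,j_\ell\}$ in which the $m$-th hybrid dummifies $\ccom_{j_1},\dots,\ccom_{j_m}$ and keeps the rest real. For the step from hybrid $m$ to hybrid $m+1$: conditioning both on any fixed $(\theta^B_{j_{m+1}},x^B_{j_{m+1}},r)$, the two conditional states agree on all registers except $\CCOM_{j_{m+1}}$, which carries $\com((\theta^B_{j_{m+1}},x^B_{j_{m+1}}),s_{j_{m+1}},r)$ versus $\com((0,0),s_{j_{m+1}},r)$ for fresh $s_{j_{m+1}}$ independent of the rest; the hiding property makes these two commitment distributions computationally indistinguishable, Lemma~\ref{lem:cindistproperties}(2) tensors this with the common conditional ``rest'', and Lemma~\ref{lem:cindistproperties}(3) averages over $(\theta^B_{j_{m+1}},x^B_{j_{m+1}},r)$ (whose distribution is the same in both hybrids). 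Iterating transitivity, Lemma~\ref{lem:cindistproperties}(1), over the $\ell\le N_0$ steps proves step one. Step two: $\rho^{\mathrm{dum}}_{A,\CCOM,\OOPEN_J,\Theta^B_{\bar J}}=\rho^{\mathrm{dum}}_{A,\CCOM,\OOPEN_J}\otimes\mathsf{U}_{\Theta^B_{\bar J}}$ exactly --- in $\rho^{\mathrm{dum}}$ the only registers carrying any dependence on $\theta^B_{\bar J}$ are the $\ccom_j$ with $j\in\bar J$ (now neutralized) and the system $A$, which is correlated with $\theta^B_{\bar J}$ only through $x^B_{\bar J}$; since $X^B$ has been traced out, summing over $x^B_{\bar J}$ turns $\tilde\rho^{(\theta^B,x^B)}_A$ into the $\theta^B_{\bar J}$-independent operator above, so the register $\Theta^B_{\bar J}$ splits off carrying the uniform distribution.

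To conclude, trace $\Theta^B_{\bar J}$ out of step one and re-tensor $\mathsf{U}_{\Theta^B_{\bar J}}$ (Lemma~\ref{lem:cindistproperties}(4), then~(2)), obtaining $\rho_{A,\CCOM,\OOPEN_J}\otimes\mathsf{U}_{\Theta^B_{\bar J}}\approx^{(c)}\rho^{\mathrm{dum}}_{A,\CCOM,\OOPEN_J}\otimes\mathsf{U}_{\Theta^B_{\bar J}}$, and chain everything by transitivity:
\begin{equation}
  \rho_{A,\CCOM,\OOPEN_J,\Theta^B_{\bar J}}\;\approx^{(c)}\;\rho^{\mathrm{dum}}_{A,\CCOM,\OOPEN_J,\Theta^B_{\bar J}}\;=\;\rho^{\mathrm{dum}}_{A,\CCOM,\OOPEN_J}\otimes\mathsf{U}_{\Theta^B_{\bar J}}\;\approx^{(c)}\;\rho_{A,\CCOM,\OOPEN_J}\otimes\mathsf{U}_{\Theta^B_{\bar J}}.
\end{equation}
The main obstacle I anticipate is step two: one must be sure that $A$ is genuinely uncorrelated with $\Theta^B_{\bar J}$ once the $\bar J$-commitments are neutralized. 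The shared entanglement does correlate $A$ with Bob's measurement data, but the carrier of that correlation is $x^B_{\bar J}$, not $\theta^B_{\bar J}$, so marginalizing over $x^B_{\bar J}$ --- which happens automatically because $X^B$ is traced out --- renders $\theta^B_{\bar J}$ information-theoretically invisible to $A$ no matter what state Alice prepared in Step~(1). The hybrid reduction of step one is otherwise routine, the only point of care being that Def.~\ref{def:sc}'s hiding bound is uniform in $(m_1,m_2,r)$ so that the average in Lemma~\ref{lem:cindistproperties}(3) is legitimate.
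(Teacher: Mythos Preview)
Your proof is correct and follows essentially the same approach as the paper: both replace the commitments $\ccom_j$ for $j\in\bar J$ by commitments to a fixed dummy message via the hiding property (the paper sets $\tilde\sigma^r_{\CCOM_i}$ to the distribution of $\com((0,0),s,r)$, exactly your choice), then use the structural fact that tracing over $x^B_{\bar J}$ makes the residual state on $A$ independent of $\theta^B_{\bar J}$ (the paper's Eq.~\eqref{eq:basisindependence}), and finally chain by transitivity. The only cosmetic difference is that you spell out an explicit hybrid over the indices in $\bar J$ whereas the paper invokes Lemma~\ref{lem:cindistproperties}(2),(3) to do all replacements in one shot; these are equivalent.
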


\begin{proof}
We start by describing the general form of the state prepared by Alice at the beginning of the protocol, which is sent to Bob. Because the value of $r \in \bs^{n_r}$ sent by Alice in Step (3) as part of the commitment scheme is independent of any of Bobs actions, we can consider without loss of generality that it is prepared at the start of the protocol. The state shared at the beginning of the protocol (after Bob receives his qubit shares) has a general form given by
\begin{equation}
    \label{eq:cheataliceinit}
    \rho^{(0)}_{\Phi^{B},R,A} = \ketbra{\psi^{(0)}}_{\Phi^{B}, R, A} \quad \quad \quad \ket{\psi^{(0)}}_{\Phi^{B}, R, A} = \sum_{x, r} \alpha^{x,r} \ket{x}_{\Phi^{B}} \ket{r}_{R} \ket{\phi^{x,r}}_{A},
\end{equation}
where the $\ket{\phi^{x,r}}$ are not necessarily orthogonal. Using the Hadamard operator $H$, we can define the states
\begin{equation}
     \ket{x,\theta} = H^{\theta}\ket{x} = H^{\otimes \theta_1} \otimes \ldots \otimes H^{\theta_{N_0}}\ket{x},
\end{equation}
and write, for any string of basis choices $\theta \in \bs^{N_0}$, the state $\ket{\psi^{(0)}}_{\Phi^{B}, R, A}$ as
\begin{align}
    \label{eq:premeasurementdishonestlalice}
    \ket{\psi^{(0)}}_{\Phi^{B}, R, A} &= \sum_{x',r} \alpha^{x',r} \ket{\phi^{x',r}}_{A} \ket{r}_{R} \sum_{x}\bra{x}H^{\theta}\ket{x'}\ket{x, \theta}_{\Phi^{B}} \nonumber \\
    &=  \sum_{x, r} \beta^{x, \theta, r} \ket{x, \theta}_{\Phi^{B}} \ket{r}_{R} \ket{\phi^{x, \theta, r}}_{A},
\end{align}
with
\begin{align}
     \beta^{x, \theta, r} = \left( \sum_{x'} |\bra{x}H^{\theta}\ket{x'} \alpha^{x',r}|^2 \right)^{\frac{1}{2}} \quad \quad \quad  \ket{\phi^{x, \theta, r}} =   \left(\beta^{x, \theta, r}\right)^{-1} \sum_{x'} \bra{x}H^{\theta}\ket{x'} \alpha^{x',r} \ket{\phi^{x',r}}.
\end{align}

After uniformly sampling the values of $\theta^{B}$, Bob proceeds to perform his measurement on his qubit shares. Let $\mathcal{H}_{X^{B}}$ denote the system where Bob records the outcome string $x^{B}$. Additionally, at Step (3) Bob receives the value of $r$, this is a classical message, which we model as Bob receiving the $\mathcal{H}_{R}$ system and measuring it in the computational basis upon arrival. We can now easily use Eq.~\eqref{eq:premeasurementdishonestlalice} to get the post-measurement state at the end of Step (3) after tracing out $\mathcal{H}_{\Phi^{B}}$, which is given by
\begin{equation}
    \rho^{(1)}_{X^{B}, \Theta^{B}, R, A} = \frac{1}{2^{N_0}}\sum_{x^{B}, \theta^{B}, r} |\beta^{x^{B}, \theta^{B}, r}|^2 \ketbra{\theta^{B}}_{\Theta^{B}} \ketbra{x^{B}}_{X^{B}} \ketbra{r}_{R} \ketbra{\phi^{x^{B}, \theta^{B}, r}}_{A}.
\end{equation}
Before proceeding with the protocol, it will be useful to state some basic properties of the above state. Note that even though each of the $\ket{\phi^{x^{B}, \theta^{B}, r}}$ depends on $\theta^{B}$, the partial trace 
\begin{equation}
    \Tr_{X^{B}}[\rho^{(1)}] = \frac{1}{2^{N_0}}\sum_{\theta^{B}} \ketbra{\theta^{B}}_{\Theta^{B}} \otimes \sum_{x, r} |\alpha^{x, r}|^{2} \ketbra{r}_{R} \ketbra{\phi^{x,r}}_{A} 
\end{equation}
has a product form. Furthermore, because honest Bob measures each of his qubits independently, for any $I' \subseteq I$, the partial trace
\begin{align}
    \label{eq:basisindependence}
    \Tr_{X^{B}_{I'}}[\rho^{(1)}] = \frac{1}{2^{|I'|}}
    \!\!\!\!\!
    \sum_{\substack{\theta^{B}_{I'} \\ \in \bs^{|I'|}}} 
    \!\!\!\!\!
    \ketbra{\theta^{B}_{I'}}_{\Theta^{B}_{I'}} \otimes \rho^{(1)}_{X^{B}_{\bar{I'}}, \Theta^{B}_{\bar{I'}}, R, A}
\end{align}
also has a product form. As Alice will be able to perform quantum operations on her part of the joint state, it's important to note that the above property holds even after the $A$ subsystem undergoes an arbitrary CPTP transformation independent of $\Theta^{B}_{I'}$ and $X^{B}_{I'}$. During Step (4) Bob commits his values of $\theta^{B}$ and $x^{B}$, for that he samples the values of $s = \left( s_1, \ldots, s_{N_0} \right)$ and computes 
\begin{align}
    \ccom &= \left( \com((\theta^{B}_1, x^{B}_1), s_1, r), \ldots, \com((\theta^{B}_{N_0}, x^{B}_{N_0}), s_{N_0}, r) \right) \nonumber \\
    \textnormal{open} &= \left( \open((\theta^{B}_1, x^{B}_1), s_1), \ldots, \open((\theta^{B}_{N_0}, x^{B}_{N_0}), s_{N_0}) \right),
\end{align}
leading to the state
\begin{align}
    \label{eq:postcom}
    \rho^{(2)} =& \frac{1}{2^{N_0}}\sum_{x^{B}, \theta^{B}, r} |\beta^{x^{B}, \theta^{B}, r}|^2 \ketbra{\theta^{B}}_{\Theta^{B}} \ketbra{x^{B}}_{X^{B}} \ketbra{r}_{R} \ketbra{\phi^{x^{B}, \theta^{B}, r}}_{A}  \nonumber \\
    &\bigotimes_{i \in I} \bigg(  \frac{1}{2^{n_s}} \sum_{s_i} \ketbra{\com((\theta^{B}_i, x^{B}_i), s_i, r)}_{\textnormal{COM}_i}  \nonumber \\
    & \otimes \ketbra{\open((\theta^{B}_i, x^{B}_i), s_i)}_{\textnormal{OPEN}_i}  \bigg).
\end{align}
Let $J \subseteq I$, we now want to use the hiding property of the commitment scheme to approximate the state \eqref{eq:postcom} to one where the values of $\ccom$ and $\oopen_{J}$ don't provide any information about $\theta^{B}_{\Bar{J}}$. First, we proceed to rewrite the expression for the $\textnormal{COM}_i$ and $\textnormal{OPEN}_i$ subsystems by grouping the individual values of $\ccom_i$
\begin{align}
    \label{eq:factorcoms}
    &\frac{1}{2^{n_s}} \sum_{s_i} \ketbra{\com((\theta^{B}_i, x^{B}_i), s_i, r)}_{\textnormal{COM}_i} \ketbra{\open((\theta^{B}_i, x^{B}_i), s_i)}_{\textnormal{OPEN}_i} \nonumber \\
    =& \sum_{\substack{\ccom_i \\ \in \bs^{n_c}}}
    \!\!\!\!\! 
    P^{\theta_i,x_i,r}_{\textnormal{com}}(\ccom_i) \ketbra{\ccom_i}_{\textnormal{COM}_i}
    \!\!\!\!\!\!
    \sum_{\substack{\textnormal{open}_i \\ \in \mathcal{C}_r(\textnormal{com}_i)}}
    \!\!\!\!\!\!
    P^{\theta_i,x_i,r}_{\textnormal{open}}(\ccom_i,\oopen_i)\ketbra{\textnormal{open}_i}_{\textnormal{OPEN}_i} \nonumber\\
    =& \:\: \sigma^{\theta_i,x_i,r}_{\CCOM_i, \OOPEN_i},
\end{align}
where $P^{\theta_i,x_i,r}_{\textnormal{com}}$ is the respective distribution for $\ccom_i$ for uniformly sampled $s_i$, which depends on the commitment scheme used, and $\mathcal{C}_r(\textnormal{com}_i)$ is the set of strings $\textnormal{open}_i$ that satisfy $\ver(\textnormal{com}_i, \textnormal{open}_i, r) \neq \bot$. Substituting Eq.~\eqref{eq:factorcoms} into Eq.~\eqref{eq:postcom} and tracing out $\OOPEN_{\bar{J}}$ and $R$ results in
\begin{align}
    \label{eq:factorcoms2}
   \rho^{(2)} = \frac{1}{2^{N_0}}\sum_{x^{B}, \theta^{B}, r} |\beta^{x^{B}, \theta^{B}, r}|^2 \ketbra{\theta^{B}}_{\Theta^{B}} \ketbra{x^{B}}_{X^{B}} \ketbra{\phi^{x^{B}, \theta^{B}, r}}_{A} \underbrace{\bigotimes_{i\in J} \sigma^{\theta_i,x_i,r}_{\CCOM_i, \OOPEN_i}}_{\sigma^{\theta_J,x_J,r}_{\CCOM_J, \OOPEN_J}}
   \underbrace{\bigotimes_{i\in \bar{J}} \sigma^{\theta_i,x_i,r}_{\CCOM_i}}_{\sigma^{\theta_{\Bar{J}},x_{\Bar{J}}, r}_{\CCOM_{\Bar{J}}}}.
\end{align}
The hiding property of the commitment scheme states that for any fixed $r$, the distributions $P^{\theta_i,x_i,r}_{\textnormal{com}}$ are computationally indistinguishable among themselves. Let $P^{r}_{\ccom} = P^{0,0,r}_{\ccom}$, then 
\begin{equation}
    \label{eq:comhiding3}
    \sigma^{\theta_i,x_i, r}_{\CCOM_i} \approx^{(c)} \tilde{\sigma}^{r}_{\CCOM_i},
\end{equation}
with
\begin{equation}
    \tilde{\sigma}^{r}_{\CCOM_i} = \!\!\!\!\! 
    \sum_{\substack{\ccom_i \\ \in \bs^{n_c}}}
    \!\!\!\!\! 
    P^{r}_{\ccom}(\ccom_i) \ketbra{\ccom_i}_{\textnormal{COM}_i}.
\end{equation}
Applying Eq.~\eqref{eq:comhiding3} to the $\bar{J}$ subset in Eq.~\eqref{eq:factorcoms2}, and from Lemma~\ref{lem:cindistproperties} (2) and (3) we get that
\begin{equation}
    \label{eq:comhiding4}
    \rho^{(2)}_{\Theta^{B}, X^{B}, A, \CCOM, \OOPEN_{J}} \approx^{(c)} \tilde{\rho}^{(2)}_{\Theta^{B}, X^{B}, A, \CCOM, \OOPEN_{J}},
\end{equation}
where 
\begin{equation}
    \tilde{\rho}^{(2)} = \frac{1}{2^{N_0}}\sum_{x^{B}, \theta^{B}, r} |\beta^{x^{B}, \theta^{B}, r}|^2 \ketbra{\theta^{B}}_{\Theta^{B}} \ketbra{x^{B}}_{X^{B}} \ketbra{\phi^{x^{B}, \theta^{B}, r}}_{A} \sigma^{\theta_{J},x_{J}, r}_{\CCOM_{J}, \OOPEN_{J}} \tilde{\sigma}^{r}_{\CCOM_{\Bar{J}}}.
\end{equation}
Note that, since both $\sigma^{\theta_{J},x_{J}}_{\CCOM_{J}, \OOPEN_{J}}$ and $\tilde{\sigma}_{\CCOM_{\Bar{J}}}$ are independent of $\theta^{B}_{\Bar{J}}, x^{B}_{\Bar{J}}$, we can use Eq.\eqref{eq:basisindependence} with $I' = \Bar{J}$ such that, after tracing the $\Theta^{B}_{J},X^{B}$ subsystem, we obtain the state
\begin{equation}
    \Tilde{\rho}^{(2)}_{\Theta^{B}_{\bar{J}}, A, \CCOM, \OOPEN_{J}} = \mathsf{U}_{\Theta^{B}_{\bar{J}}} \otimes \Tilde{\rho}^{(2)}_{A, \CCOM, \OOPEN_{J}}.
\end{equation}
Finally, using Lemma~\ref{lem:cindistproperties} (1) and (2) we obtain the required result
\begin{equation}
    \rho^{(2)}_{\Theta^{B}_{\bar{J}}, A, \CCOM, \OOPEN_{J}} \approx^{(c)} \mathsf{U}_{\Theta^{B}_{\bar{J}}} \otimes \rho^{(2)}_{A, \CCOM, \OOPEN_{J}}.
\end{equation}

\end{proof}

\subsection{Proof of Lemma~\ref{lem:stringsepsecurity}}
\label{sec:stringseparationproof}
In this section we present a proof of Lemma~\ref{lem:stringsepsecurity}, which states that the string separation step of $\pi_{\textnormal{QROT}}$ does not leak any information about the random bit $c$ to the receiver.

\begin{lemma}
    \label{lem:stringsepsecurity2}
    Let $\mathcal{E}^{(I_t)}: \mathcal{H}_{A,\Theta^{A}_{\bar{I}_t},{\Theta^{B}_{\bar{I}_t}}, C} \rightarrow \mathcal{H}_{A,\Theta^{A}_{\bar{I}_t}, {\Theta^{B}_{\bar{I}_t}}, C, \textnormal{SEP}}$ be the quantum operation used by Bob to compute the string separation information $(J_0, J_1)$ during Step (9) of the protocol. The resulting state after applying $\mathcal{E}^{(I_t)}$ to a product state of the form
    \begin{equation}
        \label{eq:productform}
        \mathcal{E}^{(I_t)}(\rho_{A,\Theta^{A}_{\bar{I}_t}} \otimes \mathsf{U}_{\Theta^{B}_{\bar{I}_t}} \otimes \mathsf{U}_{C}) = \sigma_{A,\Theta^{A}_{\bar{I}_t}, {\Theta^{B}_{\bar{I}_t}}, C, \textnormal{SEP}}
    \end{equation}
    satisfies
    \begin{equation}
    \label{eq:stringindependence}
        \Tr_{\Theta^{A}_{\bar{I}_t},\Theta^{B}_{\bar{I}_t}} \big[ \sigma_{A,\Theta^{A}_{\bar{I}_t}, {\Theta^{B}_{\bar{I}_t}}, C, \textnormal{SEP}} \big] =  \sigma_{A} \otimes \sigma_{\textnormal{SEP}} \otimes \mathsf{U}_{C}.
    \end{equation}
\end{lemma}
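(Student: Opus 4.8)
The plan is to exploit two features of Bob's honest string‑separation operation: it acts trivially on Alice's laboratory, and it depends on the bases registers only through the \emph{matching pattern} $\theta^{A}_{\bar{I}_t}\oplus\theta^{B}_{\bar{I}_t}$; consequently the uniform randomness carried by $\Theta^{B}_{\bar{I}_t}$ can be pushed through $\mathcal{E}^{(I_t)}$, and once both $\Theta^{A}_{\bar{I}_t}$ and $\Theta^{B}_{\bar{I}_t}$ are traced out it completely randomizes that pattern. First I would record two modelling points already used in this paper. Since $\Theta^{A}_{\bar{I}_t}$ reaches Bob as a classical message, $\mathcal{E}^{(I_t)}$ may be taken to begin by measuring it in the computational basis (exactly as in the proof of Lemma~\ref{lem:commithidingsecurity}), so without loss of generality the input is classical on $\Theta^{A}_{\bar{I}_t}$ and we may write $\rho_{A,\Theta^{A}_{\bar{I}_t}}=\sum_{a}P(a)\,\sigma^{a}_{A}\otimes\ketbra{a}_{\Theta^{A}_{\bar{I}_t}}$. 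Since Bob acts only inside his own laboratory, $\mathcal{E}^{(I_t)}=\mathrm{id}_{A}\otimes\mathcal{F}$, where $\mathcal{F}$ is the classical channel that, on input $(\theta^{A}_{\bar{I}_t},\theta^{B}_{\bar{I}_t},c)=(a,b,\gamma)$, samples $I_0$ uniformly among the $N_{\text{raw}}$‑element subsets of the matching set $M=\{i:a_i=b_i\}$ and $I_1$ uniformly among the $N_{\text{raw}}$‑element subsets of $\bar M=\{i:a_i\neq b_i\}$, and writes $(J_0,J_1)=(I_\gamma,I_{\bar\gamma})$ into $\textnormal{SEP}$; the abort branch ($|M|<N_{\text{raw}}$ or $|\bar M|<N_{\text{raw}}$) is recorded by a distinguished symbol in $\textnormal{SEP}$ (or by the zero operator).

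Feeding the product state of Eq.~\eqref{eq:productform} into $\mathcal{E}^{(I_t)}$ and tracing out $\Theta^{A}_{\bar{I}_t}$ and $\Theta^{B}_{\bar{I}_t}$ leaves
\[
\sum_{a}P(a)\,\sigma^{a}_{A}\otimes\Big(\tfrac12\sum_{\gamma}\ketbra{\gamma}_{C}\otimes\bar\tau^{\,a,\gamma}_{\textnormal{SEP}}\Big),\qquad
\bar\tau^{\,a,\gamma}_{\textnormal{SEP}}:=\frac{1}{2^{|\bar{I}_t|}}\sum_{b}\tau^{\,a,b,\gamma}_{\textnormal{SEP}},
\]
where $\tau^{\,a,b,\gamma}_{\textnormal{SEP}}$ denotes the (diagonal) state $\mathcal{F}$ writes into $\textnormal{SEP}$ on classical input $(a,b,\gamma)$. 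The conclusion now reduces to two independence claims. First, $\tau^{\,a,b,\gamma}_{\textnormal{SEP}}$ depends on $(a,b)$ only through $m:=a\oplus b$ — the sets $M,\bar M$, and hence the entire sampling rule, are functions of $m$ alone — so writing $\tau^{\,a,b,\gamma}_{\textnormal{SEP}}=\hat\tau^{\,m,\gamma}_{\textnormal{SEP}}$ and using that $b\mapsto a\oplus b$ is a bijection of $\bs^{|\bar{I}_t|}$ gives $\bar\tau^{\,a,\gamma}_{\textnormal{SEP}}=2^{-|\bar{I}_t|}\sum_{m}\hat\tau^{\,m,\gamma}_{\textnormal{SEP}}=:\bar\tau^{\,\gamma}_{\textnormal{SEP}}$, independent of $a$. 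Second, $\bar\tau^{\,\gamma}_{\textnormal{SEP}}$ is independent of $\gamma$: the involution $m\mapsto\bar m$ on $\bs^{|\bar{I}_t|}$ interchanges $M$ and $\bar M$, the weight $\binom{|M|}{N_{\text{raw}}}^{-1}\binom{|\bar M|}{N_{\text{raw}}}^{-1}$ attached to a pattern is invariant under that interchange, and the abort region is invariant as well, so reindexing the sum over $m$ by $\bar m$ maps the $\gamma=0$ expression onto the $\gamma=1$ expression. Setting $\sigma_{\textnormal{SEP}}:=\bar\tau^{\,0}_{\textnormal{SEP}}=\bar\tau^{\,1}_{\textnormal{SEP}}$ and $\sigma_{A}:=\sum_{a}P(a)\,\sigma^{a}_{A}=\rho_{A}$, the displayed state collapses to $\sigma_{A}\otimes\sigma_{\textnormal{SEP}}\otimes\mathsf{U}_{C}$, which is precisely Eq.~\eqref{eq:stringindependence}.

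I expect the only genuinely substantive step to be the second independence claim, the $\gamma$‑invariance of $\bar\tau^{\,\gamma}_{\textnormal{SEP}}$: it is exactly the point where the symmetry of Bob's selection rule under swapping matching and non‑matching positions meets the uniformity of $m$ inherited from $\mathsf{U}_{\Theta^{B}_{\bar{I}_t}}$, and it is what hides the choice bit $c$ from Alice. The rest is routine under the modelling conventions already fixed in the paper. The one detail I would handle carefully is the abort branch: with a distinguished‑symbol convention $\sigma_{\textnormal{SEP}}$ is a bona fide state and the identity is exact, whereas with the zero‑operator convention $\sigma_{\textnormal{SEP}}$ is merely sub‑normalized by the abort probability — but since that probability, and the abort region itself, is symmetric under $M\leftrightarrow\bar M$, the product form and hence Eq.~\eqref{eq:stringindependence} hold verbatim in either case.
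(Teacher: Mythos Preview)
Your proof is correct and rests on the same two ingredients the paper uses --- that Bob's operation depends on $(\theta^{A}_{\bar{I}_t},\theta^{B}_{\bar{I}_t})$ only through $m=\theta^{A}_{\bar{I}_t}\oplus\theta^{B}_{\bar{I}_t}$, and that the uniform distribution on $m$ inherited from $\mathsf{U}_{\Theta^{B}_{\bar{I}_t}}$, combined with a symmetry of the sampling rule, makes the output independent of $c$.

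The organization differs, however. The paper fixes $\theta^{A}$ and directly computes the marginal $P(I_0,I_1)=\sum_{\theta^{\textnormal{ch}}\in C(I_0,I_1)}\binom{|M|}{N_{\textnormal{raw}}}^{-1}\binom{|\bar M|}{N_{\textnormal{raw}}}^{-1}2^{-N_1}$, then invokes a \emph{permutation} symmetry --- any two admissible pairs $(I_0,I_1)$ and $(I_0',I_1')$ are related by a permutation of $\bar I_t$, so $|C(I_0,I_1)|$ is constant --- to conclude $P(I_0,I_1)=P^{\textnormal{SEP}}$ is independent of the pair; the resulting uniform distribution makes the relabelling $(I_0,I_1)\mapsto(I_c,I_{\bar c})$ invisible. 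You instead isolate the two independence claims and, for $c$-independence, apply the \emph{complementation} involution $m\mapsto\bar m$ directly, which carries $\hat\tau^{\,m,1}$ to $\hat\tau^{\,\bar m,0}$ without passing through uniformity. Your route is slightly more economical, proving exactly the product form in Eq.~\eqref{eq:stringindependence} and no more; the paper's route yields the marginally stronger statement that $\sigma_{\textnormal{SEP}}$ is in fact uniform over admissible pairs. Your handling of the abort branch under either convention is also correct, since the abort region is invariant under $M\leftrightarrow\bar M$.
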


\begin{proof}
Let $\theta^{\textnormal{ch}} =  \theta^{A} \oplus \theta^{B}$ and, for $b \in \bs$, define the sets $S_{b} = \lbrace i \in \bar{I}_t \: \vert \:  \theta^{\textnormal{ch}}_i = b \rbrace$. Bob's operation consists on randomly choosing subsets $I_0, I_1$ of size $N_{\textnormal{raw}}$, from $S_0$ and $S_1$, respectively, and then computing $J_0 = I_c, J_1 = I_{\Bar{c}}$. Denote by $N_1 = (1-\alpha)N_0$ the size of the working set $\Bar{I}_t$, so that $N_{\textnormal{raw}} = (\frac{1}{2}-\delta_2) N_1$. If the number of matching bases in $\bar{I}_t$, given by the Hamming weight $w_{\textnormal{H}}(\theta^{\textnormal{ch}}_{\bar{I}_{t}})$, is either smaller than $N_{\textnormal{raw}}$ or greater than $N_1 - N_{\textnormal{raw}}$, Bob won't be able to construct either $I_0$ or $I_1$, in which case he sends an abort message to Alice independently of the value of $c$ and Eq.~\eqref{eq:stringindependence} is satisfied. On the other hand, we will show that whenever $N_{\textnormal{raw}} \leq w_{\textnormal{H}}(\theta^{\textnormal{ch}}_{\bar{I}_{t}}) \leq N_1 - N_{\textnormal{raw}}$, the probability of choosing $I_0, I_1$ is the same for every $I_0, I_1$. 
Define 
\begin{equation}
    \label{eq:sepadmissible}
    C(I_0, I_1)= \lbrace \theta^{\textnormal{ch}} \: : \: \forall i\in I_0 \forall j \in I_1 (\theta^{\textnormal{ch}}_i = 0 \And \theta^{\textnormal{ch}}_j = 1) \rbrace.
\end{equation}
Note that, because for any two pairs $(I^{1}_0, I^{1}_1), (I^{2}_0, I^{2}_1)$ the elements of $C(I^{1}_0, I^{1}_1)$ and $C(I^{2}_0, I^{2}_1)$ are related to each other through a permutation of indices, the size of the $C(I_0, I_1)$ is independent of $I_0, I_1$. The probability of Bob choosing $I_0, I_1$ is then given by
\begin{align}
    \label{eq:sepprobability}
    P(I_{0},I_{1}) &= 
    \!\!\!\!\!\!\!
    \sum_{\theta^{B}_{\bar{I}_t} \in C(I_0, I_1)} 
    \!\!\!\!\!\!\! 
    P(I_{0},I_{1} \: \vert \: \theta^{B}_{\bar{I}_t})P(\theta^{B}_{\bar{I}_t}) \nonumber \\
    &= \sum_{n=N_{\textnormal{raw}}}^{N_0 - N_{\textnormal{raw}}} \sum_{\substack{w_{H}(\theta^{\textnormal{ch}}_{\bar{I}_{t}}) = n \\ \theta^{\textnormal{ch}} \in C(I_{0}, I_{1})}} \binom{n}{N_{\textnormal{raw}}}^{-1} \binom{N_1 - n}{N_{\textnormal{raw}}}^{-1} P(\theta^{\textnormal{ch}}) \nonumber \\
    &= \sum_{n=N_{\textnormal{raw}}}^{N_0 - N_{\textnormal{raw}}} \sum_{\substack{w_{H}(\theta^{\textnormal{ch}}_{\bar{I}_{t}}) = n \\ \theta^{\textnormal{ch}} \in C(I_{0}, I_{1})}} \binom{n}{N_{\textnormal{raw}}}^{-1} \binom{N_1 - n}{N_{\textnormal{raw}}}^{-1} 2^{-N_{1}} \nonumber \\
    &= P^{\textnormal{SEP}},
\end{align}
where the combinatorial factors come from the fact that, for each $\theta^{\textnormal{ch}}$, the $I_0, I_1$ are chosen uniformly among all available compatible combinations, and the $2^{-N_{1}}$ factor comes from the fact that both $\theta^{A}$ and $\theta^{B}$ are sampled independently and $\theta^{B}$ is sampled uniformly (as guaranteed by the product form Eq.~\eqref{eq:productform}), and the last equality comes from the fact that the number of elements in $C(I_0, I_1)$ is constant, and hence the number of terms in the summation is the same for every $(I_0, I_1)$. Importantly, note that $P^{\textnormal{SEP}}$ is independent of $(I_0, I_1)$. To obtain Eq.~\eqref{eq:stringindependence} we start by computing 
\begin{align}
    \sigma_{\textnormal{SEP}, {\Theta^{B}_{\bar{I}_t}}, C} &= \mathcal{E}^{(I_t, \Theta^{A}_{\Bar{I}_t})}(\mathsf{U}_{\Theta^{B}_{\bar{I}_t}} \otimes \mathsf{U}_{C}) \nonumber \\
    &= \frac{1}{2^{N_1}}
    \sum_{\theta^{B}_{\bar{I}_t}}
    \ketbra{\theta^{B}_{\bar{I}_t}}_{\Theta^{B}_{\bar{I}_t}}\otimes \frac{1}{2} \sum_{c}\ketbra{c}_{C} \otimes \sum_{I_{0},I_{1}} P(I_{0},I_{1} \: \vert \: \theta^{B}_{\bar{I}_t}) \ketbra{I_{c}, I_{\bar{c}}}_{\textnormal{SEP}} \nonumber \\
    &= \frac{1}{2} \sum_{c}\ketbra{c}_{C} \otimes \sum_{I_{0},I_{1}} P(I_{0},I_{1}) \ketbra{I_{c}, I_{\bar{c}}}_{\textnormal{SEP}} \otimes 
    \!\!\!\!\!\!\!\!
    \sum_{\theta^{B}_{\bar{I}_t} \in C(I_0, I_1)} 
    \!\!\!\!\!\!\!\!
    P(\theta^{B}_{\bar{I}_t} \: \vert \: I_{0},I_{1}) \ketbra{\theta^{B}_{\bar{I}_t}}_{\Theta^{B}_{\bar{I}_t}},
\end{align}
where the sum in $\textnormal{SEP}$ goes over all possible $I_0,I_1$ given $I_t$. Tracing out $\Theta^{B}_{\bar{I}_t}$ and using Eq.~\eqref{eq:sepprobability} we obtain
\begin{align}
    \Tr_{\Theta^{B}_{\bar{I}_t}} \big[ \sigma_{\textnormal{SEP},  {\Theta^{B}_{\bar{I}_t}}, C} \big] &= \frac{1}{2} \sum_{c}\ketbra{c}_{C} \otimes \sum_{I_{0},I_{1}} P^{\textnormal{SEP}} \ketbra{I_{c}, I_{\bar{c}}}_{\textnormal{SEP}} \nonumber \\
    &= \frac{1}{2} \sum_{c}\ketbra{c}_{C} \otimes \sum_{I_{0},I_{1}} P^{\textnormal{SEP}}\ketbra{I_{0}, I_{1}}_{\textnormal{SEP}} \nonumber \\
    &=  \mathsf{U}_{\textnormal{SEP}}^{I_t} \otimes \mathsf{U}_{C}.
\end{align}

\end{proof}

\subsection{Proof of Lemma~\ref{lem:postmeasuremententropy}}
\label{sec:postmeasureentropyproof}
In this section we present a proof of Lemma~\ref{lem:postmeasuremententropy}, introduced in Section~\ref{sec:secanalysis} as part of the security analysis against a dishonest receiver. Recall that the transcript of the protocol $\Vec{\tau} = (x^{A}_{I_t},\theta^{A},r,\ccom,I_t,I_s, \oopen_{I_t}, r)$ is defined to consist of all classical information (with the exception of her measurement outcomes) that Alice has access up to Step (8) of the protocol. 

\begin{lemma}
    \label{lem:postmeasuremententropy2}
      Assuming Alice follows the protocol, let $X^{A}, B$ denote the systems of Alice measurement outcomes and Bob's laboratory at the end of Step (9) of the protocol, and let $\rho_{X^{A}, B}$ be the state of the joint system at that point. There exists a state $\tilde{\rho}_{X^{A}, B}$, such as:
      \begin{enumerate}
        \item The conditioned states $\tilde{\rho}_{X^{A}, B}(\Vec{\tau}, J_0, J_1)$ satisfy
        \begin{align}
                H_{\textnormal{min}} (X^{A}_{J_0}|X^{A}_{J_1} B)_{\tilde{\rho}(\Vec{\tau}, J_0, J_1)} + H_{\textnormal{min}} (X^{A}_{J_1}|X^{A}_{J_0} B)_{\tilde{\rho}(\Vec{\tau}, J_0, J_1)} \geq 2 N_{\text{raw}} \left( \frac{1}{2} - \frac{2\delta_2}{1-2\delta_2} - h\left( \frac{p_{\text{max}} + \delta_1}{\frac{1}{2} - \delta_2}\right) \right);
        \end{align}
        \item $\rho_{X^{A}, B} \approx_{\varepsilon} \tilde{\rho}_{X^{A}, B}$, with 
        \begin{equation}
        \label{eq:epsilonsecuritydishonestb2}
        \varepsilon = \left(2 (e^{-\frac{1}{2} \alpha (1-\alpha)^{2} N_0 \delta^{2}_1} +  e^{-\frac{1}{2} (\frac{1}{2} -\delta_2)\alpha N_0 \delta^{2}_1}) \right)^{\frac{1}{2}} + e^{-D_{KL}(\frac{1}{2} - \delta_2|\frac{1}{2})(1-\alpha) N_0} + \varepsilon_{\textnormal{bind}}(k),
        \end{equation}
        where $h(\cdot)$ and $D_{KL}(\cdot | \cdot)$ denote the binary entropy and the binary relative entropy functions, respectively, and $\varepsilon_{\textnormal{bind}}(k)$ is a negligible function given by the binding property of the commitment scheme.
    \end{enumerate}
\end{lemma}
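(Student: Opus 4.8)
The plan is to follow the standard BBCS-style analysis (as in~\cite{damgard09, bouman10}) in four movements: reduce Bob's commitments to classical committed values using the binding property of Definition~\ref{def:sc}; apply Lemma~\ref{lem:testsecurity} together with the sampling bound of Lemma~\ref{lem:hoeffding} to pass to a nearby state whose support is concentrated on low-error strings; control the number of matching-basis positions by a Chernoff bound; and finally convert the resulting pure-state support bound into the two conditional min-entropies via Lemma~\ref{lem:minentropybound}.

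\textbf{Constructing $\tilde\rho$ (item 2).} I would first invoke the binding property: except on an event of probability at most $\varepsilon_{\textnormal{bind}}(k)$, every $\ccom_i$ admits at most one valid opening, so conditioning on this event I may replace Bob's commitment data by a classical register holding well-defined committed values $(\hat\theta^B_i,\hat x^B_i)$ (the values Alice would recover on opening). Since Alice's halves of the $\ket{\Phi^+}$ pairs are maximally mixed, $x^A$ is uniform and her Step-(2) measurements commute with everything Bob does up to Step~(8), so I may defer them and condition on $\theta^A$; rotating Alice's registers by $H^{\theta^A}$ turns her measurements into computational-basis measurements, and the joint state of her qubits $X$ and a purification $E$ of Bob's laboratory is pure. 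Writing $M=\{i:\theta^A_i=\hat\theta^B_i\}$, I apply Lemma~\ref{lem:testsecurity} with $\hat x=\hat x^B$ and with the two compared subsets $I_s=I_t\cap M$ and $M\setminus I_s$: by Lemma~\ref{lem:hoeffding}(c), with sample size $|I_s|\ge N_{\textnormal{check}}=(\tfrac12-\delta_2)\alpha N_0$, the deviation probability is at most $\mathcal Q=2\bigl(e^{-\frac12\alpha(1-\alpha)^2 N_0\delta_1^2}+e^{-\frac12 N_{\textnormal{check}}\delta_1^2}\bigr)$, so Lemma~\ref{lem:testsecurity} yields a state at trace distance $\le\mathcal Q^{1/2}$ whose $X$-support, after conditioning on the Step-(7) test passing, consists of strings with $r_{\textnormal H}(x_{M\setminus I_s}\oplus\hat x^B_{M\setminus I_s})\le p_{\textnormal{max}}+\delta_1$. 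Finally I condition on $|M\cap\bar I_t|\le(\tfrac12+\delta_2)(1-\alpha)N_0$, which by the Chernoff--KL bound (each index lands in $M$ independently with probability $\tfrac12$, and $D_{KL}$ is symmetric about $\tfrac12$) costs at most $e^{-D_{KL}(\frac12-\delta_2|\frac12)(1-\alpha)N_0}$; combining the three contributions with Lemma~\ref{lem:indistproperties}(1) and Proposition~\ref{prop:L1_indistinguishability} gives the $\varepsilon$ of Eq.~\eqref{eq:epsilonsecuritydishonestb2}. The resulting $\tilde\rho$ decomposes as a classical mixture over $(\vec\tau,J_0,J_1)$ since all conditioning events are classical functions of that data.

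\textbf{The entropy bound (item 1).} Fix $\vec\tau$ and $(J_0,J_1)\subseteq\bar I_t$ with $|J_0|=|J_1|=N_{\textnormal{raw}}$. On $\tilde\rho(\vec\tau,J_0,J_1)$ the pure state $\ket{\psi}_{X_{\bar I_t}E}$ has $X$-support in a set $\mathcal B$ whose $M\cap\bar I_t$ coordinates lie within Hamming radius $(p_{\textnormal{max}}+\delta_1)|M\cap\bar I_t|$ of $\hat x^B$, whence $\log|\mathcal B|\le |M\cap\bar I_t|\,h(p_{\textnormal{max}}+\delta_1)+|\bar I_t\setminus M|$. For $b\in\bs$, applying Lemma~\ref{lem:minentropybound} to the computational-basis measurement of $X_{J_b}$ inside $X_{\bar I_t}$ (keeping the whole block together), together with the chain-rule estimates of Lemma~\ref{lem:entropyproperties}, I obtain $H_{\textnormal{min}}(X^A_{J_b}|X^A_{J_{\bar b}}B)\ge |J_b\setminus M| - |M\cap\bar I_t|\,h(p_{\textnormal{max}}+\delta_1)$, the first term being the non-matching positions of $J_b$ (on which the support is unconstrained and, by uniformity of Alice's reduced state, the amplitudes stay essentially uniform) and the subtracted term being the $\log|\mathcal B|$ penalty net of the $|\bar I_t\setminus M|$ free bits. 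Summing over $b$ and using $J_0\setminus M$, $J_1\setminus M$ disjoint gives $\sum_b|J_b\setminus M|=|(J_0\cup J_1)\setminus M|\ge 2N_{\textnormal{raw}}-|M\cap\bar I_t|$; inserting $|M\cap\bar I_t|\le(\tfrac12+\delta_2)(1-\alpha)N_0$ turns the first part into $2N_{\textnormal{raw}}(\tfrac12-\tfrac{2\delta_2}{1-2\delta_2})$, while concavity and subadditivity of the binary entropy bound $|M\cap\bar I_t|\,h(p_{\textnormal{max}}+\delta_1)$ by $2N_{\textnormal{raw}}\,h\!\bigl(\tfrac{p_{\textnormal{max}}+\delta_1}{\frac12-\delta_2}\bigr)$, yielding exactly the claimed inequality. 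Item~2 then follows from Lemma~\ref{lem:entropyproperties}(5) combined with the trace-distance estimate from the first movement.

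\textbf{Main obstacle.} I expect the delicate part to be the pure-to-mixed conversion. Lemma~\ref{lem:minentropybound} only gives a clean bound if $X_{\bar I_t}$ is kept as one block: splitting off $X_{J_b}$ would enlarge its apparent support to essentially all of $\bs^{|J_b|}$ because the error budget $(p_{\textnormal{max}}+\delta_1)|M\cap\bar I_t|$ easily exceeds $|J_b\cap M|$. Making the conditioning on $X^A_{J_{\bar b}}$ precise, arguing that the marginal min-entropy of $X^A_{J_b}$ on its non-matching positions is genuinely $|J_b\setminus M|$ up to the stated penalty, and tracking the index bookkeeping so that the three $\delta_2$-dependent pieces land on the stated constants, is where the real work lies. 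A secondary subtlety is legitimising the use of Lemma~\ref{lem:testsecurity}: its hypothesis requires the compared subsets to be drawn from a distribution satisfying Eq.~\eqref{eq:deviationbound} uniformly in $x$, so one must verify that $I_s=I_t\cap M$ of random size (conditioned $\ge N_{\textnormal{check}}$) paired with $M\setminus I_s$ does satisfy such a bound, which is precisely what Lemma~\ref{lem:hoeffding}(c) supplies once the ambient set is taken to be $M$.
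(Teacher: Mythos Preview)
Your four-movement plan is correct and is exactly the paper's route: binding reduction, Lemma~\ref{lem:testsecurity} with the Hoeffding bound of Lemma~\ref{lem:hoeffding}(c), a Chernoff cut on the number of (mis)matching bases, and Lemma~\ref{lem:minentropybound} for the entropy. The $\varepsilon$-accounting you give matches the paper's.

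The one place your sketch diverges---and it is precisely the ``main obstacle'' you flag---is the choice of comparison sets in the sampling step and the way the conditioning is done in the entropy step. The paper does \emph{not} restrict to the matching-basis set $M$: it writes Alice's unmeasured qubits in Bob's committed basis $\theta^B(r,\ccom)$ over \emph{all} indices and applies Lemma~\ref{lem:testsecurity} with $I_1=I_s$ and $I_2=\bar I_t$ (the full non-test block). The resulting support constraint is then a bound on $r_{\textnormal H}(x_{\bar I_t}\oplus \hat x^B_{\bar I_t})$ over all of $\bar I_t$. For the entropy, the paper conditions on $x_{J_{\bar b,d}}$ (the complement of $J_b$ in $\bar I_t$, including the discarded indices $J_d$) using Lemma~\ref{lem:entropyproperties}(5), and only \emph{then} applies Lemma~\ref{lem:minentropybound} to the block $\Phi^A_{J_b}$ alone. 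Because the global constraint is on the \emph{relative} weight over $\bar I_t$ and $|J_b|=N_{\textnormal{raw}}=(\tfrac12-\delta_2)|\bar I_t|$, the conditioned support $B_b(x_{J_{\bar b,d}})$ satisfies $r_{\textnormal H}(x_{J_b}\oplus \hat x^B_{J_b})\le \frac{p_{\textnormal{max}}+\delta_1}{\frac12-\delta_2}$ directly, giving $\log|B_b|\le h\bigl(\tfrac{p_{\textnormal{max}}+\delta_1}{\frac12-\delta_2}\bigr)N_{\textnormal{raw}}$ without any blowup. The mixed-state entropy is $\ge w_{\textnormal H}(\theta^{\textnormal{ch}}_{J_b})=|J_b\setminus M|$, and summing over $b\in\{0,1\}$ together with the Chernoff cut $w_{\textnormal H}(\theta^{\textnormal{ch}}_{\bar I_t})\ge N_{\textnormal{raw}}$ yields the stated bound.

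So your proposed fix of ``keeping the whole block together'' is not what resolves the obstacle; the resolution is (i) comparing against all of $\bar I_t$ in the $\theta^B$ basis rather than only against $M\setminus I_s$, and (ii) conditioning on $x_{J_{\bar b}}$ \emph{and} $x_{J_d}$ before invoking Lemma~\ref{lem:minentropybound} on the $J_b$ block. With these two adjustments your sketch becomes the paper's proof.
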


\begin{proof}
We proceed by tracking the properties of Alice's and Bob's shared state as the protocol develops in order to bound the conditional min-entropy of Alice's measurement outcomes given the information the Bob gains during the protocol, then we use Lemma~\ref{lem:qlhl} to obtain the desired result. Let $\rho^{\textnormal{rand}}_{\Theta^{A}, T, R}$ denote the quantum state associated to the systems holding Alice's basis choice $\theta^{A}$, test subset $I_t$, and the value of $r$ used in the commit/open phase, which we can treat as if they are sampled at the beginning of the protocol since their distribution is fixed, and is given by
\begin{equation}
    \rho^{\textnormal{rand}}_{\Theta^{A}, T, R} = \frac{1}{2^{N_0}}\sum_{\theta^{A}}\ketbra{\theta^{A}}_{\Theta^{A}} \otimes \frac{1}{|\mathcal{T}(\alpha N_0, I)|} \sum_{I_t}\ketbra{I_t}_{T} \otimes \frac{1}{2^{n_r}} \sum_{r}\ketbra{r}_{R},
\end{equation}
where $\mathcal{T}(\alpha N_0, I)$ denotes the set of subsets of $I = \lbrace 1,2,...,N_0\rbrace$ with $\alpha N_0$ elements. Let $S_{\textnormal{bind}}(k)$ be the set of all $r \in \bs^{n_r(k)}$ for which there exists a tuple $(\ccom, \oopen_1, \oopen_2)$ such that
\begin{equation}  
    \bot \neq \ver(\ccom, \oopen_1, r) \neq \ver(\ccom, \oopen_2, r) \neq \bot. 
\end{equation} 
From the binding property of the commitment scheme we know that there exists a negligible function $\varepsilon_{\textnormal{bind}}(k)$ such that, for a commitment security parameter $k$ it holds that
\begin{equation}
    \frac{S_{\textnormal{bind}}(k)}{2^{n_r(k)}} = \varepsilon_{\textnormal{bind}}(k), 
\end{equation}
and hence the state
\begin{equation}
    \tilde{\rho}^{\textnormal{rand}}_{\Theta^{A}, T, R} = \frac{1}{2^{N_0}}
    \sum_{\theta^{A}}\ketbra{\theta^{A}}_{\Theta^{A}} \otimes \frac{1}{|\mathcal{T}(\alpha N_0, I)|} \sum_{I_t}\ketbra{I_t}_{T} \otimes P_R 
    \!\!\!
    \sum_{r \in \bar{S}_{\textnormal{bind}}}
    \!\!\!
    \ketbra{r}_{R},
\end{equation}
where $P_R = \frac{1}{2^{n_r} - |S_{\textnormal{bind}}|}$, satisfies
\begin{equation}
    \rho^{\textnormal{rand}}_{\Theta^{A}, T, R} \approx_{\varepsilon_{\textnormal{bind}}(k)}  \tilde{\rho}^{\textnormal{rand}}_{\Theta^{A}, T, R}.
\end{equation}
In other words, the state of the system holding the value of the variable $r$ is indistinguishable to one where the commitment scheme is perfectly binding (for all $\ccom$ strings, there is at most one $\oopen$ string that passes verification).  

Additionally, the state of the shared resource system as after Bob receives his shares at the beginning of the protocol is given by:
\begin{equation}
    \label{eq:cheatbobinit}
    \rho^{(0)} = \tilde{\rho}^{\textnormal{rand}}_{\Theta^{A}, T, R} \otimes \ketbra{\psi^{(0)}} \quad \quad \quad \ket{\psi^{(0)}} = \frac{1}{\sqrt{2^{N_0}}} \sum_{x} \ket{x}_{\Phi^{A}} \ket{x}_{\Phi^{B}}.
\end{equation}

Since the measurement on Alice subsystem is performed independently from Bob's actions, we can equivalently consider a version of the protocol in which Alice doesn't measure her side of the shared resource state until it's needed to perform the check at Step (7) (for the indices in $I_t$) and the computation of the syndromes at Step (10) (for the remaining indices). 

We now turn our attention to Step (4), when Bob computes and sends his commitment strings after receiving the value of $r$. Denote by $B_0$ the system containing all of Bob's laboratory at the beginning of the protocol, and let $U_1$ be the transformation that Bob performs on his system to produce the commitments, which has the general form
\begin{align}
    U_1 \ket{r}_{R}\ket{x}_{\Phi^{B}}\ket{0}_{B_0} = 
    \sum_{\ccom} \alpha^{r, x, \ccom} \ket{\ccom}_{\CCOM} \ket{\phi^{r, x, \ccom}}_{B_1},
\end{align}
where $\mathcal{H}_{R} \otimes \mathcal{H}_{\Phi^{B}} \otimes \mathcal{H}_{B_0} = \mathcal{H}_{\CCOM} \otimes \mathcal{H}_{B_1}$, and $\ccom = (\ccom_1, \ccom_2, \ldots, \ccom_{N_0})$ with $\ccom_i \in \bs^{n_c(k)}$. Bob then proceeds to send the COM system to Alice, who measures it in the computational basis. The joint shared state as Bob sends the commitment information is
\begin{align}
    \rho^{(1)} = \frac{1}{2^{N_0} |\mathcal{T}(\alpha N_0, I)|}\sum_{\theta^{A}}\ketbra{\theta^{A}}_{\Theta^{A}} \sum_{I_t}\ketbra{I_t}_{T} 
    \!\!\!
    \sum_{\substack{r \in \bar{S}_{\textnormal{bind}} \\ \ccom}} 
    \!\!\!
    P^{r}_{\ccom} \ketbra{\ccom}_{\CCOM} \ketbra{\eta^{r, \ccom}}_{\Phi^{A} B_1},
\end{align}
where
\begin{align}
     P^{r}_{\ccom} &= \frac{P_R}{2^{N_0}}\sum_{x}|\alpha^{r, x, \ccom}|^{2} \nonumber \\
    \ket{\eta^{r, \ccom}}_{\Phi^{A} B_1} &= \sum_{x} \underbrace{\sqrt{\frac{P_R}{2^{N_0}}}(P_{\ccom}^{r})^{-\frac{1}{2}} \alpha^{r, x, \ccom}}_{\beta^{r, x,\ccom}}\ket{x}_{\Phi^{A}} \ket{\phi^{r, x, \ccom}}_{B_1}.
\end{align}
We intend to use Lemma~\ref{lem:testsecurity} to bound the form of the shared state after the parameter estimation step, and then Lemma~\ref{lem:minentropybound} to bound the amount of correlation between Alice's measurement outcomes on the system $\Phi^{A}$ and Bob's system. For that, we first need to associate Bob's commitments with their corresponding committed strings $x^{B}$ and $\theta^{B}$. For an arbitrary dishonest Bob the strings that Alice received are not guaranteed to be outcomes of the $\com$ function and may not have an associated preimage. Consider now the functions $x^{B}_{i}(r, \ccom), \theta^{B}_{i}(r, \ccom):\bs^{n_r} \times \bs^{n_{c}} \rightarrow \bs$ defined as follows,
\begin{align}
    x^{B}_{i}(r, \ccom) = \begin{cases} x &\quad \textnormal{if } \ccom_i = \com((\theta, x), s,r) \textnormal{ for some } \theta \in \bs, s \in \bs^{n_s} \\ 
    0 &\quad \textnormal{otherwise }
    \end{cases}
\end{align}
\begin{align}
    \theta^{B}_{i}(r, \ccom) = \begin{cases} \theta &\quad \textnormal{if } \ccom_i = \com((\theta, x), s,r) \textnormal{ for some } x \in \bs, s \in \bs^{n_s} \\ 
    0 &\quad \textnormal{otherwise }
    \end{cases},
\end{align}
and denote 
\begin{align}
    x^{B}(r, \ccom) = (x^{B}_{i}(r, \ccom))_{i}, \quad \theta^{B}(r, \ccom) = (\theta^{B}_{i}(r, \ccom))_{i}.
\end{align}
We know the above functions are well defined for all $r \in \bar{S}_{\textnormal{bind}}$ because, by definition of $S_{\textnormal{bind}}$, for each possible value of $\ccom_i$, there is at most a single opening that passes verification. For any $\theta^{B} \in \bs^{N_0}$ we can write the state $\ket{\eta^{r, \ccom}}_{\Phi^{A} B_1}$ in the $\theta^{B}$ basis of $\Phi^{A}$ as
\begin{align}
    \ket{\eta^{r, \ccom}}_{\Phi^{A} B_1} = \sum_{x} \underbrace{\sum_{x'}\beta^{r, x,\ccom} \bracket{x,\theta^{B}}{x'}\ket{\phi^{r, x', \ccom}}_{B_1}}_{\beta^{r, x,\theta^{B},\ccom}\ket{\phi^{r, x,\theta^{B},\ccom}}} \ket{x,\theta ^{B}}_{\Phi^{A}}.
\end{align}
Recall that $I_s(\theta^{A}, \theta^{B}, I_t) = \lbrace i \in I_t : \theta^{A}_i\oplus\theta^{B}_i = 0 \rbrace$. From Lemma~\ref{lem:testsecurity} we know that there exists a state 
\begin{equation}
    \tilde{\rho}^{(1)} = \frac{1}{2^{N_0} |\mathcal{T}(\alpha N_0, I)|} 
    \!
    \sum_{\substack{r \in \bar{S}_{\textnormal{bind}} \\ \ccom}} 
    \!\!\!
    P^{r}_{\ccom}
    \ketbra{\ccom}_{\CCOM} \sum_{\theta^{A}, I_t} \ketbra{\theta^{A}}_{\Theta^{A}} \ketbra{I_t}_{T} \ketbra{\eta^{r, \ccom,I_s,\bar{I}_t}}_{\Phi^{A} B_1},
\end{equation}
where the $\ket{\eta^{r, \ccom,I_s,\bar{I}_t}}$ have the form
\begin{align}
    \ket{\eta^{r, \ccom,I_s,\bar{I}_t}}_{\Phi^{A} B_1} = 
    \!\!\!\!\!\!\!\!\!
    \sum_{\substack{x \in \\ B(\theta^{B},r,\ccom,I_s,I_t)}}
    \!\!\!\!\!\!\!\!\!
    \tilde{\beta}^{x,\theta^{B},r,\ccom,I_s,I_t} \ket{x, \theta^{B}(r,\ccom)}_{\Phi^{A}}  \ket{\phi^{x,\theta^{B},r,\ccom,I_s,I_t}}_{B_1}
\end{align}
\begin{equation}
    B(r,\ccom,I_s,I_t) = \lbrace x : |r_{\textnormal{H}}(x_{I_s}\oplus x^{B}_{I_s}(r,\ccom)) - r_{\textnormal{H}}(x_{\bar{I_t}} \oplus x^{B}_{\bar{I_t}} (r,\ccom))| \leq \delta_1 \rbrace,
\end{equation}
such that
\begin{align}
    \label{eq:postmeasurementdistanceB}
    D(\rho^{(1)}, \tilde{\rho}^{(1)}) &\leq \sqrt{2} \left( e^{-\frac{1}{2} \alpha (1-\alpha)^{2} N_0 \delta^{2}_1} +  e^{-\frac{1}{2} (\frac{1}{2} -\delta_2)\alpha N_0 \delta^{2}_1} \right)^{\frac{1}{2}}.
\end{align}
We are ready now proceed to Step (6) of the protocol, in which Bob sends the string  $\oopen_{I_t} =(\textnormal{open}_i)_{i \in I_t}$, which is expected to contain the opening information for all the commitments $\ccom_i, i\in I_t$.  
\begin{equation}
    U_{\oopen} \ket{\phi^{x,r,\ccom,I_s,I_t}}_{B_1} = \sum_{\oopen_{I_t}} \alpha^{x,r,\ccom,I_s,I_t, \oopen_{I_t}} \ket{\phi^{x,r,\ccom,I_s,I_t, \oopen_{I_t}}}_{B_2} \ket{\oopen_{I_t}}_{\OOPEN_{I_t}},
\end{equation}
where $\mathcal{H}_{B_1} = \mathcal{H}_{B_2}\otimes\mathcal{H}_{\OOPEN_{I_t}}$. Such that 
\begin{align}
    U_{\oopen}\ket{\eta^{r,\ccom,I_s,\bar{I}_t}}_{\Phi^{A} B_1} = &\sum_{\oopen_{I_t}} 
    \sum_{x \in B} \tilde{\beta}^{x,r,\ccom,I_s,I_t} \alpha^{x,r,\ccom,I_s,I_t, \oopen_{I_t}} \ket{x, \theta^{B}(r, \ccom)}_{\Phi^{A}} \nonumber \\
    &\otimes \ket{\phi^{x,r,\ccom,I_s,I_t, \oopen_{I_t}}}_{B_2} \ket{\oopen_{I_t}}_{\OOPEN_{I_t}}.
\end{align}
During Step (7), after receiving the opening information and measuring the $\OOPEN$ system in the computational basis, she aborts the protocol unless $\ver(\ccom_i,\oopen_i,r) \neq \bot$ for all $i \in I_t$. Let $ H(r, I_t)$ be the set of strings $\ccom$ for which Alice's first check {\em can} be passed. From the binding property of the commitment scheme, we know that, for any $r \in \bar{S}_{\textnormal{bind}}$, if $\ccom \in H(r, I_t)$ there is only one $\oopen'(r,\ccom, I_t)$ for which $\ver(\ccom_i,\oopen'_i, r) \neq \bot$ for all $i \in I_t$. Because the protocol aborts if Alice's test is not passed, the state of the joint system after Alice performs this check is given by (note that from here, by removing the mixture over all opens, we are reducing the overall trace of the system. Effectively, we are keeping only the runs of the protocol that did not abort in the commitment check part of Step (7). The amount for which the trace is reduced is given by the sum of the $|\alpha^{x,r,\ccom,I_s,I_t, \oopen_{I_t}}|^2$ over the values of $\oopen \neq \oopen'(\ccom, I_t)$ or for which $I_s < N_{\textnormal{check}}$):
\begin{align}
    \tilde{\rho}^{(2)} = \frac{1}{2^{N_0} |\mathcal{T}(\alpha N_0, I)|} &\sum_{I_t, \theta^{A}}\ketbra{I_t}_{T} 
    \!\!\!\!\!\!\!\!
    \sum_{\substack{r \in \bar{S}_{\textnormal{bind}} \\ \ccom \in H(r,I_t)}}
    \!\!\!\!\!\!\!\!
    P^{r}_{\ccom} P^{r}_{\oopen'} \ketbra{\ccom}_{\CCOM} \ketbra{\oopen'}_{\OOPEN_{I_t}} \nonumber \\
    \otimes &\ketbra{\theta^{A}}_{\Theta^{A}} \ketbra{\tilde{\eta}^{r,\ccom,I_s,\bar{I}_t}}_{\Phi^{A} B_2},
\end{align}
with
\begin{align}
    P^{r}_{\oopen'} = \!\!\!\!\!\!\!\!\!
    \sum_{\substack{x \in \\ B(r,\ccom,I_s,I_t)}}
    \!\!\!\!\!\!\!\!\!
    |\tilde{\beta}^{x,r,\ccom,I_s,I_t} \alpha^{x,r,\ccom,I_s,I_t, \oopen'}|^2 \nonumber
\end{align}
and
\begin{align}
 \!\!\!\!\!\!   \ket{\tilde{\eta}^{r,\ccom,I_s,\bar{I}_t}} = 
    \!\!\!\!\!\!\!\!\!\!\!\!
    \sum_{\substack{x \in \\ B(r,\ccom,I_s,I_t)}}
    \!\!\!\!\!\!\!\!\!\!\!\!
    \Big( \underbrace{(P^{r}_{\oopen'})^{-\frac{1}{2}} \tilde{\beta}^{x,r,\ccom,I_s,I_t} \alpha^{x,r,\ccom,I_s,I_t, \oopen'} \ket{\phi^{x,r,\ccom,I_s,I_t, \oopen'}}_{B_2}}_{\gamma^{x,r,\ccom,I_s,I_t} \ket{\tilde{\phi}^{x,r,\ccom,I_s,I_t}}} 
     \otimes \ket{x, \theta^{B}(r, \ccom)}_{\Phi^{A}}\Big).
\end{align}
Alice then proceeds to measure her part of the state. Let us divide her measurement in two parts: the measurement of the qubits in $I_t$, and the measurement of the reminder qubits. For the first part, the action of measuring the subsystem $\Phi^{A}_{I_t}$ in a state $\ket{\tilde{\eta}^{\ccom,I_s,\bar{I}_t}}$ and in the $\theta^{A}_{I_t}$ basis is:
\begin{align}
    \ketbra{\tilde{\eta}^{r,\ccom,I_s,\bar{I}_t}} &\rightarrow \sum_{x^{A}_{I_t}} \ketbra{x^{A}_{I_t}, \theta^{A}_{I_t}} \left(\ketbra{\tilde{\eta}^{r,\ccom,I_s,\bar{I}_t}}\right) \ketbra{x^{A}_{I_t}, \theta^{A}_{I_t}} \nonumber \\
    &= \sum_{x^{A}_{I_t}} P_{x^{A}_{I_t}}^{r} \ketbra{x^{A}_{I_t}, \theta^{A}_{I_t}}_{\Phi^{A}_{I_t}} \ketbra{\tilde{\eta}^{r,\ccom,I_s,\bar{I}_t,\theta^{A},x^{A}_{I_t}}}_{\Phi^{A}_{\bar{I}_t} B_2},
\end{align}
where
\begin{equation}
    P_{x^{A}_{I_t}}^{r} = \sum_{x \in B} \big|\bracket{x_{I_t}, \theta^{B}_{I_t}(r,\ccom)}{x^{A}_{I_t},\theta^{A}_{I_t}} \gamma^{x,r,\ccom,I_s,I_t}\big|^{2}
\end{equation}
and
\begin{align}
    \ket{\tilde{\eta}^{r,\ccom,I_s,\bar{I}_t,x^{A}_{I_t},\theta^{A}}} = 
    \sum_{x \in B}
    &(P_{x^{A}_{I_t}}^{r})^{-\frac{1}{2}} \underbrace{\bracket{x_{I_s}, \theta^{B}_{I_s}}{x^{A}_{I_s},\theta^{B}_{I_s}}}_{\delta(x_{I_s}, x^{A}_{I_s})} \bracket{x_{{I_t \setminus I_s}}, \theta^{B}_{{I_t \setminus I_s}}}{x^{A}_{{I_t \setminus I_s}},\bar{\theta}^{B}_{{I_t \setminus I_s}}} \nonumber \\ 
    &\times \gamma^{x,\theta^{B},r,\ccom,I_s,I_t} \ket{\tilde{\phi}^{x,\theta^{A}, r, \ccom,I_t}}_{B_2} \ket{x_{\bar{I}_t}, \theta^{B}_{\bar{I}_t}}_{\Phi^{A}_{\bar{I}_t}},
\end{align}
where in the last expression, and going forward, we omit the explicit dependence of both $x^{B}$ and $\theta^{B}$ on $r, \ccom$. By defining
\begin{equation}
    G(x^{A}_{I_s}, r, \ccom) = \lbrace x_{\bar{I}_t} : |r_{\textnormal{H}}(x^{A}_{I_s} \oplus x^{B}_{I_s}) - r_{\textnormal{H}}(x_{\bar{I}_t} \oplus x^{B}_{\bar{I}_t}) | \leq \delta_1 \rbrace,
\end{equation}
we can rewrite
\begin{align}
    \ket{\tilde{\eta}^{\ccom,I_s,\bar{I}_t,x^{A}_{I_t},\theta^{A}}} = 
    \sum_{x_{\bar{I}_t} \in G} 
    &\underbrace{\sum_{x_{{I_t \setminus I_s}}} (P_{x^{A}_{I_t}}^{r})^{-\frac{1}{2}} \bracket{x_{{I_t \setminus I_s}}, \theta^{B}_{{I_t \setminus I_s}}}
    {x^{A}_{I_s},\theta^{B}_{I_s}}
    \gamma^{x_{\bar{I}_t}, x^{A}_{I_t},r,\ccom,I_s,I_t} \ket{\tilde{\phi}^{x_{\bar{I}_t}, x^{A}_{I_t},r,\ccom,I_s,I_t}}_{B_2}}_{\tilde{\gamma}^{x_{\bar{I}_t},r,\ccom,I_s,I_t} \ket{\tilde{\phi}^{x_{\bar{I}_t},r,\ccom,I_s,I_t,\theta^{A}}}} \nonumber \\
    &\otimes \ket{x_{\bar{I}_t}, \theta^{B}_{\bar{I}_t}}_{\Phi^{A}_{\bar{I}_t}} 
\end{align}
After performing the measurement, Alice aborts the protocol whenever $r_{\textnormal{H}}(x^{A}_{I_s} \oplus x^{B}_{I_s}(r,\ccom)) > p_{\text{max}}$. The state of the 
shared system after this check is (tracing out the $T, \CCOM, \OOPEN$ subsystems)
\begin{align}
    \label{eq:limitmismatchbases}
    \tilde{\rho}^{(3)} = & \frac{1}{2^{N_0} |\mathcal{T}(\alpha N_0, I)|} \sum_{I_t} 
    \!\!\!\!\!
    \sum_{\substack{r \in \bar{S}_{\textnormal{bind}} \\ \ccom \in H(r,I_t)}}
    \!\!\!\!\!
    P_{\ccom}^{r} P_{\oopen'}^{r} \sum_{\theta^{A}}\ketbra{\theta^{A}}_{\Theta^{A}} \nonumber \\
    &\otimes 
    \!\!\! 
    \sum_{\substack{x^{A}_{I_t} \in \\ J_{p_{\text{max}}}}}
    \!\!\!
    P_{x^{A}_{I_t}}^r \ketbra{x^{A}_{I_t}, \theta^{A}_{I_t}}_{\Phi^{A}_{I_t}} \ketbra{\tilde{\eta}^{r,\ccom,I_s,\bar{I}_t,x^{A}_{I_t},\theta^{A}}}_{\Phi^{A}_{\bar{I}_t} B_2},
\end{align}
where
\begin{equation}
    J_{p_{\text{max}}} = \lbrace x^{A}_{I_t} : r_{\textnormal{H}}(x^{A}_{I_s} \oplus x^{B}_{I_s}) \leq p_{\text{max}}\rbrace.
\end{equation}
Before proceeding, it will be useful to approximate the above state to a state where the number of mismatching bases in $\bar{I}_t$ is ``high enough''. More precisely, this means approximating $\tilde{\rho}^{(3)}$ to a state for which the sum over $\theta^{A}$ runs explicitly over strings  $ \theta^{A} \in K_{\ccom} = \lbrace \theta^{A} : w_{\textnormal{H}}(\theta^{\textnormal{ch}}_{{\bar{I}_{t}}}) \geq N_{\textnormal{raw}} \rbrace$.  Let 
\begin{align}
    \label{eq:postalicecheck2}
    \tilde{\tilde{\rho}}^{(3)} = & \frac{1}{2^{N_0} |\mathcal{T}(\alpha N_0, I)|} \sum_{I_t}
    \!\!\!\!
    \sum_{\substack{r \in \bar{S}_{\textnormal{bind}} \\ \ccom \in H(r,I_t)}}
    \!\!\!\!\!\!\!\!
    P_{\ccom}^{r} P_{\oopen'}^{r} a(\ccom) 
    \!\!\!
    \sum_{\theta^{A} \in K_{\ccom}}
    \!\!\!
    \ketbra{\theta^{A}}_{\Theta^{A}} \nonumber \\
    &\otimes 
    \!\!\! 
    \sum_{\substack{x^{A}_{I_t} \in \\ J_{p_{\text{max}}}}}
    \!\!\!
    P_{x^{A}_{I_t}}^{r} \ketbra{x^{A}_{I_t}, \theta^{A}_{I_t}}_{\Phi^{A}_{I_t}} \ketbra{\tilde{\eta}^{r,\ccom,I_s,\bar{I}_t,x^{A}_{I_t}}}_{\Phi^{A}_{\bar{I}_t} B_2},
\end{align}
with $a(\ccom) = \frac{2^{N_0}}{|K_{\ccom}|}$. The distance between $\tilde{\rho}^{(3)}$ and $\tilde{\tilde{\rho}}^{(3)}$ is bounded by the probability of a uniformly chosen the $\theta^{A}$ not being in $K_{\ccom}$. Using the Chernoff-Hoeffding bound we get
\begin{equation}
    \label{eq:chernoffdistance}
    D(\tilde{\rho}^{(3)}, \tilde{\tilde{\rho}}^{(3)}) \leq e^{-D_{KL}(\frac{1}{2} - \delta_2 | \frac{1}{2})(1-\alpha)N_{0}},
\end{equation}
where $D_{KL}(\frac{1}{2} - \delta_2 | \frac{1}{2})$ represents the relative entropy between the binary distributions defined by the respective probabilities $p_1 = \frac{1}{2} - \delta_2$ and $p_2 = \frac{1}{2}$. 

During Step (8), Alice sends the $\Theta^{A}$ system to Bob, who then computes $J_0, J_1$ (in the actual protocol, Alice sends only $\Theta^{A}_{{\bar{I_t}}}$, but to simplify the expressions we can assume, without loss of generality, that she sends the whole register $\Theta^{A}$). To simplify the list of dependencies, denote the transcript of the protocol up until Step (8) as $\Vec{\tau} = (x^{A}_{I_t},\theta^{A},r,\ccom,I_t,I_s, \oopen_{I_t})$. Keep in mind that, although $\Vec{\tau}$ consists of seven quantities, $I_s$ and $\oopen_{I_t}$ are completely defined by the other five. In the remaining of the proof, unless noted otherwise, the sums over $\Vec{\tau}$ run over the values of its variables as shown in Eq.~\eqref{eq:postalicecheck2}. By defining 
\begin{equation}
    P_{\Vec{\tau}} = \frac{P_{\ccom}^{r} P_{\oopen'}^{r}  P_{x^{A}_{I_t}}^{r} a(\ccom) }{2^{N_0} |\mathcal{T}(\alpha N_0, I)|},
\end{equation}
we can write
\begin{equation}
    \tilde{\tilde{\rho}}^{(3)} = \sum_{\Vec{\tau}}  P_{\Vec{\tau}} \ketbra{\theta^{A}}_{\Theta^{A}} \ketbra{x^{A}_{I_t}, \theta^{A}_{I_t}}_{\Phi^{A}_{I_t}} \ketbra{\tilde{\eta}^{r,\ccom,I_s,\bar{I}_t,x^{A}_{I_t}}}_{\Phi^{A}_{\bar{I}_t} B_2}.
\end{equation}
During Step (9), after receiving $\theta^{A}$, Bob sends the SEP system, containing the (classical) string separation information $J_0, J_1$ to Alice. By following the same treatment as in Steps (4) and (6), let $U_{\textnormal{sep}}$ be the operation that Bob performs on the $B_2$ system to compute the information to be sent to Alice in the $\textnormal{SEP}$ system: 
\begin{equation}
    U_{\textnormal{sep}}\ket{\tilde{\phi}^{x_{\bar{I}_t},r,\ccom,I_s,I_t,\theta^{A}}}_{B_2}\ket{\theta^{A}}_{\Theta^{A}} = \sum_{J_0, J_1} \alpha^{x_{\bar{I}_t},\Vec{\tau},J_0, J_1} \ket{\phi^{x_{\bar{I}_t},\Vec{\tau},J_0, J_1}}_{B_3} \ket{J_0, J_1}_{\textnormal{SEP}},
\end{equation}
where $\mathcal{H}_{B_2}\otimes \mathcal{H}_{\Theta^{A}}= \mathcal{H}_{B_3}\otimes \mathcal{H}_{\textnormal{SEP}}$  and the summation over $J_0,J_1$ goes over all possible values compatible with $I_t$. The state after Step (9) after Alice receives the SEP system and measures in the computational basis is then given by (tracing out SEP)
\begin{equation}
    \tilde{\rho}^{(4)} =  \sum_{\substack{\Vec{\tau} \\ {J_0,J_1}}} P_{\Vec{\tau},J_0,J_1} \ketbra{x^{A}_{I_t}, \theta^{A}_{I_t}}_{\Phi^{A}_{I_t}} \ketbra{\nu^{\Vec{\tau},J_0,J_1}}_{\Phi^{A}_{\bar{I}_t} B_3},
\end{equation}
where
\begin{align}
    P_{\Vec{\tau},J_0,J_1} &= \sum_{x_{\bar{I}_t \in G}} |\tilde{\gamma}^{x_{\bar{I}_t},r,\ccom,I_s,I_t} \alpha^{x_{\bar{I}_t},\Vec{\tau},J_0, J_1}|^2 \nonumber \\
    \ket{\nu^{\Vec{\tau},J_0,J_1}}_{\Phi^{A}_{\bar{I}_t} B_3} &= \sum_{x_{\bar{I}_t \in G}} \underbrace{(P_{\Vec{\tau},J_0,J_1})^{-\frac{1}{2}} \tilde{\gamma}^{x_{\bar{I}_t},r,\ccom,I_s,I_t} \alpha^{x_{\bar{I}_t},\Vec{\tau},J_0, J_1}}_{\beta^{x_{\bar{I}_t},\Vec{\tau},J_0, J_1}} \ket{x_{\bar{I}_t}, \theta^{B}_{\bar{I}_t}}_{\Phi^{A}_{\bar{I}_t}}\ket{\phi^{x_{\bar{I}_t},\Vec{\tau},J_0, J_1}}_{B_3}.
\end{align}
We can now consider Alice's measurement on the $\Phi^{A}_{\bar{I}_t}$ system. So far we have tracked the evolution of the joint state in order to describe the relationship between both parties' information. To finalize the proof we only need to keep track of the conditional min-entropy of Alice's outcomes given Bob's part of the joint system. Let $\rho_{X^{A}_{\Bar{I}_t},B_3} (\Vec{\tau}, J_0, J_1)$ be the resulting (conditioned) state after measuring the system $\Phi^{A}_{\bar{I}_t}$ in the $\theta^{A}_{\bar{I}_t}$ basis, recording the respective outcomes in the $X^{A}_{\Bar{I}_t}$ system, and tracing out the $\Phi^{A}_{\bar{I}_t}$ subsystem. We can write the state of the joint system after the measurement as 
\begin{align}
    \tilde{\rho}^{(5)} &= \sum_{\substack{\Vec{\tau} \\ {J_0,J_1}}} P_{\Vec{\tau},J_0,J_1}  \tilde{\rho}_{X^{A}_{\Bar{I}_t},B_3} (\Vec{\tau},J_0,J_1).
\end{align}
Additionally, for any given $J_0,J_1$, denote by $J_d$ the complement of $J_0 \cup J_1$ in $\bar{I}_{t}$. Following Lemma~\ref{lem:entropyproperties} (3) and (5) we know that for any $b \in \bs$
\begin{align}
    \label{eq:entropyboundchosenc}
    H_{\textnormal{min}}^{\varepsilon} (X^{A}_{J_{b}} | X^{A}_{J_{\bar{b}}} B_3)_{\tilde{\rho}^{(5)}} &\geq  H_{\textnormal{min}}^{\varepsilon} (X^{A}_{J_{b}} | X^{A}_{J_{\bar{b}}} X^{A}_{J_d} B_3)_{\tilde{\rho}^{(5)}} \nonumber \\ 
    &\geq \inf\limits_{ \Vec{\tau}, J_0,J_1 } \lbrace H_{\textnormal{min}}^{\varepsilon} (X^{A}_{J_{b}} | X^{A}_{J_{\bar{b}}} X^{A}_{J_d} B_3)_{\tilde{\rho}(\Vec{\tau}, J_0,J_1)} \rbrace \nonumber \\
    &\geq \inf\limits_{\Vec{\tau}, J_0,J_1}  \lbrace \inf \limits_{x_{{J_{\bar{b},d}}}} \lbrace H_{\textnormal{min}}^{\varepsilon} (X^{A}_{J_{b}} | B_3)_{\tilde{\rho}(x_{{J_{\bar{b},d}}}, \Vec{\tau}, J_0,J_1)} \rbrace \rbrace.
\end{align}

We can invoke Lemma~\ref{lem:minentropybound} to obtain an expression for the above quantity explicitly in terms of the protocol parameters $N_0, \alpha, \delta_1$, and $\delta_2$. For that, we must take a small detour to define the associated mixed states $\rho^{\textnormal{mix}}_{\Phi^{A}_{{J_b}} B_3}$ and compute their respective post-measurement entropy. First, for $b \in \bs$, we compute the reduced states
\begin{align}
    \rho_{\Phi^{A}_{J_b} B_3} (\Vec{\tau}, J_0,J_1) &= \Tr_{\Phi^{A}_{J_{\bar{b},d}}}  \bigg[ \ketbra{\nu^{\Vec{\tau},J_0,J_1}}_{\Phi^{A}_{\bar{I}_t} B_3} \bigg] \nonumber \\
    &= \sum_{x_{{J_{\bar{b},d}}}} P_{x_{J_{\bar{b},d}}}^{\Vec{\tau},J_0,J_1} \ketbra{\nu^{x_{J_{\bar{b},d}},\Vec{\tau},J_0,J_1}}_{\Phi^{A}_{J_b} B_3},
\end{align}
with
\begin{align*}
    P_{x_{{J_{\bar{b},d}}}}^{\Vec{\tau},J_0,J_1} = 
    \!\!\!
    \sum_{\substack{x_{{J_b}} \in \\ B_b(x_{{J_{\bar{b},d}}})}}
    \!\!\!
    \big| \beta^{x_{\bar{I}_t}, \Vec{\tau},J_0,J_1} \big|^2
\end{align*}
\begin{equation}
    \ket{\nu^{x_{J_{\bar{b},d}},\Vec{\tau},J_0,J_1}}_{\Phi^{A}_{{J_b}} B_3} = 
    \!\!\!
    \sum_{\substack{x_{{J_b}} \in \\ B_b(x_{{J_{\bar{b},d}}})}}
    \underbrace{(P_{x_{{J_{\bar{b},d}}}}^{\Vec{\tau},J_0,J_1})^{-\frac{1}{2}} \beta^{x_{\bar{I}_t}, \Vec{\tau},J_0,J_1}}_{\lambda^{x_{\bar{I}_t}, \Vec{\tau},J_0,J_1}}
    \ket{x_{{J_b}}, \theta^{B}_{{J_b}}}_{\Phi^{A}_{{J_b}}}
    \ket{\phi^{x_{\bar{I}_t},\Vec{\tau},J_0, J_1}}_{B_3},
\end{equation}
and
\begin{align}
    B_b(x_{{J_{\bar{b},d}}}) &= \lbrace x_{{J_b}} : x_{{J_{b,\bar{b},d}}} \in G(x^{A}_{I_s}, r, \ccom) \rbrace \nonumber \\
    &= \lbrace x_{{J_b}} : |(\frac{1}{2}-\delta_2)r_{\textnormal{H}}(x_{{J_b}}\oplus x^{B}_{{J_b}}) + (\frac{1}{2}+\delta_2)r_{\textnormal{H}}(x_{{J_{\bar{b},d}}}\oplus x^{B}_{{J_{\bar{b},d}}})-r_{\textnormal{H}}(x^{A}_{I_s}\oplus x^{B}_{I_s})| \leq  \delta_1 \rbrace,
\end{align}
where the explicit dependence of $B_b$ on $x^{A}_{I_s}, r, \ccom$ has been omitted for compactness. Note that since $r_{\textnormal{H}}(x^{A}_{I_s}\oplus x^{B}_{I_s}) \leq p_{\text{max}}$ the size of $B_b(x_{{J_{\bar{b},d}}})$ is upper bounded by
\begin{equation}
    \label{eq:errorbound}
    |B_b(x_{{J_{\bar{b},d}}})| \leq 2^{h\left( \frac{p_{\text{max}} + \delta_1}{\frac{1}{2} - \delta_2} \right)N_{\text{raw}}},
\end{equation}
where the $h$ stands for the binary entropy function. We can now define
\begin{align}
    \rho^{\textnormal{mix}}_{\Phi^{A}_{{J_b}} B_3} (x_{{J_{\bar{b},d}}}, \Vec{\tau}, J_0,J_1) =
    \!\!\!
    \sum_{\substack{x_{{J_b}} \in \\ B_b(x_{{J_{\bar{b},d}}})}}
    \!\!\!
    &\big|\lambda^{x_{\bar{I}_t}, \Vec{\tau},J_0,J_1} \big|^{2} 
    \ketbra{x_{{J_b}}, \theta^{B}_{{J_b}}}_{\Phi^{A}_{{J_b}}} \nonumber \\
    & \otimes \ketbra{\phi^{x_{\bar{I}_t},\Vec{\tau},J_0, J_1}}_{B_3}.
\end{align}
Measuring the above state in the $\theta^{A}_{{J_b}}$ basis, recording the results in $X_{J_{b}}$ and tracing out $\Phi^{A}_{{J_b}}$ leads to
\begin{align}
    \label{eq:rhomix}
    \rho^{\textnormal{mix}}_{X^{A}_{{J_b}} B_3}(x_{{J_{\bar{b},d}}}, \Vec{\tau}, J_0,J_1) = \sum_{x^{A}_{J_b}}
    \!\!\!
    \sum_{\substack{x_{{J_b}} \in \\ B_c(x_{{J_{\bar{b},d}}})}}
    \!\!\!
    &\big| \lambda^{x_{\bar{I}_t}, \Vec{\tau},J_0,J_1} \big|^{2} |\bracket{x^{A}_{J_b},\theta^{A}_{{J_b}}}{x_{{J_b}}, \theta^{B}_{{J_b}}}|^{2} \ketbra{x^{A}_{J_b}}_{X^{A}_{J_b}} \nonumber \\
    & \otimes \ketbra{\phi^{x_{\bar{I}_t},\Vec{\tau},J_0, J_1}}_{B_3},
\end{align}
defining $J_{b}^{0/1} = \lbrace i \in J_{b} : \theta^{\textnormal{ch}}_{i} = 0/1 \rbrace$ we can write the factors 
\begin{align}
    \big|\bracket{x^{A}_{J_{b}},\theta^{A}_{{J_b}}}{x_{{J_b}}, \theta^{B}_{{J_b}}}\big|^{2} = 
    \prod_{i \in J_{b}^{0}}  \underbrace{\big|\bracket{{x^{A}_i,\theta^{A}_i}}{x_{i}, \theta^{B}_{i}}\big|^{2}}_{\delta(x^{A}_i, x_{i})} 
    \prod_{i \in J_{b}^{1}}  \underbrace{\big|\bracket{{x^{A}_i,\theta^{A}_i}}{x_{i}, \theta^{B}_{i}}\big|^{2}}_{\big|\frac{1}{\sqrt{2}} \big|^{2}},
\end{align}
substituting in Eq.~(\ref{eq:rhomix}) we get
\begin{align}
    \rho^{\textnormal{mix}}_{X^{A}_{{J_b}} B_3}(x_{{J_{\bar{b},d}}}, \Vec{\tau}, J_0,J_1) &= \left(\frac{1}{2}\right)^{w_{\textnormal{H}}(\theta^{\textnormal{ch}}_{{J_{b}}})} \sum_{x^{A}_{{J_{b}^{1}}}} \ketbra{x^{A}_{{J_{b}^{1}}} }_{ X^{A}_{J_{b}^{1}} } \nonumber \\ 
    & \otimes 
    \!\!\!
    \sum_{\substack{x_{{J_b}} \in \\ B_b(x_{{J_{\bar{b},d}}})}}
    \!\!\!
    \big| \lambda^{x_{\bar{I}_t}, \Vec{\tau},J_0,J_1} \big|^{2}  \ketbra{x_{{I^{0}_b}}}_{ X^{A}_{{J_{b}^{0}}} }  \ketbra{\phi^{x_{\bar{I}_t},\Vec{\tau},J_0, J_1}}_{B_3},
\end{align}
which is a product state between the systems $X^{A}_{{J_{b}^{1}}}$ and $X^{A}_{{J_{b}^{0}}} B_3$. From Lemma~\ref{lem:entropyproperties} (1) and (2) we know that
\begin{align}
    \label{eq:mismatchbound}
    H_{\textnormal{min}}^{\varepsilon}(X^{A}_{J_b}|B_3)_{\rho^{\textnormal{mix}}(x_{{J_{\bar{b},d}}}, \Vec{\tau}, J_0, J_1)} &\geq H_{\textnormal{min}}^{0}(X^{A}_{J_b}|B_3)_{\rho^{\textnormal{mix}}(x_{{J_{\bar{b},d}}}, \Vec{\tau}, J_0, J_1)} \nonumber \\
    &\geq H_{\textnormal{min}}^{0}(X^{A}_{J^{0}_b}|B_3)_{\rho^{\textnormal{mix}}(x_{{J_{\bar{b},d}}}, \Vec{\tau}, J_0, J_1)} + H_{\textnormal{min}}^{0}(X^{A}_{J^1_b})_{\rho^{\textnormal{mix}}(x_{{J_{\bar{b},d}}}, \Vec{\tau}, J_0, J_1)} \nonumber \\
    &\geq H_{\textnormal{min}}^{0}(X^{A}_{J^1_b})_{\rho^{\textnormal{mix}}(x_{{J_{\bar{b},d}}}, \Vec{\tau}, J_0, J_1)} \nonumber \\
    &= -\log\left( \left(\frac{1}{2}\right)^{w_{\textnormal{H}}(\theta^{\textnormal{ch}}_{{J_{b}}})} \right) = w_{\textnormal{H}}(\theta^{\textnormal{ch}}_{{J_{b}}}).
\end{align}
Application of Lemma~\ref{lem:minentropybound} together with equations~(\ref{eq:mismatchbound}) and~(\ref{eq:errorbound}) leads to 
\begin{align}
    \label{eq:entropyboundanychoice}
    H_{\textnormal{min}}^{\varepsilon}(X^{A}_{J_b}|B_3)_{\tilde{\rho}(x_{{J_{\bar{b},d}}}, \Vec{\tau}, J_0, J_1)}  &\geq
    H_{\textnormal{min}}^{\varepsilon}(X^{A}_{J_b}|B_3)_{\rho^{\textnormal{mix}}(x_{{J_{\bar{b},d}}}, \Vec{\tau}, J_0, J_1)} - \log\left(|B_b(x_{{J_{\bar{b}, d}}}) |\right) \nonumber \\
    &\geq w_{\textnormal{H}}(\theta^{\textnormal{ch}}_{{J_{b}}}) - h\left( \frac{p_{\text{max}} + \delta_1}{\frac{1}{2} - \delta_2}\right)N_{\text{raw}}.
\end{align}
Note that the above expression depends only on the number of nonmatching bases $\theta^{\textnormal{ch}}$ associated to the indices in $J_b$ and the parameters of the protocol, which in turn makes the infimum in Eq.~(\ref{eq:entropyboundchosenc}) straightforward to compute. We can now add the respective conditional min-entropies for $X^{A}_{J_0}$ and $X^{A}_{J_1}$, which results in:
\begin{align}
    \label{eq:totalleak}
    H_{\textnormal{min}}^{\varepsilon} (X^{A}_{J_{0}} | X^{A}_{J_{1}} B_3)_{\tilde{\rho}(\Vec{\tau}, J_0, J_1)} + H_{\textnormal{min}}^{\varepsilon} (X^{A}_{J_{1}} | X^{A}_{J_{0}} B_3)_{\tilde{\rho}(\Vec{\tau}, J_0, J_1)} &\geq w_{\textnormal{H}}(\theta^{\textnormal{ch}}_{J_{0}}) + w_{\textnormal{H}}(\theta^{\textnormal{ch}}_{{J_{1}}}) - 2 h\left( \frac{p_{\text{max}} + \delta_1}{\frac{1}{2} - \delta_2}\right) N_{\text{raw}} \nonumber \\
    &\geq N_{\text{raw}} - 2\delta_2 (1-\alpha)N_0 - 2 h\left( \frac{p_{\text{max}} + \delta_1}{\frac{1}{2} - \delta_2}\right) N_{\text{raw}} \nonumber \\
    &\geq 2 N_{\text{raw}} \left( \frac{1}{2} - \frac{2\delta_2}{1-2\delta_2} - h\left( \frac{p_{\text{max}} + \delta_1}{\frac{1}{2} - \delta_2}\right) \right).
\end{align}
The result follows by recalling, from Eqs.~\eqref{eq:postmeasurementdistanceB} and~\eqref{eq:chernoffdistance}, that the real state at this point in the protocol has distance from $\tilde{\rho}^{(5)}$ bounded by
\begin{equation}
    \varepsilon = \sqrt{2} \left( e^{-\frac{1}{2} \alpha (1-\alpha)^{2} N_0 \delta^{2}_1} +  e^{-\frac{1}{2} (\frac{1}{2} -\delta_2)\alpha N_0 \delta^{2}_1} \right)^{\frac{1}{2}} + e^{-D_{KL}(\frac{1}{2} - \delta_2 | \frac{1}{2})(1-\alpha)N_{0}} + \varepsilon_{\textnormal{bind}}(k).
\end{equation}
\end{proof}

\section{UC security in the Random Oracle Model}
\label{sec:ucsecurity}
Following the discussion made in Section~\ref{sec:composabiltiy}, we prove the composability of a specific family of weakly-interactive commitment schemes in the \textit{classical access random oracle model}, which we will refer as ROM from here onwards. These commitments, originally proposed by Lorünser, Ramache, and Valbusa~\cite{lorunser25}, build upon the original Naor bit commitment~\cite{naor91} and efficiently generalize it for arbitrary $k$-bit string commitments without the need of error correcting codes. A description of the commitment protocol is shown in Fig.~\ref{fig:comprotocol}, whose correctness, binding, and hiding properties, have been proven in~\cite{lorunser25}. Instead, we will thus limit ourselves to prove that the LRV commitment protocol UC-emulates the commitment functionality $\mathcal{F}_{\text{COM}}$ when the hash function is modeled as an oracle $\mathcal{F}_{\text{RO}}$ which computes a random function. 

\begin{figure}[H]
\framebox{\parbox{\dimexpr\linewidth-2\fboxsep-2\fboxrule}{
  {\centering \textbf{LRV25 String commitment protocol} \par}
  \textbf{Parameters:} \begin{itemize}
    \item Parties Alice (Verifier) and Bob (Prover)
    \item Security parameter $k$ and message length $n$
    \item Collision-resistant hash function $H: \bs^{k} \rightarrow \bs^{3k + n}$
    \item A subroutine $O$, which on input a vector $\mathbf{r}_1 \in \bs^{3k + n}$, outputs a tuple of $n$ linearly independent vectors $(\mathbf{r}_1, \ldots, \mathbf{r}_n)$ in $\bs^{3k + n}$
  \end{itemize}
  \textbf{Inputs:} \begin{itemize}
    \item Bob receives the $n$-bit string $\mathbf{b}= (b_1, \ldots, b_n)$
  \end{itemize}
  \textit{(Commit phase)}
  \begin{enumerate}
      \item Alice uniformly samples a $(3k + n)$-bit string $\mathbf{r}_1$ and sends it to Bob
      \item Bob uniformly samples an $n$-bit string $\mathbf{x}$ and computes $(\mathbf{r}_1, \ldots, \mathbf{r}_n) = O(\mathbf{r}_1)$. Then, he computes $\mathbf{c} = H(\mathbf{x}) \oplus \sum_{i=1}^{n} b_i\cdot \mathbf{r}_i$ and sends $\mathbf{c}$ to Alice
  \end{enumerate}
  \textit{(Open phase)}
  \begin{enumerate}
      \item Bob sends $(\mathbf{b}, \mathbf{x})$ to Alice
      \item Alice computes $(\mathbf{r}_1, \ldots, \mathbf{r}_n) = O(\mathbf{r}_1)$, then, if $\mathbf{c} = H(\mathbf{x}) \oplus \sum_{i=1}^{n} b_i\cdot \mathbf{r}_i$ outputs $\mathbf{b}$.
  \end{enumerate}
  }}
  \caption{Weakly-interactive string commitment scheme based on hash functions}
  \label{fig:comprotocol}
\end{figure}

\begin{figure}[H]
\framebox{\parbox{\dimexpr\linewidth-2\fboxsep-2\fboxrule}{
  {\centering \textbf{Functionality} $\mathcal{F}_{\text{COM}}$ \par}
  \textbf{Parameters:} \begin{itemize}
      \item Parties Alice (Verifier) and Bob (Prover)
      \item Message length $n$
  \end{itemize}
  \begin{enumerate}
      \item Upon receiving an input $(\texttt{commit}, sid, \mathbf{b})$ from Bob, if no value has previously been committed, output the message $(\texttt{committed}, sid)$ to Alice
      \item Upon receiving the input $(\texttt{open}, sid)$ from Bob, if a value $\mathbf{b}$ has previously been committed, output the message $(\texttt{open}, sid, \mathbf{b})$ to Alice
  \end{enumerate}
  }}
  \caption{Commitment ideal functionality}
  \label{fig:comfunc}
\end{figure}

Let $\Pi_A$ and $\Pi_B$ represent the programs for the Verifier and Prover, respectively, as shown in Fig.~\ref{fig:comprotocol}. Note that, for simplicity, the external inputs that trigger the start and end of the Commit and Reveal phases have been omitted from Fig.~\ref{fig:comprotocol}; without loss of generality, we can consider them to take the form of the respective inputs and outputs as shown in the $\mathcal{F}_{\text{COM}}$ functionality. More specifically, the Commit phase starts when $\Pi_B$ receives the input $(\texttt{commit}, sid, \mathbf{b})$ and ends when $\Pi_A$ outputs $(\texttt{committed}, sid)$, etc.
\begin{figure}[b!]
\centering
\includegraphics[width=0.85\linewidth]{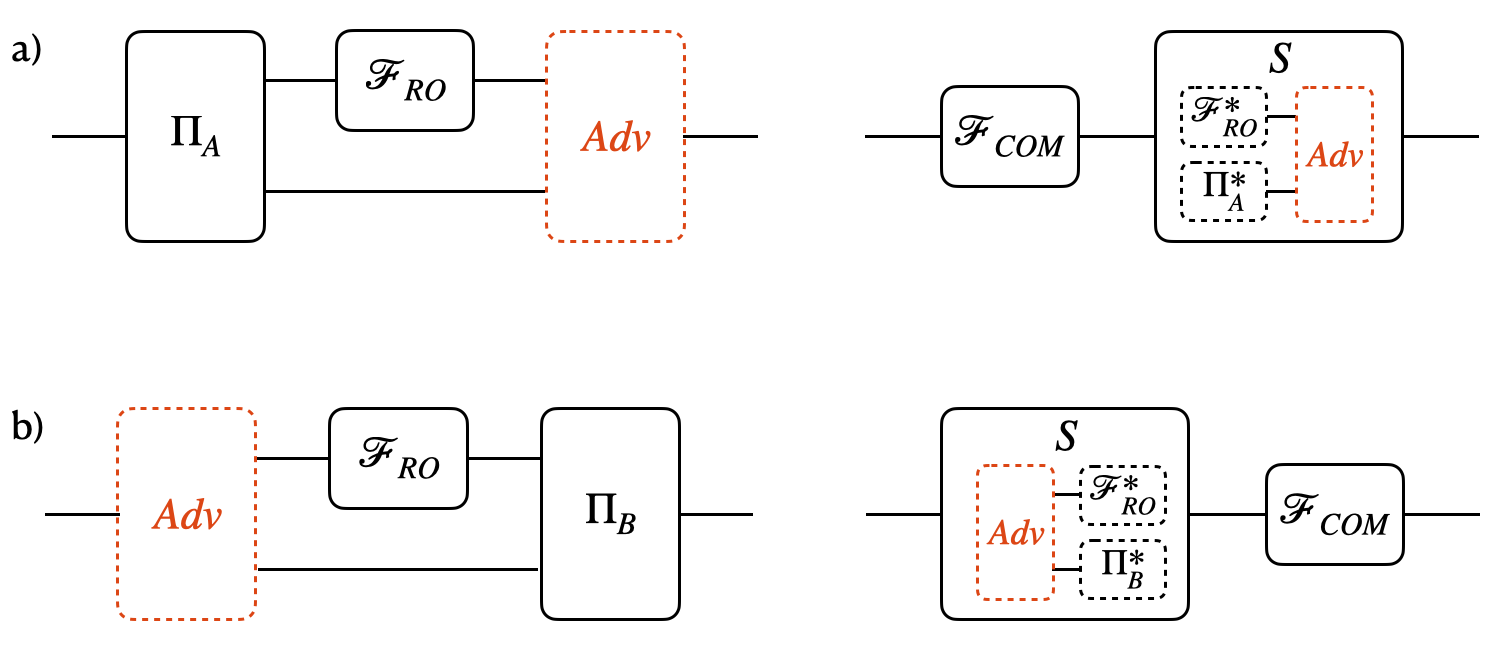}
\caption{Box diagrams for the execution of the protocol for a) dishonest Bob and b) dishonest Alice. The left sides represent the \textit{real world} protocol interacting with an adversary $Adv$ while the right sides represent the \textit{ideal world} functionality interacting with the respective simulator $\mathcal{S}$}
\label{fig:COMSecurityROM}
\end{figure} 
We proceed now to separate the security in two cases, in which the adversary controls Alice or Bob, respectively, as shown in Fig.~\ref{fig:COMSecurityROM}. In order to prove security we must show that for any efficient (i.e., polynomial-time) adversary $Adv$ with classical access to the oracle there exists a respective simulator $\mathcal{S}$ such that for any environment, which is able to send and receive inputs/outputs through the loose wires in the right and left of the diagrams, the \textit{real world} and \textit{ideal world} scenarios are indistinguishable. Denote by $H$ the function that the random oracle computes.

\subsubsection*{Dishonest Bob:}
We construct the simulator in terms of the following subprograms:
\begin{itemize}
    \item $\mathcal{F}_{\text{RO}}^*$: The same as $\mathcal{F}_{\text{RO}}$, except that it saves a list $L$ of all the queries that have been made to the internal memory of $\mathcal{S}$.
    \item $\Pi_A^*$: The same as $\Pi_A$, except that after receiving $c$ from $Adv$ it runs through the current list $L$ of queries. When it finds an $\mathbf{x}'\in L$ and $\mathbf{b} \in \bs^n$ such that
    \begin{equation}
        \label{eq:extractcondition}
        \mathbf{c} =G(\mathbf{x}') \oplus \sum_{i=1}^{n} b_i\cdot \mathbf{r}_i,
    \end{equation}
    it sends $(\texttt{commit}, sid, \mathbf{b})$ to $\mathcal{F}_{\text{COM}}$. If no pair $(\mathbf{b},\mathbf{x}')$ is found, it samples uniformly a value $\mathbf{b}$ and sends $(\texttt{commit}, sid, \mathbf{b})$ to $\mathcal{F}_{\text{COM}}$. In the reveal phase, if the check is passed, it sends $(\texttt{open}, sid)$ to $\mathcal{F}_{\text{COM}}$.
\end{itemize} 
Because of the binding property of the commitment protocol, the simulator may find at most one pair $(\mathbf{b},\mathbf{x}')$ satisfying Eq.~\eqref{eq:extractcondition} when looking through the list, except with negligible probability (this is because the probability of there existing more than one valid openings for a given value of $\mathbf{c}$ is negligible). This allows $\mathcal{S}$ to correctly extract the committed value from $\mathbf{c}$ and commit it to $\mathcal{F}_{\text{COM}}$. Note that in the case no valid opening is found from $L$, the simulator commits a random value to $\mathcal{F}_{\text{COM}}$. If the adversary is able to provide a valid opening pair $(\mathbf{b},\mathbf{x})$ in the Reveal phase the two scenarios could be distinguished. However, from the preimage resistance of random oracles, an efficient adversary cannot find a valid opening from a value of $\mathbf{c}$ without having obtained it by querying the oracle, meaning that regardless of $\mathcal{S}$ committing a random value to $\mathcal{F}_{\text{COM}}$, the probability of it being opened is negligible. 

\subsubsection*{Dishonest Alice:}
Similarly, we construct the simulator in terms of the following subprograms:
\begin{itemize}
    \item $\mathcal{F}_{\text{RO}}^*$: The same as for the dishonest Bob case, except it may be reprogrammed on individual query-output pairs.
    \item $\Pi_B^*$: The same as $\Pi_B$, except upon receiving an input of the form $(\texttt{committed}, sid)$ from $\mathcal{F}_{\text{COM}}$, it samples uniformly the value $\mathbf{c}'$ and sends it to $Adv$. In the Reveal phase, upon receiving $(\texttt{open}, sid, \mathbf{b})$ from $\mathcal{F}_{\text{COM}}$, samples a random $\mathbf{x}'$ not in $L$, sets $\mathcal{F}_{\text{RO}}^*$ so that
    \begin{equation}
    \label{eq:equivocatecondition}
        G(\mathbf{x}')= \mathbf{c}'\oplus\sum_{i=1}^{n} b_i\cdot \mathbf{r}_i,
    \end{equation}
    and sends $(\mathbf{b},\mathbf{x}')$ to $Adv$.
\end{itemize} 
From the hiding property of the commitment protocol, the value $\mathbf{c}$ received by Alice during the Commit phase does not give a significant advantage to an efficient adversary in finding the committed value $\mathbf{b}$ as compared to a random string. Because of this, an efficient adversary cannot distinguish if the randomly sampled $\mathbf{c}'$ corresponds to any possible committed value, except with negligible probability. During the reveal phase, the reprogramming of the oracle according to Eq.~\eqref{eq:equivocatecondition} guarantees that have $\mathbf{c}'$ will be consistent with the committed values from $\mathcal{F}_{\text{COM}}$. The only difference between the real and ideal scenarios is the change in the behavior of the oracle. Because the value $\mathbf{c}'$ was sampled uniformly, the associated outcome $G(\mathbf{x}')$ as defined by Eq.~\eqref{eq:equivocatecondition} is also uniformly distributed and independent on the rest of the values $G(\mathbf{x} \neq \mathbf{x}')$, resulting in both scenarios being consistent with the oracle computing a random function, and therefore indistinguishable from each other.

\end{appendices}

\bibliographystyle{unsrt}
\bibliography{bibfile}

\end{document}